      \newif\iffull\fulltrue  %
      \newif\ifdraft\drafttrue %
            \def\Blue{\color{MidnightBlue}}
            \def\Black{\color{black}}
            \newcommand{\displaycomment}[1]{ #1 {}}
            \def\Blue{}\def\Black{}
            \newcommand{\eatspace}[1]{#1}
            \newcommand{\displaycomment}[1]{\eatspace}
      \newenvironment{FULL}{\ifdraft\Blue\fi}{\ifdraft\Black\fi}
      \theoremstyle{plain}
      \newtheorem{theorem}{Theorem}
      \newtheorem{lemma}{Lemma}
      \newtheorem{definition}{Definition}
      \theoremstyle{nonumberplain}
      \let\oldset=\Set
      \renewcommand{\Set}[1]{\oldset{\!\!#1\!\!}}
      \definecolor{textboxcolor}{rgb}{0.8,0.8,1}
      \colorlet{feedcolor}{blue!50}
      \colorlet{attackercolor}{orange}
      \colorlet{metacolor}{blue!90}
      \colorlet{netcolor}{black}
      \colorlet{problemcolor}{red}
      \colorlet{bgcolor}{white}
      \tikzset{
            thin/.style = {line width = .04em},
            thick/.style = {line width = .08em},
            very thick/.style = {line width = .10em}, %
            figure/.style = {
                  every edge/.append style = {shorten > = .15em, shorten < = .15em,},
                  every node/.style = {draw, inner sep = .3em},
                  entity/.style = {circle, minimum size = 2.5em},
                  very thick,
                  x = \textwidth, y = \textwidth
            },
            chained/.style = {
                  start chain, node distance = #1, every node/.append style = {on chain},
            },
            chained/.default = 3em,
            textbox/.style = {fill = textboxcolor, draw, thin},
            feed/.style = {fill = feedcolor, thin, shape = single arrow, shape border rotate = 270},
            dot/.style = {circle, fill, minimum size = .5em, inner sep = 0pt, label distance = .3em}
      }
      \lstdefinelanguage{cryptoC}[]{C}{morekeywords = {test, load_buf, apply, store_buf, symL, symN}}      
      \lstdefinelanguage{cvm}[]{}{comment=[l]{//}, commentstyle = {\small\itshape}, fontadjust}
      \lstdefinelanguage{iml}[]{caml}{columns = fullflexible, keepspaces = true, morekeywords = {out, event, suchthat, att}}
      \lstdefinelanguage{trace}{basicstyle = \ttfamily, columns = fullflexible, keepspaces = true}
      \newcommand{\code}[1][]{\lstinline[#1, basicstyle = \ttfamily]}
      \newcommand{\mcode}[1]{\hbox{\code!#1!}}
      \let\singlebar=|
      \newcommand{\cvm}[1][]{\lstinline[language = cvm, basicstyle = {\ttfamily\spaceskip=.4em}, #1]}
      \newcommand{\mcvm}[1]{\hbox{\cvm!#1!}}
      \newcommand{\cryptoc}[1][]{\lstinline[#1, language = cryptoC, basicstyle = \ttfamily]}
      \newcommand{\mcryptoc}[1]{\hbox{\cryptoc!#1!}}
      \newcommand{\iml}[1][]{\lstinline[#1, language = iml]}
      \newcommand{\miml}[2][]{\hbox{\iml[#1]!#2!}}
      \crefname{figure}{fig.}{figures}
      \crefname{equation}{equation}{equations}
      \newcommand{\eqlabel}[1]{\refstepcounter{equation}\label{#1}}
      \renewcommand{\eqref}{\labelcref}
      \newenvironment{restate}[1]%
      {\begin{trivlist}\item[]{\normalsize\bf Restatement of #1}\hspace*{4mm}\it}%
      {\end{trivlist}}
      \def\clap#1{\hbox to 0pt{\hss#1\hss}}
      \def\mathrlap{\mathpalette\mathrlapinternal}
      \def\mathrlapinternal#1#2{%
      \rlap{$\mathsurround=0pt#1{#2}$}}
\newcommand{\opname}[1]{\ensuremath{\operatorname{#1}}}
\newcommand{\sA}{\mathcal{A}}
\newcommand{\sM}{\mathcal{M}}
\newcommand{\sP}{\mathcal{P}}
\newcommand{\sS}{\mathcal{S}}
\newcommand{\bC}{\mathbf{C}}
\newcommand{\bD}{\mathbf{D}}
\newcommand{\N}{\mathbb{N}}
\newcommand{\R}{\mathbb{R}}
      \newcommand{\lbars}[1]{\left\singlebar #1 \right\singlebar}
      \newcommand{\concat}{\singlebar}
      \newcommand{\dom}{\opname{dom}}
      \newcommand{\BS}{\textit{BS}}
      \newcommand{\Var}{\textit{Var}}
      \newcommand{\range}[2]{\left\{#1\right\}_{#2}}
      \newcommand{\rangesmash}[2]{\{#1\}_{#2}}
      \newcommand{\segment}[1]{#1}
      \newcommand{\sem}[1]{\llbracket #1\rrbracket}
      \newcommand{\emptybs}{\varepsilon}
      \newcommand{\pto}{\rightharpoonup} %
      \newcommand{\cvmP}{P}
      \newcommand{\imlP}{P}
      \newcommand{\imlQ}{Q}
      \newcommand{\hole}{[]}
      \newcommand{\exec}{\opname{Exec}}
      \newcommand{\ptsevents}{\opname{Events}}
      \newcommand{\insec}{\opname{insec}}
      \newcommand{\prob}[1]{\opname{Pr}[#1]}
      \newcommand{\PS}{\sP}
      \newcommand{\CVM}{\textit{CVM}}
      \newcommand{\bs}{\opname{bs}}
      \newcommand{\val}{\opname{val}}
      \newcommand{\Ops}{\mathbf{Ops}}
      \newcommand{\opfun}{A}
      \newcommand{\arity}{\opname{ar}}
      \newcommand{\addrspace}{\textit{Addr}} %
      \newcommand{\addr}{\opname{addr}}
      \newcommand{\var}{\opname{var}}
      \newcommand{\IML}{\textit{IML}}
      \newcommand{\IExp}{\textit{IExp}}
      \newcommand{\len}{\opname{len}}
      \newcommand{\conc}{\opname{conc}}
      \newcommand{\sub}{\opname{sub}}
      \newcommand{\SExp}{\textit{SExp}}
      \newcommand{\PBase}{\textit{PBase}}
      \newcommand{\ptr}{\opname{ptr}}
      \newcommand{\stack}{\opname{stack}}
      \newcommand{\heap}{\opname{heap}}
      \newcommand{\bop}{_b}
      \newcommand{\nop}{_\N}
      \newcommand{\memc}[1][]{\ensuremath{{\sM^c_{#1}}}}
      \newcommand{\stackc}[1][]{\ensuremath{{\sS^c_{#1}}}}
      \newcommand{\mems}[1][]{\ensuremath{{\sM^s_{#1}}}}
      \newcommand{\stacks}[1][]{\ensuremath{{\sS^s_{#1}}}}
      \newcommand{\memsc}[1][]{\ensuremath{{\sM^{sc}_{#1}}}}
      \newcommand{\stacksc}[1][]{\ensuremath{{\sS^{sc}_{#1}}}}
      \newcommand{\allocc}[1][]{{\sA^c_{#1}}}
      \newcommand{\allocs}[1][]{{\sA^s_{#1}}}
      \newcommand{\allocsc}{{\sA^{sc}}}
      \newcommand{\facts}{\Sigma}
      \newcommand{\simplify}{\opname{simplify}}
      \newcommand{\getLen}{\opname{getLen}}
      \newcommand{\apply}{\opname{apply}}
      \newcommand{\entails}{\vdash}
      \newcommand{\PExp}{\textit{PExp}}
\definecolor{dkblue}{rgb}{0,0.1,0.5}
\definecolor{dkgreen}{rgb}{0,0.4,0}
\definecolor{dkred}{rgb}{0.6,0,0}
\definecolor{linkColor}{rgb}{0,0.1,0.5}
\newcommand{\status}[1]{\ifdraft[#1]\fi}
\title{Extracting and Verifying Cryptographic Models\\ from C Protocol Code by Symbolic Execution}
\author{
Mihhail Aizatulin\\
\affaddr{The Open University}
\and
Andrew~D.~Gordon \\
\affaddr{Microsoft Research}
\and
Jan~J{\"{u}}rjens \\
\affaddr{TU Dortmund \& Fraunhofer ISST}
}
\begin{document}

\maketitle

\begin{abstract}
Consider the problem of verifying security properties of a cryptographic protocol coded in C.
We propose an automatic solution that needs neither a pre-existing protocol description nor manual annotation of source code.
First, symbolically execute the C program to obtain symbolic descriptions for the network messages sent by the protocol.
Second, apply algebraic rewriting to obtain a process calculus description.
Third, run an existing protocol analyser (ProVerif) to prove security properties or find attacks.
We formalise our algorithm and appeal to existing results for ProVerif to establish computational soundness under suitable circumstances.
We analyse only a single execution path, so our results are limited to protocols with no significant branching.
The results in this paper provide the first computationally sound verification of weak secrecy and authentication for (single execution paths of) C code.
\end{abstract}

\section{Introduction \status{good, 06.05.2011}}

Recent years have seen great progress in formal verification of cryptographic protocols, as illustrated by powerful tools like ProVerif \cite{ProVerif}, CryptoVerif \cite{DBLP:conf/sp/Blanchet06} or AVISPA \cite{AVISPA}. There remains, however, a large gap between what we verify (formal descriptions of protocols, say, in the pi calculus) and what we rely on (protocol implementations, often in low-level languages like C). The need to start the verification from C code has been recognised before and implemented in tools like CSur \cite{CSur} and ASPIER \cite{ASPIER}, but the methods proposed there are still rather limited. Consider, for example, the small piece of C code in \cref{fig:example} that checks whether a message received from the network matches a message authentication code. Intuitively, if the key is honestly chosen and kept secret from the attacker then with overwhelming probability the event will be triggered only if another honest participant (with access to the key) generated the message. Unfortunately, previous approaches cannot prove this property: the analysis of CSur is too coarse to deal with authentication properties like this and ASPIER cannot directly deal with code manipulating memory through pointers. Furthermore the previous works do not offer a definition of security directly for C code, i.e. they do not formally state what it means for a C program to satisfy a security property, which makes it difficult to evaluate their overall soundness. The goal of our work is to improve upon this situation by giving a formal definition of security straight for C code and proposing a method that can verify secrecy and authentication for typical memory-manipulating implementations like the one in \cref{fig:example} in a fully automatic and scalable manner, without relying on a pre-existing protocol specification.  

\begin{figure}
\small
      \begin{lstlisting}[language = cryptoC, gobble = 12]
            void * key; size_t keylen;
            readenv("k", &key, &keylen);
            size_t len;
            read(&len, sizeof(len));
            if(len > 1000) exit();
            void * buf = malloc(len + 2 * MAC_LEN);
            read(buf, len);
            mac(buf, len, key, keylen, buf + len);
            read(buf + len + MAC_LEN, MAC_LEN);
            if(memcmp(buf + len, 
                      buf + len + MAC_LEN, 
                      MAC_LEN) == 0)
              event("accept", buf, len);
      \end{lstlisting}
      \hrule
      \begin{lstlisting}[language = iml, gobble = 12]
            in($x_1$); in($x_2$); if $x_2 = mac(k, x_1)$ then event $accept(x_1)$
      \end{lstlisting}
\vspace{-3ex}
      \caption{An example C fragment together with the extracted model.}
\vspace{-2ex}
      \label{fig:example}
\end{figure}

Our method proceeds by extracting a high-level model from the C code that can then be verified using existing tools (we use ProVerif in our work). Currently we restrict our analysis to code in which all network outputs happen on a single execution path, but otherwise we do not require use of any specific programming style, with the aim of applying our methods to legacy implementations. In particular, we do not assume memory safety, but instead explicitly verify it during model extraction. The method still assumes that the cryptographic primitives such as encryption or hashing are implemented correctly---verification of these is difficult even when done manually \cite{ABP09}.

The two main contributions of our work are: 
\begin{itemize*}
      \item 
            formal definition of security properties for source code;
 
      \item
            an algorithm that computes a high-level model of the protocol implemented by a C program. 
\end{itemize*}

We implement and evaluate the algorithm as well as give a proof of its soundness with respect to our security definition.
Our definition of security for source code is given by linking the semantics of a programming language, expressed as a transition system, to a computational security definition in the spirit of \cite{BM84,GM84,Yao82}. We allow an arbitrary number of sessions. We restrict our definition to trace properties (such as weak secrecy or authentication), but do not consider observational equivalence (for strong secrecy, say).

Due to the complexity of the C language we give the formal semantics for a simple assembler-like language into which C code can be easily compiled, as in other symbolic execution approaches such as \cite{CM11}. The soundness of this step can be obtained by using well-known methods, as outlined in \cref{c-to-cvm}.

Our model-extraction algorithm produces a model in an intermediate language without memory access or destructive updates, while still preserving our security definition. The algorithm is based on symbolic execution \cite{Kin76} of the C program, using symbolic expressions to over-approximate the sets of values that may be stored in memory during concrete execution. The main difference from existing symbolic execution algorithms (such as \cite{KLEE} or \cite{SAGE}) is that our variables represent bitstrings of potentially unknown length, whereas in previous algorithms a single variable corresponds to a single byte.

We show how the extracted models can be further simplified into the form understood by ProVerif.  We apply the computational soundness result from \cite{CoSP} to obtain conditions where the symbolic security definition checked by ProVerif corresponds to our computational security definition. Combined with the security-preserving property of the model extraction algorithm this provides a computationally sound verification of weak secrecy and authentication for C.

\paragraph*{Outline of our Method}
The verification proceeds in several steps, as outlined in \cref{fig:outline}. The method takes as input:
\begin{itemize*}
      \item 
            the C implementations of the protocol participants, containing calls to a special function \cryptoc{event} as in \cref{fig:example},
            
      \item
            an environment process (in the modelling language) which spawns the participants, distributes keys, etc., 
            
      \item
            symbolic models of cryptographic functions used by the implementation,

      \item
            a property that event traces in the execution are supposed to satisfy with overwhelming probability.
            
\end{itemize*}

We start by compiling the program down to a simple stack-based instruction language (CVM) using CIL~\cite{CIL} to parse and simplify the C input. 
The syntax and semantics of CVM are presented in \cref{cvm} and the translation from C to CVM is informally described in \cref{c-to-cvm}. 

In the next step we symbolically execute CVM programs to eliminate memory accesses and destructive updates, thus obtaining an equivalent program in an intermediate model language (IML)---a version of the applied pi calculus extended with bitstring manipulation primitives. For each allocated memory area the symbolic execution stores an expression describing how the contents of the memory area have been computed. For instance a certain memory area might be associated with an expression $hmac(01\concat x, k)$, where $x$ is known to originate from the network, $k$ is known to be an environment variable, and $\concat$ denotes concatenation. The symbolic execution does not enter the functions that implement the cryptographic primitives, it uses the provided symbolic models instead. These models thus form the trusted base of the verification. An example of the symbolic execution output is shown at the bottom of \cref{fig:example}. We define the syntax and semantics of IML in \cref{iml} and describe the symbolic execution in \cref{cvm-to-iml}. 

\begin{figure}
      \begin{center}
      \begin{tikzpicture}
      [
            figure,
            feed/.append style = {minimum height = 1.7em, xscale = 1.5},
            node distance = .6em
      ]
            \node[textbox, anchor = west] (T-C) {C source};
            \node[feed, below left = 1.7em and -2.2em of T-C.west] (A-C) {};
            \node[textbox, below = 4.4em of T-C.west, anchor = west] (T-CVM) {C virtual machine (CVM)};
            \node[feed, below left = 1.7em and -2.2em of T-CVM.west] (A-CVM) {};
            \node[textbox, below = 4.4em of T-CVM.west, anchor = west] (T2) {Intermediate model language (IML)};
            \node[feed, below left = 1.7em and -2.2em of T2.west] (A2) {};
            \node[textbox, below = 4.4em of T2.west, anchor = west] (T3) {Applied pi};
            \node[feed, below left = 1.7em and -2.2em of T3.west] (A3) {};
            \node[textbox, below = 4.4em of T3.west, anchor = west] (T4) {Verification Result};

            \node[right = 1em of A-C, draw = none] {CIL};
            \node[right = 1em of A-CVM, draw = none] {Symbolic execution};
            \node[right = 1em of A2, draw = none] {Message format abstraction};
            \node[right = 1em of A3, draw = none] {ProVerif + computational soundness};
      \end{tikzpicture}
      \end{center}
\vspace{-3ex}
      \caption{An outline of the method}
\vspace{-3ex}
      \label{fig:outline}
\end{figure}
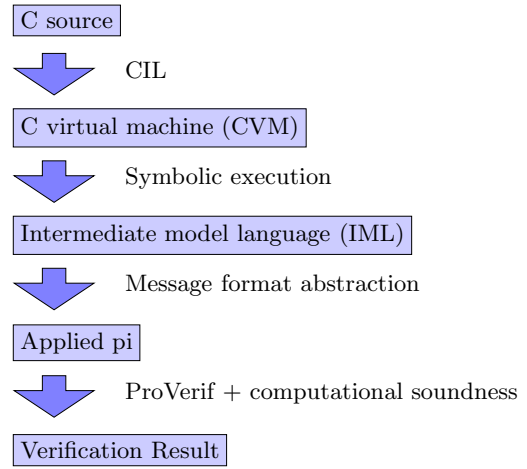

Our definition of security for source code is given in \cref{security}. The definition is generic in that it does not assume a particular programming language. We simply require that the semantics of a language is given as a set of transitions of a certain form, and define a computational execution of the resulting transition system in the presence of an attacker and the corresponding notion of security. This allows one to apply the same security definition to protocols expressed both in the low-level implementation language and in the high-level model-description language, and to formulate a correspondence between the two. 

Given that the transition systems generated by different languages are required to be of the same form, we can mix them in the same execution. This allows us to use CVM to specify a single executing participant, but at the same time use IML to describe an environment process that spawns multiple participants and allows them to interact. In particular, CVM need not be concerned with concurrency, thus making symbolic execution easier. Given an environment process $P_E$ with $n$ holes, we write $P_E[P_1, \ldots, P_n]$ for a process where the $i$th hole is filled with $P_i$, which can be either a CVM or an IML process. The soundness result for symbolic execution (\cref{symex}) states that if $P_1, \ldots, P_n$ are CVM processes and $\tilde P_1, \ldots, \tilde P_n$ are IML models resulting from their symbolic execution then for any environment process $P_E$ the security of $P_E[\tilde P_1, \ldots, \tilde P_n]$ with respect to a trace property $\rho$ relates to the security of $P_E[P_1, \ldots, P_n]$ with respect to $\rho$.

To verify the security of an IML process, we replace its bitstring-manipulating expressions by applications of constructor and destructor functions, thus obtaining a process in the applied pi-calculus (the version proposed in \cite{BAF08} and augmented with events). We can then apply a computational soundness result, such as the one from \cite{CoSP}, to specify conditions under which such a substitution is computationally sound: if the resulting pi calculus process is secure in a symbolic model (as can be checked by ProVerif) then it is asymptotically secure with respect to our computational notion of security. The correctness of translation from IML to pi is captured by \cref{transsound} and the computational soundness for resulting pi processes is captured by \cref{compsound}.  The verification of IML
(and these two theorems in particular) is described in \cref{iml-verification}.

\paragraph*{Theoretical and Practical Evaluation}

\Cref{symex,transsound,compsound} establish the correctness of our approach. In a nutshell, their significance is as follows: given implementations $P_1, \ldots, P_n$ of protocol participants in CVM, which are automatically obtained from the corresponding C code, and an IML process $P_E$ that describes an execution environment, if $P_1, \ldots, P_n$ are successfully symbolically executed with resulting models $\tilde P_1, \ldots, \tilde P_n$, the IML process $P_E[\tilde P_1, \ldots, \tilde P_n]$ is successfully translated to a pi process $P_\pi$, and ProVerif successfully verifies $P_\pi$ against a trace property $\rho$ then $P_1, \ldots, P_n$ form a secure protocol implementation with respect to the environment $P_E$ and property $\rho$. 

We are aiming to apply our method to large legacy code bases like OpenSSL. As a step towards this goal we evaluated it on a range of protocol implementations, including recent code for smart electricity meters \cite{RD10}. We were able to find bugs in preexisting implementations or to verify them without having to modify the code. \Cref{implementation} provides details.

The current restriction of analysis to a single execution path may seem prohibitive at first sight.
In fact, a great majority of protocols (such as those in the extensive SPORE repository~\cite{spore07})
follow a fixed narration of messages between participants, where any deviation from the expected message leads to termination.
For such protocols, our method allows us to capture and analyse the fixed narration directly from the C code.
In the future we plan to extend the analysis to more sophisticated control flow.

\paragraph*{Related Work}
We mention particularly relevant works here and provide a broader survey in \cref{related}. One of the first attempts at cryptographic verification of C code is contained in \cite{CSur}, where a C program is used to generate a set of Horn clauses that are then solved using a theorem prover. The method is implemented in the tool CSur. We improve upon CSur in two ways in particular.

First, we have an explicit attacker model with a standard computational attacker. The attacker in CSur is essentially symbolic---it is allowed to apply cryptographic operations, but cannot perform any arithmetic computations.

Second, we handle authentication properties in addition to secrecy properties. Adding authentication to CSur would be non-trivial, due to a rather coarse over-approximation of C code. For instance, the order of instructions in CSur is ignored, and writing a single byte into an array with unknown length is treated the same as overwriting the whole array. Authentication, however, crucially depends on the order of events in the execution trace as well as making sure that the authenticity of a whole message is preserved and not only of a single byte of it.

ASPIER \cite{ASPIER} uses model checking to verify implementations of cryptographic protocols. The model checking operates on a protocol description language, which is rather more abstract than C; for instance, it does not contain pointers and cannot express variable message lengths. The translation from C to the protocol language is not described in the paper. Our method applies directly to C code with pointers, so that we expect it to provide much greater automation.

Corin and Manzano \cite{CM11} report an extension of the KLEE test-generation tool \cite{KLEE} that allows KLEE to be applied to cryptographic protocol implementations (but not to extract models, as in our work). They do not extend the class of properties that KLEE is able to test for; in particular, testing for trace properties is not yet supported. Similarly to our work, KLEE is based on symbolic execution; the main difference is that \cite{CM11} treats every byte in a memory buffer separately and thus only supports buffers of fixed length.

\begin{FULL}
An appendix includes proofs for all the results stated in this paper. 
\end{FULL}

\begin{SHORT}
Finally, an online technical report includes more details and proofs of all results stated in this paper \cite{AGJ11long}.
\end{SHORT}

\section{C Virtual Machine (CVM) \status{good, 06.05.2011}}\label{cvm}

This section describes our low-level source language CVM (C Virtual Machine). The language is simple enough to formalise, while at the same time the operations of CVM are closely aligned with the operations performed by C programs, so that it is easy to translate from C to CVM. We shall describe such a translation informally in \cref{c-to-cvm}. 

The model of execution of CVM is a stack-based machine with random memory access. All operations with values are performed on the stack, and values can be loaded from memory and stored back to memory. The language contains primitive operations that are necessary for implementing security protocols: reading values from the network or the execution environment, choosing random values, writing values to the network and signalling events. The only kind of conditional that CVM supports is a testing operation that checks a boolean condition and aborts execution immediately if it is not satisfied.

The fact that CVM permits no looping or recursion in the program allows us to inline all function calls, so that we do not need to add a call operation to the language itself. For simplicity of presentation we omit some aspects of the C language that are not essential for describing the approach, such as global variable initialisation and structures. We also restrict program variables to all be of the same size: for the rest of the paper we choose a fixed but arbitrary $N \in \N$ and assume $\mcryptoc{sizeof($v$)} = N$ for all program variables $v$. Our implementation does not have these restrictions and deals with the full C language.

Let $BS = \{0, 1\}^*$ be the set of finite bitstrings with the empty bitstring denoted by $\emptybs$. For a bitstring $b$ let $\lbars{b}$ be the length of $b$ in bits. Let $\Var$ be a countably infinite set of variables. We write $f \colon X \pto Y$ to denote a partial function and let $\dom(f) \subseteq X$ be the set of $x$ for which $f(x)$ is defined. We write $f(x) = \bot$ when $f$ is not defined on $x$ and use the notation $f\{x \mapsto a\}$ to update functions.

Let $\Ops$ be a finite set of operation symbols such that each $op \in \Ops$ has an associated arity $\arity(op)$ and an efficiently computable partial function $\opfun_{op} \colon \BS^{\arity(op)} \pto \BS$. The set $\Ops$ is meant to contain both the primitive operations of the language (such as the arithmetic or comparison operators of C) and the cryptographic primitives that are used by the implementation.
The security definitions of this paper (given later) assume an arbitrary security parameter.
Since real-life cryptographic protocols are typically designed and implemented for a fixed value of the security parameter,
for the rest of the paper we let $k_0 \in \N$ be the security parameter with respect to which the operations in $\Ops$ are chosen.

\begin{figure}
\small
\vspace{-1ex}
\begin{align*}
      \mathrlap{b \in \BS,\, v \in \Var,\, op \in \Ops}\quad\;\; \\
      src & ::= \mcode{read} \mid \mcode{rnd} \hspace{2\bigskipamount} && \hspace{-\bigskipamount}\text{input source} \\
      dest & ::= \mcode{write} \mid \mcode{event} \hspace{2\bigskipamount} && \hspace{-\bigskipamount}\text{output destination} \\
      instr & ::= && \hspace{-\bigskipamount}\text{instruction} \\
            & \mcode{Const}\ b && \text{constant value}\\
            & \mcode{Ref}\ v && \text{pointer to variable} \\
            & \mcode{Malloc} && \text{pointer to fresh memory} \\
            & \mcode{Load} && \text{load from memory} \\
            & \mcode{In}\ v\ src && \text{input} \\
            & \mcode{Env}\ v && \text{environment variable} \\
            & \mcode{Apply}\ op && \text{operation} \\
            & \mcode{Out}\ dest && \text{output} \\
            & \mcode{Test} && \text{test a condition} \\
            & \mcode{Store} && \text{write to memory} \\
      P \in \CVM & ::= \{ instr \mcode{;} \}^* && \hspace{-\bigskipamount}\text{program}
\end{align*}
\vspace{-5ex}
\caption{The syntax of CVM.}
\vspace{-2ex}
\label{fig:CVM-syntax}
\end{figure}

A CVM program is simply a sequence of instructions, as shown in \cref{fig:CVM-syntax}. To define the semantics of CVM we choose two functions that relate bitstrings to integer values, $\val \colon \BS \to \N$ and $\bs \colon \N \to \BS$ and require that for $n < \segment{2^{N}}$ the value $\bs(n)$ is a bitstring of length $N$ such that $\val(\bs(n)) = n$. We allow $\bs$ to have arbitrary behaviour for larger numbers. The functions $\val$ and $\bs$ encapsulate architecture-specific details of integer representation such as the endianness. Even though these functions capture an unsigned interpretation of bitstrings, we only use them when accessing memory cells and otherwise place no restriction on how the bitstrings are interpreted by the program operations. For instance, the set $\Ops$ can contain both a signed and an unsigned arithmetic and comparison operators. Bitstring representations of integer constants shall be written as $i1,\, i20$, etc, for instance, $i10 = \bs(10)$. 

We let $\addrspace = \{1, \ldots, 2^N - 1\}$ be the set of valid memory addresses. The reason we exclude $0$ is to allow the length of the memory to be represented in $N$ bits. The semantic configurations of CVM are of the form $(\allocc, \memc, \stackc, \cvmP)$, where
\begin{itemize*}
      \item 
            $\memc \colon \addrspace \pto \{0, 1\}$ is a partial function that represents concrete memory and is undefined for uninitialised cells, 

      \item
            $\allocc \subseteq \addrspace$ is the set of allocated memory addresses,

      \item
            $\stackc$ is a list of bitstrings representing the execution stack,

      \item
            $\cvmP \in \CVM$ is the executing program.
\end{itemize*}
Semantic transitions are of the form $(\eta,s) \xrightarrow{l} (\eta', s')$, where $s$ and $s'$ are semantic configurations, $\eta$ and $\eta'$ are environments (mappings from variables to bitstrings) and $l$ is a protocol action such as reading or writing values from the attacker or a random number generator, or raising events.
\begin{FULL}
The formal semantics of CVM is given in \cref{cvm-semantics}, in this section we give an informal overview.
\end{FULL}
\begin{SHORT}
We give an informal overview of the semantics of CVM; the details are in the technical report \cite{AGJ11long}.
\end{SHORT}
Before the program is executed, each referenced variable $v$ is allocated an address $\addr(v)$ in $\memc$ such that all allocations are non-overlapping. If the program contains too many variables to fit in memory, the execution does not proceed. Next, the instructions in the program are executed one by one as described below. For $a, b \in \N$ we define $\range{a}{b} = \{a, \ldots, a + b - 1\}$.
\begin{itemize*}
      \item
            \cvm{Const $b$} places $b$ on the stack.
            
      \item
            \cvm{Ref $v$} places $\bs(\addr(v))$ on the stack.
            
      \item
            \cvm{Malloc} takes a value $s$ from the stack, reads a value $p$ from the attacker, and if the range $\range{\val(p)}{\val(s)}$ does not contain allocated cells, it becomes allocated and the value $p$ is placed on the stack. Thus the attacker gets to choose the beginning of the allocated memory area.
            
      \item
            \cvm{Load} takes values $l$ and $p$ from the stack. In case $\range{\val(p)}{\val(l)}$ is a completely initialised range in memory, the contents of that range are placed on the stack. In case some of the bits are not initialised, the value for those bits is read from the attacker.
            
      \item
            \cvm{In $v$ read} or \cvm{In $v$ rnd} takes a value $l$ from the stack. \cvm{In $v$ read} reads a value of length $\val(l)$ from the attacker and \cvm{In $v$ rnd} requests a random value of length $\val(l)$. The resulting value $b$ is then placed on the stack. The environment $\eta$ is extended by the binding $v \mapsto b$. %
            
      \item
            \cvm{Env $v$} places $\eta(v)$ and $\bs(\lbars{\eta(v)})$ on the stack.
            
      \item
            \cvm{Apply $op$} with $\arity(op) = n$ applies $\opfun_{op}$ to $n$ values on the stack, replacing them by the result.
            
      \item
            \cvm{Out write} sends the top of the stack to the attacker and \cvm{Out event} raises an event with the top of the stack as payload. Events with multiple arguments can be represented using a suitable bitstring pairing operation. Both commands remove the top of the stack.
            
      \item
            \cvm{Test} takes the top of the stack and checks whether it is $i1$. If yes, the execution proceeds, otherwise it stops.
            
      \item
            \cvm{Store} takes values $p$ and $b$ from the stack and writes $b$ into memory at position starting with $\val(p)$.
\end{itemize*}

\begin{FULL}
The execution of a program can get stuck if rule conditions are violated, for instance, when the program runs out of memory or attempts to write to uninitialised memory. All these situations would likely result in a crash in a real system. Our work is not focused on preventing crashes, but rather on analysing the sequences of events that occur before the program terminates (either normally or abnormally). Thus we leave crashes implicit in the semantics. An exception is the instruction \cvm{Load}: reading uninitialised memory is unlikely to result in a crash in reality, instead it can silently return any value. We model this behaviour explicitly in the semantics.
\end{FULL}

\section{From C to CVM \status{good, 06.05.2011}}\label{c-to-cvm}

We describe how to translate from C to CVM programs. We start with aspects of the translation that are particular to our approach, after which we illustrate the translation by applying it to the example program in \cref{fig:example}. 

Proving correctness of C compilation is not the main focus of our work, so we trust compilation for now. To prove correctness formally one would need to show that a CVM translation simulates the original C program; 
\begin{SHORT}
an appropriate notion of simulation is defined in the technical report \cite{AGJ11long} and is used to prove soundness of other verification steps. 
\end{SHORT}
\begin{FULL}
an appropriate notion of simulation is defined in \cref{pts} and is used to prove soundness of other verification steps.
\end{FULL}
We believe that work on proving correctness of the CompCert compiler \cite{Compcert} can be reused in this context.

We require that the C program contains no form of looping or function call cycles %
and that all actions of the program (either network outputs or events) happen in the same path (called \emph{main path} in the following). We then prune all other paths by replacing if-statements on the main path by test statements: a statement %
\cryptoc{if(cond) t_block else f_block}
is replaced by \cryptoc{test(cond); t_block} in case the main path continues in the \cryptoc{t_block}, and by \cryptoc{test(!cond); f_block} otherwise. The test statements are then compiled to CVM \cvm{Test} instructions. The main path can be easily identified by static analysis; for now we simply identify the path to be compiled by observing an execution of the program. 

As mentioned in the introduction, we do not verify the source code of cryptographic functions, but instead trust that they implement the cryptographic algorithms correctly. Similarly, we would not be able to translate the source code of functions like \cryptoc{memcmp} into CVM directly, as these functions contain loops. Thus for the purpose of CVM translation we provide an abstraction for these functions. We do so by writing what we call a \emph{proxy function} \cryptoc{f_proxy} for each function \cryptoc{f} that needs to be abstracted. Whenever a call to \cryptoc{f} is encountered during the translation, it is replaced by the call to \cryptoc{f_proxy}. The proxy functions form the trusted base of the verification.

\begin{figure}
\small
      \begin{lstlisting}[language = cryptoC, gobble = 12]
            void mac_proxy(void * buf, size_t buflen, 
                           void * key, size_t keylen, 
                           void * mac){
              load_buf(buf, buflen);
              load_buf(key, keylen);
              apply("mac", 2);
              store_buf(mac);
            }
           
            int memcmp_proxy(void * a, void * b, 
                             size_t len){
              int ret;
              load_buf(a, len);
              load_buf(b, len);
              apply("cmp", 2);
              store_buf(&ret);
              return ret;
            }
      \end{lstlisting}
      
\vspace{-4ex}
      \caption{Examples of proxy functions.}
\vspace{-2ex}
      \label{fig:proxies}
\end{figure}

Examples of proxy functions are shown in \cref{fig:proxies}. The functions \cryptoc{load_buf}, \cryptoc{apply} and \cryptoc{store_buf} are treated specially by the translation. For instance, assuming an architecture with $N = 32$, a call \cryptoc{load_buf(buf, len)}  directly generates the sequence of instructions:
\begin{lstlisting}[language = cvm, gobble = 6, xleftmargin = 1em]
      Ref buf; Const i32; Load;
      Ref len; Const i32; Load; Load;
\end{lstlisting}
Similarly we provide proxies for all other special functions in the example program, such as \cryptoc{readenv}, \cryptoc{read}, \cryptoc{write} or \cryptoc{event}. The proxies essentially list the CVM instructions that need to be generated. 

\begin{FULL}
\Cref{nsl-proxies} shows more examples of proxy functions.
\end{FULL}
\Cref{c-to-cvm-example} shows the CVM translation of our example C program in \cref{fig:example}.

\section{Intermediate Model Language\status{good, 06.05.2011}}\label{iml}

This section presents the intermediate model language (IML) that we use both to express the models extracted from CVM programs and to describe the environment in which the protocol participants execute. IML borrows most of its structure from the pi calculus \cite{AF01,BAF08}. In addition it has access both to the set $\Ops$ of operations used by CVM programs and to primitive operations on bitstrings: concatenation, substring extraction, and computing lengths of bitstrings. Unlike CVM, IML does not access memory or perform destructive updates.

The syntax of IML is presented in \cref{fig:IML-syntax}. In contrast to the standard pi calculus we do not include channel names, but implicitly use a single public channel instead. This corresponds to our security model in which all communication happens through the attacker. The nonce sampling operation $(\nu x[e])$ takes an expression as a parameter that specifies the length of the nonce to be sampled---this is necessary in the computational setting in order to obtain a probability distribution. We introduce a special abbreviation for programs that choose randomness of length equal to the security parameter $k_0$ introduced in \cref{cvm}: let $(\tilde \nu x);\; P$ stand for \iml{$(\nu \tilde x[k_0])$; let $x = nonce(\tilde x)$ in $P$}, where $nonce \in \Ops$. Using $nonce$ allows us to have tagged nonces, which will be necessary to link to the pi calculus semantics from \cite{CoSP}.

\begin{figure}[t]
\small
\vspace{-2ex}
\begin{align*}
      \mathrlap{b \in \BS,\, x \in \Var,\, op \in \Ops}\quad\quad\quad \\
      e \in \IExp & ::= && \hspace{-\bigskipamount}\text{expression} \\
            & b && \text{concrete bitstring} \\
            & x && \text{variable} \\
            & op(e_1, \ldots, e_n) && \text{computation} \\
            & e_1 \concat e_2 && \text{concatenation} \\
            & e\{e_o, e_l\} && \text{substring extraction} \\
            & \len(e) && \text{length} \\
      \imlP,\, \imlQ \in \IML& ::= && \hspace{-\bigskipamount}\text{process} \\
            & 0 && \text{nil} \\
            & !\imlP && \text{replication} \\
            & \imlP \concat \imlQ && \text{parallel composition} \\
            & \miml{$(\nu x[e])$;\ $\imlP$} && \text{randomness}\\
            & \miml{in$(x)$;\ $\imlP$} && \text{input} \\
            & \miml{out$(e)$;\ $\imlP$} && \text{output} \\
            & \miml{event$(e)$;\ $P$} && \text{event} \\
            & \miml{if\ $e$ then\ $\imlP$ [else\ $\imlQ$]} && \text{conditional} \\
            & \miml{let\ $x = e$ in\ $\imlP$ [else\ $\imlQ$]} && \text{evaluation} 
\end{align*}
\vspace{-5ex}
\caption{The syntax of IML.}
\vspace{-2ex}
\label{fig:IML-syntax}
\end{figure}

\begin{figure}[t]
\small
\begin{align*}
      &\sem{b} = b, \;\text{for $b \in \BS$,}\\
      &\sem{x} = \bot, \;\text{for $x \in \Var$,}\\
      &\sem{op(e_1, \ldots, e_n)} = \opfun_{op} (\sem{e_1}, \ldots, \sem{e_n}), \\
      &\sem{e_1\concat e_2} = \sem{e_1}\concat\sem{e_2}, \\
      &\sem{e\{e_o, e_l\}} = \sub(\sem{e}, \val(\sem{e_o}), \val(\sem{e_l})),\\
      &\sem{\len(e)} = \bs(\lbars{\sem{e}}).
\end{align*}
\vspace{-5ex}
\caption{The evaluation of IML expressions, whereby $\bot$ propagates.}
\vspace{-2ex}
\label{fig:IML-eval}
\end{figure}

For a bitstring $b$ let $b[i]$ be the $i$th bit of $b$ counting from $0$. The concatenation of two bitstrings $b_1$ and $b_2$ is written as $b_1 \concat b_2$. 

Just as for CVM, the semantics of IML is parameterised by functions $\bs$ and $\val$. The semantics of expressions is given by the partial function $\sem{\cdot} \colon \IExp \pto \BS$ described in \cref{fig:IML-eval}. The partial function $\sub\colon \BS \times \N \times \N \pto \BS$ extracts a substring of a given bitstring such that $\sub(b, o, l)$ is the substring of $b$ starting at offset $o$ of length $l$:
            \[\sub(b, o, l) = 
                  \begin{cases}
                        b[o]\ldots b[o + l - 1] & \text{if $o + l \leq \lbars{b}$, } \\
                        \bot & \text{otherwise.}
                  \end{cases}
            \]
For a \emph{valuation} $\eta \colon \Var \pto \BS$ we denote with $\sem{e}_\eta$ the result of substituting all variables $v$ in $e$ by $\eta(v)$ (if defined) and then applying $\sem{\cdot}$.

\begin{FULL}
The formal semantics of IML is mostly straightforward and is shown in detail in \cref{iml-semantics}.
\end{FULL}

\begin{SHORT}
The technical report \cite{AGJ11long} describes the detailed semantics of IML, in our formalism of protocol transition systems, introduced in the next section.
\end{SHORT}

\newpage
\section{Security of Protocols \status{good, 27.05.2011}}\label{security}

\begin{FULL}
This section gives an informal overview of our security definition. 
The complete definition is given in \cref{pts}.
\end{FULL}

To define security for protocols implemented by CVM and IML programs we need to specify what a protocol is and give a mapping from programs to protocols. The notion of a protocol is formally captured by a \emph{protocol transition system (PTS)}, which describes how processes evolve and interact with the attacker. A PTS is a set of transitions of the form 
$(\eta, s) \xrightarrow{l} \{(\eta_1, s_1), \ldots, (\eta_n, s_n)\}$,
where $\eta$ and $\eta_i$ are environments (modelled as valuations), $s$ and $s_i$ are semantic configurations of the underlying programming language, and $l$ is an action label. Actions can include reading values from the attacker, generating random values, sending values to the attacker, or raising events. We call a pair $(\eta, s)$ an \emph{executing process}. Multiple processes on the right hand side capture replication.
\begin{SHORT}
Full details are in the technical report \cite{AGJ11long}.
\end{SHORT}
 
The semantics of CVM and IML are given in terms of the PTS that are implemented by programs. For a CVM program $P$ we denote with $\sem{P}_C$ the PTS that is implemented by $P$. Similarly, for an IML process $P$ the corresponding PTS is denoted by $\sem{P}_I$.

Given a PTS $T$ and a probabilistic machine $E$ (an attacker) we can execute $T$ in the presence of $E$. The state of the executing protocol is essentially a multiset of executing processes. The attacker repeatedly chooses a process from the multiset which is then allowed to perform an action according to $T$. The result of the execution is a sequence of raised events. For a resource bound $t \in \N$ we denote with $\ptsevents(T, E, t)$ the sequence of events raised during the first $t$ steps of the execution. We shall be interested in the probability that this sequence of events belongs to a certain ``safe'' set. This is formally captured by the following definition:
\begin{definition}[Protocol security]\label{secdef}
      We define a \emph{trace property} as a polynomially decidable prefix-closed set of event sequences. For a PTS $T$, a trace property $\rho$ and a resource bound $t \in \N$
      let $\insec(T,\rho,t)$ be the probability
      \[\sup \Set{\prob{\ptsevents(T, E, t) \notin \rho} | E\ \text{attacker},\, \lbars{E} \leq t},\]
      where $\lbars{E}$ measures the size of the description of the attacker. 
\end{definition}

Intuitively $\insec(T, \rho, t)$ measures the success probability of the most successful attack against $T$ and property $\rho$ when both the execution time of the attack and the size of the attacker code are bounded by $t$.

Since the semantics of CVM and IML are in the same formalism, we may combine the sets of semantic rules and obtain semantics $\sem{\cdot}_{CI}$ for mixed programs, where a CVM program can be a subprocess of a larger IML process. We add an additional syntactic form $\hole_i$ (a hole) with $i \in \N$ and no reductions to IML. For an IML process $P_E$ with $n$ holes and CVM or IML processes $P_1, \ldots, P_n$ we write $P_E[P_1, \ldots, P_n]$ to denote process $P_E$ where each hole $\hole_i$ is replaced by $P_i$. 
\begin{FULL}
The semantics of the resulting process $P_E'$, denoted with $\sem{P_E'}_{CI}$, is defined in \cref{iml-semantics}. 
\end{FULL}

Being able to embed a CVM program within an IML process is useful for modelling. As an example, let $P_1$ be the CVM program resulting from the translation of the C code in \cref{fig:example} and let $P_2$ be a description of another participant of the protocol, in either CVM or IML. Then we might be interested in the security of the following process:
\[P_E[P_1, P_2] = \;!((\tilde \nu\; k);\; ((!P_1) \concat (!P_2))).\]
A trace property $\rho$ of interest might be, for instance, {``Each event of the form $accept(x)$ is preceded by an event of the form $request(x)$''}, where $request$ is an event possibly raised in $P_2$. The goal is to obtain a statement about probability $\insec(\sem{P_E[P_1, P_2]}_{CI}, \rho, t)$ for various $t$. The next section shows how we can relate the security of $P_E[P_1, P_2]$ to the security of $P_E[\tilde P_1, P_2]$, where IML process $\tilde P_1$ is a model of the CVM process $P_1$, extracted by symbolic execution.

\section{CVM to IML: Symbolic Execution \status{good, 06.05.2011}}\label{cvm-to-iml}

\begin{figure}
\small
\vspace{-2ex}
\begin{align*}
      \mathrlap{v \in \Var,\, i \in \N}\qquad\qquad \\
      pb \in \PBase & ::= && \hspace{-\bigskipamount}\text{pointer base} \\
            & \stack\ v && \text{stack pointer to variable $v$} \\
            & \heap\ i && \text{heap pointer with id $i$} \\
      e \in \SExp & ::= && \hspace{-\bigskipamount}\text{symbolic expression} \\
            & \ptr(pb, e) && \text{pointer} \\
            & \ldots && \text{same as $\IExp$ in \cref{fig:IML-syntax}}
\end{align*}
\vspace{-4ex}
\caption{Symbolic expressions.}
\vspace{-2ex}
\label{fig:SExp}
\end{figure}

\begin{figure*}[t]
\small
\vspace{-2ex}
\begin{align*} 
      &\frac{}
        {(\mcode{Init},\, \cvmP) \rightarrow (\facts_{op},\, \Set{\stack\ v \mapsto \bs(N) | v \in \var(\cvmP)},\, \Set{\stack\ v \mapsto \emptybs | v \in \var(\cvmP)},\, [],\, \cvmP)} \tag{S-Init}\eqlabel{eq:s-init} \\[1ex]
      &\frac{}{(\facts,\, \allocs,\, \mems,\, \stacks,\, \mcode{Const}\ b;\, \cvmP) \rightarrow (\facts,\, \allocs,\, \mems,\, b :: \stacks,\, \cvmP)} \tag{S-Const}\eqlabel{eq:s-const} \\[1ex]
      &\frac{}{(\facts,\, \allocs,\, \mems,\, \stacks,\, \mcode{Ref}\ v;\, \cvmP) \rightarrow (\facts,\, \allocs,\, \mems,\, \ptr(\stack\ v,\, i0) :: \stacks,\, \cvmP)} \tag{S-Ref}\eqlabel{eq:s-ref} \\[1ex]
      &\frac{e_l \in \IExp \quad i \in \N\ \text{minimal s.t.}\ pb = \heap\ i \notin \dom(\mems)}{
                  (\facts,\, \allocs,\, \mems,\, e_l :: \stacks,\, \mcode{Malloc};\, \cvmP) \rightarrow (\facts,\, \allocs\{pb \mapsto e_l\},\, \mems\{pb \mapsto \emptybs\},\, \ptr (pb, i0) :: \stacks,\, \cvmP)
       }  
      \tag{S-Malloc}\eqlabel{eq:s-malloc} \\[1ex]
      & \frac{
                  pb \in \dom(\mems) \quad e = \simplify_\Sigma(\mems(pb)\{e_o, e_l\}) \quad \facts \entails (e_o +\nop e_l \leq \getLen(\mems(pb)))
       }{
            (\facts,\, \allocs,\, \mems,\, e_l :: \ptr(pb, e_o) :: \stacks,\, \mcode{Load};\, \cvmP) \rightarrow (\facts,\, \allocs,\, \mems,\, e :: \stacks,\, \cvmP)
       }
        \tag{S-Load}\eqlabel{eq:s-load} \\[1ex]
      & \frac{
                  e_l \in \IExp \quad l = (\text{if $src = \mcode{read}$ then $\miml{in}(v);$ else $(\nu v[e_l]);$})
      }{
                  (\facts,\, \allocs,\, \mems,\, e_l :: \stacks,\, \mcode{In}\ v\ src;\, \cvmP) \xrightarrow{l} (\facts \cup \{\len(v) = e_l\},\, \allocs,\, \mems,\, v :: \stacks,\, \cvmP)
      }
        \tag{S-In}\eqlabel{eq:s-in} \\[1ex]
      & \frac{}{
                  (\facts,\, \allocs,\, \mems,\, \stacks,\, \mcode{Env}\ v;\, \cvmP) \rightarrow (\facts,\, \allocs,\, \mems,\, \len(v) :: v :: \stacks,\, \cvmP)
      }
        \tag{S-Env}\eqlabel{eq:s-env} \\[1ex]
      & \frac{
            \begin{aligned}
                  & e = \apply(op, e_1, \ldots, e_n) \neq \bot
            \end{aligned}
      }{
                  (\facts,\, \allocs,\, \mems,\, e_1 :: \ldots :: e_n :: \stacks,\, \mcode{Apply}\ op;\, \cvmP) \rightarrow (\facts,\, \allocs,\, \mems,\, \len(e) :: e :: \stacks,\, \cvmP)
      }
        \tag{S-Apply}\eqlabel{eq:s-apply} \\[1ex]
      & \frac{e \in \IExp \quad l = (\text{if $dest = \mcode{write}$ then $\miml{out}(e);$ else $\miml{event}(e);$})}{
                  (\facts,\, \allocs,\, \mems,\, e :: \stacks,\, \mcode{Out}\ dest;\, \cvmP) \xrightarrow{l} (\facts,\, \allocs,\, \mems,\, \stacks,\, \cvmP)
      }
      \tag{S-Out}\eqlabel{eq:s-out} \\[1ex]
      & \frac{e \in \IExp}{ 
             (\facts,\, \allocs,\, \mems,\, e :: \stacks,\, \mcode{Test};\, \cvmP) \xrightarrow{\miml[basicstyle=\scriptsize]{if\ $e$ then}} (\facts \cup \{e\},\, \allocs,\, \mems,\, \stacks,\, \cvmP)
      }
        \tag{S-Test}\eqlabel{eq:s-test} \\[3ex]
      & \frac{
            \begin{aligned}
                  & e_h = \mems(pb) \neq \bot \quad e_s = \allocs(pb) \neq \bot \quad e_{lh} = \getLen(e_h) \quad e_l = \getLen(e)  \\
                  & 
                  \begin{aligned}
                  \text{either}\ & \facts \entails (e_o +\nop e_l < e_{lh})\ \text{and}\ e_h'= \simplify_\Sigma(e_h\{i0, e_o\} \concat e \concat e_h\{e_o +\nop e_l, e_{lh} -\nop (e_o +\nop e_l)\}) \\
                  \text{or}\ & \facts \entails (e_o +\nop e_l \geq e_{lh}) \wedge (e_o \leq e_{lh}) \wedge (e_o +\nop e_l \leq e_s)\ \text{and}\ e_h'= \simplify_\Sigma(e_h\{i0, e_o\} \concat e) 
                  \end{aligned}
            \end{aligned}
      }{
            \begin{aligned}
                  & (\facts,\, \allocs,\, \mems,\, \ptr(pb, e_o) :: e :: \stacks,\, \mcode{Store};\, \cvmP) \rightarrow (\facts,\, \allocs,\, \mems\{pb \mapsto e_h'\},\, \stacks,\, \cvmP)
            \end{aligned}
      }
        \tag{S-Store}\eqlabel{eq:s-store} \\
\end{align*}
\vspace{-5ex}
\caption{The symbolic execution of CVM.}
\vspace{-2ex}
\label{fig:cvm-symex}
\end{figure*}

We describe how to automatically extract an IML model from a CVM program while preserving security properties. The key idea is to execute a CVM program in a symbolic semantics, where, instead of concrete bitstrings, memory locations contain IML expressions representing the set of all possible concrete values at a given execution point.

To track the values used as pointers during CVM execution, we extend IML expressions with an additional construct, resulting in the class of \emph{symbolic expressions} shown in \cref{fig:SExp}.
An expression of the form $\ptr(pb, e_o)$ represents a pointer into the memory location identified by the \emph{pointer base} $pb$ with an offset $e_o$ relative to the beginning of the location. We require that $e_o \in \IExp$, so that pointer offsets do not contain pointers themselves. Pointer bases are of two kinds: a base of the form $\stack\; v$ represents a pointer to the program variable $v$ and a base of the form $\heap\; i$ represents the result of a \cvm{Malloc}.

Symbolic execution makes certain assumptions about the arithmetic operations that are available in $\Ops$. We assume that programs use operators for bitwise addition and subtraction (with overflow) that we shall write as $+\bop$ and $-\bop$. We also make use of addition and subtraction without overflow---the addition operator (written as $+\nop$) is expected to widen its result as necessary and the negation operator (written as $-\nop$) returns $\bot$ instead of a negative result. We assume that $\Ops$ contains comparison operators $=$, $\leq$, and $<$ such that $\opfun_{=}(a, b)$ returns $i1$ if $\val(a) = \val(b)$ and $i0$ otherwise, similarly for the other operators. This way $\leq$ and $<$ capture unsigned comparisons on bitstring values. We assume $\Ops$ contains logical connectives $\neg$ and $\vee$ that interpret $i0$ as false value and $i1$ as true value. These operators may or may not be the ones used by the program itself.

To evaluate symbolic expressions concretely, we need concrete values for pointer bases as well as concrete values for variables.
Given an \emph{extended valuation} $\eta \colon Var \cup \PBase \pto \BS$, we extend the function $\sem{\cdot}_\eta$ from \cref{fig:IML-eval} by the rule: 
\[\sem{\ptr(pb, e_o)}_\eta = \eta(pb) +\bop \sem{e_o}_\eta.\]
When applying arithmetic operations to pointers, we need to make sure that the operation is applied to the pointer offset and the base is kept intact. This behaviour is encoded by the function $\apply$, defined as follows:
\begin{align*}
      & \apply(+\bop, \ptr(pb, e_o), e) = \ptr(pb, e_o +\bop e), \\
      & \qquad \text{for $e \in \IExp$,} \\
      & \apply(-\bop, \ptr(pb, e_o), \ptr(pb, e_o')) = e_o -\bop e_o', \\
      & \apply(op, e_1, \ldots, e_n) = op(e_1, \ldots, e_n), \\
      & \qquad \text{for $e_1, \ldots, e_n \in \IExp$,} \\
      & \apply(...) = \bot, \; \text{otherwise.}
\end{align*}

As well as tracking the expressions stored in memory, we also track logical facts discovered during symbolic execution.
To record these facts, we use symbolic expressions themselves, interpreted as logical formulas with $=$, $\leq$, and $<$ as relations and $\neg$ and $\vee$ as connectives.
We allow quantifiers in formulas, with straightforward interpretation. Given a set $\facts$ of formulas and a formula $\phi$ we write $\Sigma \entails \phi$ iff for each $\Sigma$-consistent valuation $\eta$ (that is, a valuation such that $\sem{\psi}_\eta = i1$ for all $\psi \in \Sigma$) we also have $\sem{\phi}_\eta = i1$.

To check the entailment relation, our implementation relies on the SMT solver \emph{Yices} \cite{yices}, by replacing unsupported operations, such as string concatenation or substring extraction, with uninterpreted functions. This works well for our purpose---the conditions that we need to check during the symbolic execution are purely arithmetic and are supported by Yices' theory. 

The function $\getLen$ returns for each symbolic expression an expression representing its length:
\begin{align*}
      & \getLen(\ptr(\ldots)) = \bs(N), \\
      & \getLen(\len(\ldots)) = \bs(N), \\
      & \getLen(b) = \bs(\lbars{b}), \;\text{for $b \in \BS$,}\\
      & \getLen(x) = \len(x), \;\text{for $x \in \Var$,}\\
      & \getLen(op(e_1, \ldots, e_n)) = \len(op(e_1, \ldots, e_n)), \\
      & \getLen(e_1\concat e_2) = \getLen(e_1) +\nop \getLen(e_2), \\
      & \getLen(e\{e_o, e_l\}) = e_l. 
\end{align*}
We assume that the knowledge about the return lengths of operation applications is encoded in a fact set $\facts_{op}$. As an example, $\facts_{op}$ might contain the facts:
\begin{align*}
      & \forall x,y,a \colon \len(x) = a \wedge \len(y) = a \Rightarrow \len(x +\bop y) = a, \\
      & \forall x \colon \len(sha1(x)) = i20.
\end{align*}
We assume that $\facts_{op}$ is consistent: $\emptyset \entails \phi$ for all $\phi \in \facts_{op}$.

The transformations prescribed by the symbolic semantic rules would quickly lead to very large expressions. Thus the symbolic execution is parametrised by a simplification function $\simplify$ that is allowed to make use of the collected fact set $\facts$. We demand that the simplification function is sound in the following sense: for each fact set $\facts$, expression $e$ and a $\facts$-consistent valuation $\eta$ we have 
\[\sem{e}_\eta \neq \bot \Longrightarrow \sem{\simplify_\Sigma(e)}_\eta = \sem{e}_\eta.\]
\begin{FULL}
The simplifications employed in our algorithm are described in \cref{simplify}.
\end{FULL}
\begin{SHORT}
The simplifications employed in our algorithm are described in the technical report \cite{AGJ11long}.
\end{SHORT}

\begin{figure*}[t]
\small
\begin{center}
      \begin{tabular}{lllll}
               Line no. & C line  & symbolic memory updates  & new facts & generated IML line \\
            \midrule
                  1. & \cryptoc!readenv("k", &key, &keylen);! & $\stack\, key \Rightarrow \ptr(\heap\, 1,\, i0)$ & 
                  \\
                  && $\heap\, 1 \Rightarrow k$ 
                  \\
                  && $\stack\, keylen \Rightarrow \len(k)$
            \\ %
                  2. & \cryptoc!read(&len, sizeof(len));! & $\stack\, len \Rightarrow l$ & $\len(l) = iN$ & \iml!in($l$)!
            \\ %
                  3. & \cryptoc!if(len > 1000) exit();! & & $\neg (l > i1000)$
            \\ %
                  4. & \cryptoc!void * buf = malloc(len + 2 * MAC_LEN);! & $\stack\, buf \Rightarrow \ptr(\heap\, 2,\, i0)$
                  \\
                  && $ \heap\, 2 \Rightarrow \emptybs $
            \\ %
                  5. & \cryptoc!read(buf, len);! & $\heap\, 2 \Rightarrow x_1$ & $\len(x_1) = l$ & \iml!in($x_1$)! 
            \\ %
                  6. & \cryptoc!mac(buf, len, key, keylen, buf + len);! & $\heap\, 2 \Rightarrow x_1 \concat mac(k, x_1)$
            \\ %
                  7. & \cryptoc!read(buf + len + MAC_LEN, MAC_LEN);! & $\heap\, 2 \Rightarrow x_1 \concat mac(k, x_1) \concat x_2$ & $\len(x_2) = i20$ & \iml!in($x_2$)!
            \\ %
                  8. & \cryptoc!if(memcmp(...) == 0)! & & & \iml!if $mac(k, x_1) = x_2$ then! 
            \\ %
                  9. & \cryptoc!event("accept", buf, len);! & & & \iml!event $accept(x_1)$!
      \end{tabular}
\vspace{-1ex}
      \caption{Symbolic execution of the example in \cref{fig:example}.}
\vspace{-4ex}
      \label{fig:symex-example}
\end{center}      
\end{figure*}

The algorithm for symbolic execution is determined by the set of semantic rules presented in \cref{fig:cvm-symex}. The initial semantic configuration has the form $(\mcode{Init},\, \cvmP)$ with the executing program $P \in \CVM$. The other semantic configurations have the form $(\facts,\, \allocs,\, \mems,\, \stacks,\, \cvmP)$, where
\begin{itemize*}
      \item 
            $\facts \subseteq \SExp$ is a set of formulas (the path condition),

      \item
            $\allocs \colon \PBase \pto \SExp$ is the symbolic allocation table that for each memory location stores its allocated size,

      \item
            $\mems \colon \PBase \pto \SExp$ is the symbolic memory. We require that $\dom(\mems) = \dom(\allocs)$, 
            
      \item
            $\stacks$ is a list of symbolic expressions representing the execution stack,
            
      \item
            $\cvmP \in \CVM$ is the executing program.
\end{itemize*}

\begin{FULL}
The symbolic execution rules essentially mimic the rules of the concrete execution. 
\end{FULL}
The crucial rules are \eqref{eq:s-load} and \eqref{eq:s-store} that reflect the effect of storing and loading memory values on the symbolic level. The rule \eqref{eq:s-load} is quite simple---it tries to deduce from $\facts$ that the extraction is performed from a defined memory range, after which it represents the result of the extraction using an IML range expression. The rule \eqref{eq:s-store} distinguishes between two cases depending on how the expression $e$ to be stored is aligned with the expression $e_h$ that is already present in memory. If $e$ needs to be stored completely within the bounds of $e_h$ then we replace the contents of the memory location by $e_h\{\ldots\} \concat e \concat e_h\{\ldots\}$ where the first and the second range expression represent the pieces of $e_h$ that are not covered by $e$. In case $e$ needs to be stored past the end of $e_h$, the new expression is of the form $e_h\{\ldots\} \concat e$. The rule still requires that the beginning of $e$ is positioned before the end of $e_h$, and hence it is impossible to write in the middle of an uninitialised memory location. This is for simplicity of presentation---the rule used in our implementation does not have this limitation (it creates an explicit ``undefined'' expression in these cases).

Since all semantic rules are deterministic there is only one symbolic execution trace. Some semantic transition rules are labelled with parts of IML syntax. The sequence of these labels produces an IML process that simulates the behaviour of the original CVM program. Formally, for a CVM program $\cvmP$, let $L$ be the symbolic execution trace starting from the state $(\mcode{Init},\, \cvmP)$. If $L$ ends in a state with an empty program, let $\lambda_1, \ldots, \lambda_n$ be the sequence of labels of $L$ and set $\sem{\cvmP}_S = \lambda_1\ldots \lambda_n 0 \in \IML$, otherwise set $\sem{\cvmP}_S = \bot$. 

We shall say that a polynomial is \emph{fixed} iff it is independent of the arbitrary values assumed in this paper, such as $N$ or the properties of the set $\Ops$. Our main result relates the security of $\cvmP$ to the security of $\sem{\cvmP}_S$.

\begin{theorem}[Symbolic Execution is Sound]\label{symex}
      There exists a fixed polynomial $p$ such that if $P_1, \ldots, P_n$ are CVM processes and for each $i$ $\tilde P_i := \sem{P_i}_S \neq \bot$ then for any IML process $P_E$, any trace property $\rho$, and resource bound $t \in \N$:
      \begin{align*}
            &\insec(\sem{P_E[P_1, \ldots, P_n]}_{CI}, \rho,t) \\
            &\qquad \leq \insec(\sem{P_E[\tilde P_1, \ldots, \tilde P_n]}_{I}, \rho, p(t)).
      \end{align*}
\end{theorem}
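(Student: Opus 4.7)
The plan is to prove \cref{symex} by a standard simulation argument: given any computational attacker $E$ against the mixed semantics $\sem{P_E[P_1, \ldots, P_n]}_{CI}$, I would construct a simulator attacker $E'$ against the pure IML semantics $\sem{P_E[\tilde P_1, \ldots, \tilde P_n]}_{I}$ that produces, on equally distributed coin tosses, the same externally observable trace and hence the same sequence of events. Because the environment $P_E$ and any IML subprocesses inside it are literally unchanged, $E'$ can leave the handling of those parts to the native IML semantics and intervene only when a sub-process corresponding to some $\tilde P_i$ is scheduled. The desired inequality on $\insec$ then follows because $E'$ succeeds on at least those coin sequences for which $E$ does.

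The technical core is a simulation invariant connecting, for each active sub-process instance, a concrete CVM configuration $(\allocc, \memc, \stackc, P)$ to the unique symbolic configuration $(\facts, \allocs, \mems, \stacks, P)$ reached by the symbolic trace, together with a \emph{concretisation valuation} $\eta \colon \Var \cup \PBase \pto \BS$. The invariant states: (i) $\eta$ satisfies every formula in $\facts$; (ii) $\stackc$ equals, componentwise, $\sem{\stacks}_\eta$; (iii) for every allocated base $pb \in \dom(\mems)$, the bits stored in $\memc$ at offsets $\eta(pb), \ldots, \eta(pb) + \val(\sem{\allocs(pb)}_\eta) - 1$ form the bitstring $\sem{\mems(pb)}_\eta$; and (iv) the sequence of external labels emitted so far by the concrete sub-process agrees with the sequence obtained by evaluating under $\eta$ the IML labels produced by the symbolic rules. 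I would prove this invariant by induction on the number of executed CVM instructions, with a case analysis on the concrete rule applied. The stack-only cases (Const, Ref, Apply) follow immediately from the definition of $\apply$. For Malloc, $E'$ internally supplies the attacker-chosen pointer $p$ to its concrete simulation and extends $\eta$ with $\eta(\heap\ i) := p$. For In, Out, Event, and the random source we appeal directly to the soundness of $\sem{\cdot}_\eta$ and the IML semantics: the value exchanged with the outer attacker equals $\sem{e}_\eta$ for the symbolic expression $e$ at the stack top, because \eqref{eq:s-in}, \eqref{eq:s-out} and the remaining labelled rules emit exactly the corresponding IML action. For Test we must verify that $\facts \cup \{e\}$ is satisfied by $\eta$ precisely when the concrete test succeeds, which follows from part (ii) combined with the evaluation rule for $e$.

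The main obstacle is the pair of memory rules \eqref{eq:s-load} and \eqref{eq:s-store}. For \eqref{eq:s-load}, the entailment premise $\facts \entails (e_o +\nop e_l \leq \getLen(\mems(pb)))$ together with part (i) of the invariant gives that the concrete load lies inside the initialised region, and the soundness requirement on $\simplify$ together with part (iii) forces the extracted substring to equal the corresponding concrete bits. For \eqref{eq:s-store} the argument is symmetric but requires careful case analysis on the two alignment branches: in the first branch, the new memory contents $e_h\{i0, e_o\} \concat e \concat e_h\{e_o +\nop e_l, e_{lh} -\nop (e_o +\nop e_l)\}$ decompose under $\sem{\cdot}_\eta$ into the three concrete segments that surround the written region; the second branch handles the write-past-end case analogously. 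Both rely critically on $\facts$ being entailed by $\eta$ and on the soundness of $\simplify$, and both yield the update needed to restore part (iii).

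Finally, the simulator $E'$ is built as a wrapper around $E$ that, for each scheduled CVM sub-process instance, maintains the concrete state and runs the concrete CVM semantics in lock-step with the symbolic trace baked into $\tilde P_i$, extending $\eta$ at In and Malloc steps from values provided by $E$ or sampled by the IML semantics. The per-step cost of $E'$ is a constant number of bitstring operations on bitstrings whose lengths are determined by $N$ and by the fixed structure of $\tilde P_i$, so a fixed polynomial $p$ bounds both the description size and the running time of $E'$ in terms of $t$. The hardest part will be formalising this lock-step correspondence uniformly across replication and interleaving induced by $P_E$, and checking that off-main-path behaviour is matched: if a concrete Test fails, the symbolic counterpart stops emitting labels as well, so $E'$ simply terminates the corresponding simulated instance without producing spurious events on either side.
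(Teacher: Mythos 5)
Your core argument matches the paper's proof in \cref{symex-proof}: your invariant (i)--(iv) is exactly the paper's notion of $\eta$-consistency together with the concretisation map $\conc_\eta$ (facts satisfied under $\eta$, stack and memory obtained by evaluating the symbolic state under $\eta$), it is propagated by the same per-instruction case analysis (\cref{cvm-to-iml-soundness-1}), the \cvm{Load}/\cvm{Store} cases are discharged the same way (entailment from $\facts$ plus soundness of $\simplify$ and $\getLen$), and the emitted-label agreement is \cref{cvm-to-iml-soundness-2}. The paper merely packages the reduction more abstractly, via a generic simulation preorder on protocol transition systems (\cref{simulation}, \cref{simsec}) and an embedding lemma (\cref{embedsim}) to handle $P_E$, replication and interleaving compositionally, where you argue inline that $P_E$ is unchanged.

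One step of your construction would fail as literally described: the simulator $E'$ cannot ``maintain the concrete state and run the concrete CVM semantics in lock-step,'' because the concrete state depends on secrets the attacker never observes --- the nonces sampled by the randomness rule and the environment variables (e.g.\ the keys distributed by $P_E$) read by \cvm{Env}. Fortunately this is unnecessary: all $E'$ has to do is translate $E$'s command histories syntactically, dropping commands addressed to instructions that emit no IML label (\cvm{Const}, \cvm{Malloc}, \cvm{Load}, \ldots) and keeping those for \cvm{In}, \cvm{Out}, \cvm{Test}; the agreement of the \emph{values} exchanged with $E$ is guaranteed by your invariant about the honest process, not recomputed by the attacker. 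This is exactly the map $\tau$ in the proof of \cref{cvm-iml-simulation}, computable from $P$ alone in linear time. Relatedly, your cost accounting only bounds $E'$; to obtain the fixed polynomial $p$ you must also bound the running time of the IML side itself --- evaluating the accumulated symbolic expressions (nested concatenations, ranges, and operation applications) --- in terms of $t$, which the paper does via condition (4) of \cref{simulation} and the expression-evaluation induction at the end of the proof of \cref{cvm-iml-simulation}.
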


The condition that $p$ is fixed is important---otherwise $p$ could be large enough to give the attacker the time to enumerate all the $2^{2^N - 1}$ memory configurations. 
\begin{FULL}
For practical use the actual shape of $p$ can be recovered from the proof of the theorem given in \cref{symex-proof}.
\end{FULL}
\begin{SHORT}
      For practical use the actual shape of $p$ can be recovered from the proof of the theorem given in the technical report \cite{AGJ11long}.
\end{SHORT}

\Cref{fig:symex-example} illustrates our method by showing how the symbolic execution proceeds for our example in \cref{fig:example}. For each line of the C program we show updates to the symbolic memory, the set of new facts, and the generated IML code if any. In our example \cryptoc{MAC_LEN} is assumed to be $20$ and $N$ is equal to \cryptoc{sizeof(size_t)}. The variables $l$, $x_1$, and $x_2$ are arbitrary fresh variables chosen during the translation from C to CVM  (see \cref{c-to-cvm-example}). Below we mention details for some particularly interesting steps (numbers correspond to line numbers in \cref{fig:symex-example}).

\begin{itemize*}
      \item[1.] 
            The call to \cryptoc{readenv} redirects to a proxy function that generates CVM instructions for retrieving the environment variable $k$ and storing it in memory. 
            
      \item[4.]
            A new empty memory location is created and the pointer to it is stored in \cryptoc{buf}. We make an entry in the allocation table $\allocs$ with the length of the new memory location ($l +\bop i2 * i20$).
            
      \item[5.]
            We check that the stored value fits within the allocated memory area, that is, $l \leq l +\bop i2 * i20$. This is in general not true due to possibility of integer overflow, but in this case succeeds due to the condition $\neg (l > i1000)$ recorded before (assuming that the maximum integer value $2^N - 1$ is much larger than $1000$). Similar checks are performed for all subsequent writes to memory.

      \item[7.]
            The memory update is performed through an intermediate pointer value of the form $\ptr(\heap\, 2,\, l +\bop i20)$. The set of collected facts is enough to deduce that this pointer points exactly at the end of $x_1 \concat mac(k, x_1)$. 

      \item[8.]
            The proxy function for \cryptoc{memcmp} extracts values $e_1 = e\{l, i20\}$ and $e_2 = e\{l +\bop i20, i20\}$, where $e$ is the contents of memory at $\heap\, 2$, and puts $cmp(e_1, e_2)$ on the stack. With the facts collected so far $e_1$ simplifies to $mac(k, x_1)$ and $e_2$ simplifies to $x_2$. With some special comprehension for the meaning of $cmp$ we generate IML \iml{if $e_1 = e_2$ then}. 
\end{itemize*}

\section{Verification of IML\status{good, 05.04.2011}}\label{iml-verification}

The symbolic model extracted in \cref{fig:symex-example} does not contain any bitstring operations, so it can readily be given to ProVerif for verification. In general this is not the case and some further simplifications are required. In a nutshell, the simplifications are based on the observation that the bitstring expressions (concatenation and substring extraction) are meant to represent pairing and projection operations, so we can replace them by new symbolic operations that behave as pairing constructs in ProVerif. We then check that the expressions indeed satisfy the algebraic properties expected of such operations.

We outline the main results regarding the translation to ProVerif.
\begin{SHORT}%
The technical report \cite{AGJ11long} contains the details.%
\end{SHORT}%
\begin{FULL}%
\Cref{iml-verification-details} contains the details.%
\end{FULL}
The pi calculus used by ProVerif can be described as a subset of IML from which the bitstring operations have been removed. Unlike CVM and IML, the semantics of pi is given with respect to an arbitrary security parameter: we write $\sem{P}_\pi^k$ for the semantics of a pi process $P$ with respect to the parameter $k \in \N$. In contrast, we consider IML as executing with respect to a fixed security parameter $k_0 \in \N$. For an IML process $P$ we specify conditions under which it is translatable to a pi process $\tilde P$.

\begin{theorem}[Soundness of the translation]\label{transsound}\ \\
      There exists a fixed polynomial $p$ such that for any $P \in \IML$ translatable to a pi process $\tilde P$, any trace property $\rho$ and resource bound $t \in \N$:
      $\insec(\sem{P}_I, \rho, t) \leq \insec(\sem{\tilde P}^{k_0}_\pi, \rho, p(t))$.
\end{theorem}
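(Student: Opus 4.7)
The plan is to construct a step-by-step simulation between the two PTS semantics and then lift IML attackers to pi attackers with polynomial-time overhead. First, I would make precise the translation $P \mapsto \tilde P$: each occurrence of bitstring concatenation $e_1 \concat e_2$ is replaced by a constructor application $\mathit{pair}(e_1, e_2)$, each substring extraction $e\{e_o, e_l\}$ is replaced by a corresponding destructor that inverts the pairing pattern, and $\len$ is replaced by a dedicated destructor. The ``translatable'' hypothesis asserts that every substring extraction in $P$ corresponds to a valid projection path in $\tilde P$, and that at the fixed security parameter $k_0$ the computational realizations of these constructors and destructors produce exactly the same bitstrings as the IML operations they replace.

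Next, I would define a simulation relation $R$ between configurations of $\sem{P}_I$ and of $\sem{\tilde P}^{k_0}_\pi$. Related configurations agree on environment valuations and on pending processes modulo the syntactic translation. The inductive step shows $R$ is preserved under PTS transitions: each IML step is matched by a bounded number of pi steps producing the same action label, because the algebraic correctness of the constructors and destructors (guaranteed by translatability) ensures that the bitstrings computed in the two semantics coincide. In particular, matching input, output, and event actions carry identical bitstring payloads, so the raised event sequences agree on corresponding traces; the sampling steps occur at the same points in both executions, so the induced probability distributions on event traces coincide.

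Given an attacker $E$ of description size at most $t$ against $\sem{P}_I$, I would construct an attacker $E'$ against $\sem{\tilde P}^{k_0}_\pi$ by letting $E'$ run $E$'s code essentially verbatim; no translation at the attacker's message boundary is needed since both PTSs expose bitstring-level interactions. Taking $p(t)$ to account for the constant blow-up in PTS steps (each IML step simulated by $O(1)$ pi steps, plus the cost of evaluating the extra destructors) and for the size of the wrapper, one obtains
\[
\prob{\ptsevents(\sem{\tilde P}^{k_0}_\pi, E', p(t)) \notin \rho} \;\geq\; \prob{\ptsevents(\sem{P}_I, E, t) \notin \rho},
\]
and taking suprema over $E$ yields the desired inequality.

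The main obstacle is twofold. First, ensuring that $p$ is \emph{fixed} (independent of $P$, $\Ops$ and $N$) requires a uniform bound on the number of pi reductions simulating one IML reduction and on the evaluation cost of the added constructors and destructors; this needs a careful inductive argument over the shape of IML expressions and processes, rather than any process-specific analysis. Second, evaluation failures must align synchronously: if a pi destructor fails where the corresponding IML substring extraction would succeed, or vice versa, the simulation breaks. The translatability condition must therefore be stated precisely so as to rule this out, and the simulation proof must case-split on every failure mode in both semantics to confirm that successes and failures occur in lockstep.
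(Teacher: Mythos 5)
Your proposal follows the same overall skeleton as the paper: exhibit a step-by-step simulation between $\sem{P}_I$ and $\sem{\tilde P}^{k_0}_\pi$, then lift any IML attacker to a pi attacker with polynomial overhead. The paper packages the second half as a reusable result (\cref{simsec}, preservation of security by the relation $\lesssim_p$ of \cref{simulation}), and states the first half as a one-line lemma ($\sem{P}_I \lesssim_p \sem{\tilde P}^{k_0}_\pi$), so your inlined attacker construction is essentially its proof. The substantive differences are in the translation itself. You translate each concatenation into a binary $\mathit{pair}$ and each extraction into a projection; the paper instead groups an entire encoding expression (tags, length fields, several concatenations) into a single fresh $n$-ary operation whose computational implementation at $k_0$ is \emph{defined} to be the IML expression's semantics, and likewise each parsing expression becomes a fresh unary operation whose implementation guards on a formula $\phi_p'$ implied by the checks the IML process has already performed. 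This dissolves two of the obstacles you flag: the bitstrings computed on the two sides coincide by construction rather than by a hypothesis to be discharged, and the pi destructor cannot fail where the IML extraction succeeds because its success condition is weaker than the path condition under which the IML process reaches it. Your demand that failures align in \emph{both} directions is also stronger than needed: the paper deletes auxiliary if-statements outright, which only enlarges the trace set of $\tilde P$, and the inequality only requires simulation in one direction. Finally, the attacker cannot quite run verbatim --- dropping control transitions changes process histories, so commands must be re-addressed --- but this is exactly the command-translation map $\tau$ built into \cref{simulation}, with the fixed polynomial bound coming from \cref{simsec}. The algebraic conditions (C1)--(C4) you gesture at under ``translatability'' are indeed checked by the paper, but they are needed for the later computational-soundness step (\cref{compsound}), not for this theorem.
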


Backes et al. \cite{CoSP} provide an example of a set of operations $\Ops^S$ and a set of \emph{soundness conditions} restricting their implementations that are sufficient for establishing computational soundness. The set $\Ops^S$ contains a public key encryption operation that is required to be IND-CCA secure. The soundness result is established for the class of the so-called \emph{key-safe} processes that always use fresh randomness for encryption and key generation, only use honestly generated decryption keys and never send decryption keys around. 

\begin{theorem}[Computational soundness]\label{compsound}
      Let $P$ be a pi process using only operations in $\Ops^S$ such that the soundness conditions are satisfied. If $P$ is key-safe and symbolically secure with respect to a trace property $\rho$ (as checked by ProVerif) then for every polynomial $p$ the following function is negligible in $k$:
      $\insec(\sem{P}^k_\pi,\, \rho,\, p(k))$.
\end{theorem}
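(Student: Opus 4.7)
The plan is to derive this theorem as a corollary of the main computational soundness result of Backes et al.~\cite{CoSP}. That result is stated inside the CoSP framework rather than for the specific pi calculus of this paper, so the real work of the proof consists in exhibiting a tight enough correspondence between the paper's semantics $\sem{\cdot}^k_\pi$ and the CoSP protocol semantics, and then invoking the CoSP theorem essentially as a black box.

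First I would fix a translation $[\cdot]$ from pi processes in the sense of this paper (a subset of $\IML$ with the bitstring operations removed, following \cite{BAF08}) to CoSP protocols. Since the pi syntax used here is essentially the one used by CoSP, the translation should be largely structural: replication, parallel composition, inputs, outputs, events, conditionals, \texttt{let}-bindings, and the length-indexed nonce sampling $(\nu x[e])$ all have direct counterparts on the CoSP side, with $\Ops^S$ supplying the constructor and destructor symbols. Second, I would verify that for key-safe processes the image $[P]$ lies in the CoSP fragment for which computational soundness has been established, using key-safety to discharge CoSP's syntactic side conditions on the use of decryption keys and on freshness of the randomness supplied to encryption and key generation. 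Third, I would prove a semantic correspondence lemma: for every attacker $E$ against $\sem{P}^k_\pi$ with $|E|\le p(k)$, there is an attacker $E'$ of polynomial size against the CoSP execution of $[P]$ inducing the same distribution over event traces up to negligible statistical distance, and symmetrically in the other direction, with at most a fixed polynomial overhead. Symbolic security of $P$ as certified by ProVerif must likewise be shown equivalent to symbolic trace security of $[P]$ in the CoSP sense; for this I would appeal to the standard soundness of ProVerif with respect to the symbolic trace semantics of applied pi.

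The main obstacle is this semantic correspondence, and in particular matching the resource accounting. The quantity $\insec(T,\rho,t)$ folds the attacker's description size and the number of execution steps into one budget $t$, whereas CoSP typically parameterises by a global polynomial bound on the runtime of the entire interactive experiment. To bridge this, I would argue that every per-step cost of the pi semantics (choosing a process from the multiset, evaluating an expression under a valuation, matching against destructors, checking $\rho$) is polynomial-time in the representation size, so any $p(k)$-bounded pi execution is simulable by a $q(k)$-bounded CoSP execution for some fixed polynomial $q$ depending only on $p$ and on the polynomial-time implementations assumed by the soundness conditions. Preservation of events and of polynomial-time decidability of $\rho$ under $[\cdot]$ is then routine.

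Once the correspondence is in place, the theorem follows by contraposition: if $\insec(\sem{P}^k_\pi,\rho,p(k))$ were non-negligible for some polynomial $p$, the witnessing pi attacker would, via the correspondence, yield a polynomial-time CoSP attacker against $[P]$ whose execution leaves $\rho$ with non-negligible probability, contradicting the CoSP computational soundness theorem applied to the symbolically secure, key-safe, $\Ops^S$-only process $[P]$. Hence $\insec(\sem{P}^k_\pi,\rho,p(k))$ is negligible in $k$ for every polynomial $p$, as claimed.
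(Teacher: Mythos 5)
Your proposal takes essentially the same route as the paper: the paper also derives \cref{compsound} by invoking the computational soundness lemma of \cite{CoSP} as a black box, and identifies the correspondence between the two notions of computational execution (CoSP's single process with attacker-chosen evaluation contexts versus this paper's multiset of handle-addressed executing processes from \cref{compex}) as the only substantive point, asserting that the equivalence of the two security games is easy to see. The extra steps you spell out --- the structural translation, the key-safety check, the resource accounting, and the identification of ProVerif's symbolic semantics with CoSP's --- are in the paper either left implicit or discharged by its design choice of defining the pi calculus directly as a syntactic restriction of the calculus of \cite{CoSP}, so your write-up is, if anything, more detailed than the paper's own sketch.
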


Overall, \cref{symex,transsound,compsound} can be interpreted as follows: let $P_1, \ldots, P_n$ be implementations of protocol participants in CVM and let $P_E$ be an IML process that describes an execution environment. Assume that $P_1, \ldots, P_n$ are successfully symbolically executed with resulting models $\tilde P_1, \ldots, \tilde P_n$, the IML process $P_E[\tilde P_1, \ldots, \tilde P_n]$ is successfully translated to a pi process $P_\pi$, and ProVerif successfully verifies $P_\pi$ against a trace property $\rho$. Then we know by \cref{compsound} that $P_\pi$ is a pi protocol model that is (asymptotically) secure with respect to $\rho$. By \cref{symex,transsound} we know that $P_1, \ldots, P_n$ form a secure implementation of the protocol described by $P_\pi$ for the security parameter $k_0$.

\section{Implementation \& Experiments \status{good, 27.05.2011}}\label{implementation}

We have implemented our approach and successfully tested it on several examples. Our implementation performs the conversion from C to CVM at runtime---the C program is instrumented using CIL so that it outputs its own CVM representation when run. This allows us to identify and compile the main path of the protocol easily. Apart from information about the path taken we do not use any runtime information and we plan to make the analysis fully static in future. The idea of instrumenting a program to emit a low-level set of instructions for symbolic execution at runtime as well as some initial implementation code were borrowed from the CREST symbolic execution tool \cite{BS08}.

Currently we omit certain memory safety checks and assume that there are no integer overflows. This allows us to use the more efficient theory of mathematical integers in Yices, but we are planning to move to exact bitvector treatment in future.

The implementation comprises about 4600 lines of OCaml code. The symbolic proxies for over 80 of the cryptographic functions in the OpenSSL library comprise further 2000 lines of C code.

\begin{figure}
\small
\begin{center}
      \begin{tabular}{lccccc}
                  & C LOC & \hspace{-1em} IML LOC \hspace{-1em} & outcome & result type & time \\
            \midrule
            simple mac & $\sim 250$ & $12$ & verified & symbolic & 4s \\
            RPC & $\sim 600$ & $35$ & verified & symbolic & 5s \\
            NSL & $\sim 450$ & $40$ & verified & computat. & 5s \\
            CSur & $\sim 600$ & $20$ & flaw: \cref{fig:csur-bug} & --- & 5s \\
            minexplib & $\sim 1000$ & $51$ & \hspace{-1em} flaw: \cref{fig:minexplib-bug} \hspace{-1em} & --- & 15s \\ 
      \end{tabular}
\vspace{-1ex}
      \caption{Summary of analysed implementations.}
\vspace{-5ex}
      \label{fig:experiments}
\end{center}      
\end{figure}

\begin{figure}[t]
\small
      \begin{lstlisting}[language = cryptoC, gobble = 12]
            read(conn_fd, temp, 128); 
            // BN_hex2bn expects zero-terminated string
            temp[128] = 0;
            BN_hex2bn(&cipher_2, temp);
            // decrypt and parse cipher_2 
            // to obtain message fields
      \end{lstlisting}
      \vspace{-3ex}
      \caption{A flaw in the CSur example: input may be too short.}
      \vspace{-2ex}
      \label{fig:csur-bug}
\end{figure}

\begin{figure}[t]
\small
      \begin{lstlisting}[language = cryptoC, gobble = 12]
            unsigned char session_key[256 / 8];
            ...
            // Use the 4 first bytes as a pad
            // to encrypt the reading
            encrypted_reading = 
              ((unsigned int) *session_key) ^ *reading;
      \end{lstlisting}
\vspace{-3ex}
      \caption{A flaw in the minexplib code: only one byte of the pad is used.}
\vspace{-2ex}
      \label{fig:minexplib-bug}
\end{figure}

\Cref{fig:experiments} shows a list of protocol implementations on which we tested our method. Some of the verified programs did not satisfy the conditions of computational soundness (mostly because they use cryptographic primitives other than public key encryption and signatures supported by the result that we rely on \cite{CoSP}), so we list the verification type as ``symbolic''. 

The ``simple mac'' is an implementation of a protocol similar to the example in \cref{fig:example}. RPC is an implementation of the remote procedure call protocol in \cite{BBF08} that authenticates a server response to a client using a message authentication code. It was written by a colleague without being intended for verification using our method, but we were still able to verify it without any further modifications to the code. 

The NSL example is an implementation of the Needham-Schroeder-Lowe protocol written by us to obtain a fully computationally sound verification result. 
\begin{FULL}
The implementation is designed to satisfy the soundness conditions listed in \cref{iml-verification-details} (modulo the assumption that the encryption used is indeed IND-CCA).
\end{FULL}
\begin{SHORT}
      The implementation is designed to satisfy the soundness conditions of the computational soundness result outlined in the technical report \cite{AGJ11long}.
\end{SHORT}
Masking the second participant's identity check triggers Lowe's attack \cite{Low95} as expected. 
\begin{FULL}
\Cref{nsl-example} shows the source code and the extracted models.
\end{FULL}

The CSur example is the code analysed in a predecessor paper on C verification \cite{CSur}. It is an implementation of a protocol similar to Needham-Schroeder-Lowe. During our verification attempt we discovered a flaw, shown in \cref{fig:csur-bug}: the received message in buffer \cryptoc{temp} is being converted to a \cryptoc{BIGNUM} structure \cryptoc{cipher_2} without checking that enough bytes were received. Later a \cryptoc{BIGNUM} structure derived from \cryptoc{cipher_2} is converted to a bitstring without checking that the length of the bitstring is sufficient to fill the message buffer. In both cases the code does not make sure that the information in memory actually comes from the network, which makes it impossible to prove authentication properties. The CSur example has been verified in \cite{CSur}, but only for secrecy, and secrecy is not affected by the flaw we discovered. The code reinterprets network messages as C structures (an unsafe practise due to architecture dependence), which is not yet supported by our analysis and so we were not able to verify a fixed version of it.

The minexplib example is an implementation of a privacy-friendly protocol for smart electricity meters \cite{RD10} developed at Microsoft Research. The model that we obtained uncovered a flaw shown in \cref{fig:minexplib-bug}: incorrect use of pointer dereferencing results in three bytes of each four-byte reading being sent unencrypted. We found two further flaws: one could lead to contents of uninitialised memory being sent on the network, the other resulted in $0$ being sent (and accepted) in place of the actual number of readings. All flaws have been acknowledged and fixed. An F\# implementation of the protocol has been previously verified \cite{SCF+10}, which highlights the fact that C implementations can be tricky and can easily introduce new bugs, even for correctly specified and proven protocols. The protocol uses low-level cryptographic operations such as XOR and modular exponentiation. In general it is impossible to model XOR symbolically \cite{Unr10}, so we could not use ProVerif to verify the protocol, but we are investigating the use of CryptoVerif for this purpose. 

\begin{DRAFT}
\section{Future Work [stub]}\label{future}

Control flow.  CVM vs LLVM. Observational equivalence.

Our result is merely asymptotic because of computational soundness. Would be better to try CryptoVerif for a more concrete bound.
\end{DRAFT}

\section{Related Work \status{good, 13.02.2011}}\label{related}

\begin{DRAFT}
Justify symbolic length.

Mention the difference from binary format and state machine extraction methods ().

Cedric's distributed computations work.

Mention the Prospex paper.

\url{http://academic.research.microsoft.com/Publication/13283454/a-symbolic-execution-framework-for-javascript}
\end{DRAFT}

\cite{CSur} presents the tool Csur for verifying C implementations of crypto-protocols by transforming them into a decidable subset of first-order logic. It only supports secrecy properties and relies on a Dolev-Yao attacker model. It was applied to a self-made implementation of the Needham-Schroeder protocol.
\cite{ASPIER} presents the verification framework ASPIER using predicate abstraction and model-checking which operates on a protocol description language where certain C concepts such as pointers and variable message lengths are manually abstracted away. In comparison, our method applies directly to C code including pointers and thus requires less manual effort.
\cite{JLW06} presents the C API ``DYC'' which can be used to generate executable protocol implementations of Dolev-Yao type cryptographic protocol messages. By generating constraints from those messages, one can use a constraint solver to search for attacks. The approach presents significant limitations on the C code.
\cite{udrea_rule-based_2006} reports on the Pistachio approach which verifies the conformance of an implementation with a specification of the communication protocol. It does not directly support the verification of security properties.
To prepare the ground for symbolic analysis of cryptographic protocol implementations, \cite{CM11} reports an extension of the KLEE symbolic execution tool. Cryptographic primitives can be treated as symbolic functions whose execution analysis is avoided. A security analysis is not yet supported. The main difference from our work is that \cite{CM11} treats every byte in a memory buffer separately and thus only supports buffers of fixed length.
\cite{DGJN11} shows how to adapt a general-purpose verifier to security verification of C code. This approach does not have our restriction to non-branching code, on the other hand, it requires the code to be annotated (with about one line of annotation per line of code) and works in the symbolic model, requiring the pairing and projection operations to be properly encapsulated.

There is also work on verifying implementations of security protocols in other high-level languages.
These do not compare directly to the work presented here, since our aim is in particular to be able to deal with the intricacies of a low-level language like C.
The tools FS2PV~\cite{bhargavan_verified_2006} and FS2CV translate {F\#} to the process calculi which can be verified by the tools ProVerif~\cite{DBLP:conf/csfw/Blanchet01} and CryptoVerif~\cite{DBLP:conf/sp/Blanchet06} versus symbolic and computational models, respectively. They have been applied to an implementation of TLS~\cite{bhargavan_cryptographically_2008}.
The refinement-type checker F7~\cite{BBF08} verifies security properties of {F\#} programs versus a Dolev-Yao attacker. Under certain conditions, this has been shown to be provably computationally sound~\cite{BMU10:CompSVSC,Fournet11}.
\cite{MukGorRya09} reports on a formal verification of a reference implementation of the TPM's authorization and encrypted transport session protocols in F\#. It also provides a translator from programs into the functional fragment of F\# into executable C code.
\cite{BMU10:CompSVSC} gives results on computational soundness of symbolic analysis of programs in the concurrent lambda calculus RCF. \cite{BacHriMaf11} reports on a type system for verifying crypto-protocol implementations in RCF.
With respect to Java, \cite{jrjens_security_2006} presents an approach which provides a Dolev-Yao formalization in FOL starting from the program's control-flow graph, which can then be verified for security properties with automated theorem provers for FOL (such as SPASS).
\cite{OShea08} provides an approach for translating Java implementations into formal models in the LySa process calculus in order to perform a security verification.
\cite{HubbersOoostdijkPollSPC03} presents an application of the ESC/Java2 static verifier to check conformance of JavaCard applications to protocol models. \cite{Cryptol} describes verification of cryptographic primitives implemented in a functional language Cryptol. CertiCrypt \cite{BGZ09} is a framework for writing machine-checked cryptographic proofs.

\section{Conclusion}

We presented methods and tools for the automated verification of cryptographic security properties of protocol implementations in C. More specifically, we provided a computationally sound verification of weak secrecy and authentication for (single execution paths of) C code. Despite the limitation of analysing single execution paths, the method often suffices to prove security of authentication protocols, many of which are non-branching. We plan to extend the analysis to more sophisticated control flow.

In future, we aim to provide better feedback in case verification fails. In our case this is rather easy to do as symbolic execution proceeds line by line. If a condition check fails for a certain symbolic expression, it is straightforward to print out a computation tree for the expression together with source code locations in which every node of the tree was computed. We plan to implement this feature in the future, although so far we found that manual inspection of the symbolic execution trace lets us identify problems easily.

\paragraph*{Acknowledgements}
Discussions with Bruno Blanchet,\linebreak Fran\c{c}ois Dupressoir, Bashar Nuseibeh, and Dominique Unruh were useful.
We also thank George Danezis, Fran\c{c}ois Dupressoir, and Jean Goubault-Larrecq for giving us access to the code of minexplib, RPC, and CSur, respectively.
Ricardo Corin and  Fran\c{c}ois Dupressoir commented on a draft.

\begin{DRAFT}
\section{Questions}

\begin{itemize}
      \item
            How to do starred arrows?
            
      \item
            Any good notation for ranges?
\end{itemize}
\end{DRAFT}

\bibliographystyle{abbrv}
\bibliography{CSec.M}

\begin{thebibliography}{10}

\bibitem{AF01}
M.~Abadi and C.~Fournet.
\newblock Mobile values, new names, and secure communication.
\newblock In {\em ACM POPL}, pages 104--115, 2001.

\bibitem{ABP09}
J.~B. Almeida, M.~Barbosa, J.~S. Pinto, and B.~Vieira.
\newblock Deductive verification of cryptographic software.
\newblock In {\em NASA Formal Methods Symposium 2009}, 2009.

\bibitem{AVISPA}
A.~Armando, D.~A. Basin, Y.~Boichut, Y.~Chevalier, L.~Compagna, J.~Cu{\'e}llar,
  P.~H. Drielsma, P.-C. H{\'e}am, O.~Kouchnarenko, J.~Mantovani,
  S.~M{\"o}dersheim, D.~von Oheimb, M.~Rusinowitch, J.~Santiago, M.~Turuani,
  L.~Vigan{\`o}, and L.~Vigneron.
\newblock The {AVISPA} tool for the automated validation of internet security
  protocols and applications.
\newblock In {\em CAV}, volume 3576 of {\em Lecture Notes in Computer Science},
  pages 281--285. Springer, 2005.

\bibitem{CoSP}
M.~Backes, D.~Hofheinz, and D.~Unruh.
\newblock {CoSP}: A general framework for computational soundness proofs.
\newblock In {\em ACM CCS 2009}, pages 66--78, November 2009.
\newblock Preprint on IACR ePrint 2009/080.

\bibitem{BacHriMaf11}
M.~Backes, C.~Hritcu, and M.~Maffei.
\newblock Union and intersection types for secure protocol implementations.
\newblock In {\em Theory of Security and Applications (TOSCA'11)}, 2011.

\bibitem{BMU10:CompSVSC}
M.~Backes, M.~Maffei, and D.~Unruh.
\newblock Computationally sound verification of source code.
\newblock In {\em CCS}, 2010.

\bibitem{BGZ09}
G.~Barthe, B.~Gr\'{e}goire, and S.~Zanella~B\'{e}guelin.
\newblock Formal certification of code-based cryptographic proofs.
\newblock In {\em Proceedings of the 36th annual ACM SIGPLAN-SIGACT symposium
  on Principles of programming languages}, POPL '09, pages 90--101, New York,
  NY, USA, 2009. ACM.

\bibitem{BBF08}
J.~Bengtson, K.~Bhargavan, C.~Fournet, A.~D. Gordon, and S.~Maffeis.
\newblock Refinement types for secure implementations.
\newblock In {\em CSF '08: Proceedings of the 2008 21st IEEE Computer Security
  Foundations Symposium}, pages 17--32. IEEE Computer Society, 2008.

\bibitem{bhargavan_cryptographically_2008}
K.~Bhargavan, C.~Fournet, R.~Corin, and E.~Z\u{a}linescu.
\newblock Cryptographically verified implementations for {TLS.}
\newblock Alexandria, {VA}, Oct. 2008. {ACM}.

\bibitem{bhargavan_verified_2006}
K.~Bhargavan, C.~Fournet, A.~D. Gordon, and S.~Tse.
\newblock Verified interoperable implementations of security protocols.
\newblock In {\em {CSFW} '06: Proceedings of the 19th {IEEE} workshop on
  Computer Security Foundations}, pages 139--152. {IEEE} Computer Society,
  2006.

\bibitem{DBLP:conf/csfw/Blanchet01}
B.~Blanchet.
\newblock An efficient cryptographic protocol verifier based on prolog rules.
\newblock In {\em CSFW}, pages 82--96. IEEE Computer Society, 2001.

\bibitem{DBLP:conf/sp/Blanchet06}
B.~Blanchet.
\newblock A computationally sound mechanized prover for security protocols.
\newblock In {\em IEEE Symposium on Security and Privacy}, pages 140--154. IEEE
  Computer Society, 2006.

\bibitem{ProVerif}
B.~Blanchet.
\newblock Automatic verification of correspondences for security protocols.
\newblock {\em Journal of Computer Security}, 17(4):363--434, 2009.

\bibitem{BAF08}
B.~Blanchet, M.~Abadi, and C.~Fournet.
\newblock Automated verification of selected equivalences for security
  protocols.
\newblock {\em Journal of Logic and Algebraic Programming}, 75(1):3--51,
  Feb.--Mar. 2008.

\bibitem{BM84}
M.~Blum and S.~Micali.
\newblock How to generate cryptographically strong sequences of pseudo-random
  bits.
\newblock {\em SIAM J. Comput.}, 13(4):850--864, 1984.

\bibitem{BS08}
J.~Burnim and K.~Sen.
\newblock Heuristics for scalable dynamic test generation.
\newblock In {\em ASE '08: Proceedings of the 2008 23rd IEEE/ACM International
  Conference on Automated Software Engineering}, pages 443--446. IEEE Computer
  Society, 2008.

\bibitem{KLEE}
C.~Cadar, D.~Dunbar, and D.~Engler.
\newblock Klee: Unassisted and automatic generation of high-coverage tests for
  complex systems programs.
\newblock In {\em USENIX Symposium on Operating Systems Design and
  Implementation (OSDI 2008)}, San Diego, CA, Dec. 2008.

\bibitem{ASPIER}
S.~Chaki and A.~Datta.
\newblock Aspier: An automated framework for verifying security protocol
  implementations.
\newblock In {\em Computer Security Foundations Workshop}, pages 172--185,
  2009.

\bibitem{CM11}
R.~Corin and F.~A. Manzano.
\newblock Efficient symbolic execution for analysing cryptographic protocol
  implementations.
\newblock In {\em International Symposium on Engineering Secure Software and
  Systems (ESSOS'11)}, LNCS. Springer, 2011.

\bibitem{DGJN11}
F.~Dupressoir, A.~D. Gordon, J.~J\"urjens, and D.~A. Naumann.
\newblock Guiding a general-purpose {C} verifier to prove cryptographic
  protocols.
\newblock In {\em 24th IEEE Computer Security Foundations Symposium}, 2011.

\bibitem{yices}
B.~Dutertre and L.~D. Moura.
\newblock {The Yices SMT Solver}.
\newblock Technical report, 2006.

\bibitem{Cryptol}
L.~Erk{\"o}k, M.~Carlsson, and A.~Wick.
\newblock Hardware/software co-verification of cryptographic algorithms using
  cryptol.
\newblock In {\em FMCAD}, 2009.

\bibitem{Fournet11}
C.~Fournet.
\newblock Cryptographic soundness for program verification by typing.
\newblock Unpublished draft, 2011.

\bibitem{SAGE}
P.~Godefroid, M.~Y. Levin, and D.~A. Molnar.
\newblock Automated whitebox fuzz testing.
\newblock In {\em Proceedings of the Network and Distributed System Security
  Symposium, NDSS 2008, San Diego, California, USA, 10th February - 13th
  February 2008}. The Internet Society, 2008.

\bibitem{GM84}
S.~Goldwasser and S.~Micali.
\newblock Probabilistic encryption.
\newblock {\em Journal of Computer and System Sciences}, 28:270--299, 1984.

\bibitem{CSur}
J.~{Goubault-Larrecq} and F.~Parrennes.
\newblock Cryptographic protocol analysis on real {{C}} code.
\newblock In {\em Proceedings of the 6th International Conference on
  Verification, Model Checking and Abstract Interpretation {(VMCAI'05)}},
  volume 3385 of {\em Lecture Notes in Computer Science}, pages 363--379.
  Springer, 2005.

\bibitem{HubbersOoostdijkPollSPC03}
E.~Hubbers, M.~Oostdijk, and E.~Poll.
\newblock Implementing a formally verifiable security protocol in {Java} card.
\newblock In {\em Security in Pervasive Computing, First International
  Conference, Revised Papers}, volume 2802 of {\em Lecture Notes in Computer
  Science}, pages 213--226. Springer, 2004.

\bibitem{JLW06}
A.~Jeffrey and R.~Ley-Wild.
\newblock Dynamic model checking of {C} cryptographic protocol implementations.
\newblock In {\em Proceedings of Workshop on Foundations of Computer Security
  and Automated Reasoning for Security Protocol Analysis}, 2006.

\bibitem{jrjens_security_2006}
J.~{J}{\"{u}}{r}jens.
\newblock Security analysis of crypto-based {{Java}} programs using automated
  theorem provers.
\newblock In {\em {ASE} '06: Proceedings of the 21st {IEEE/ACM} International
  Conference on Automated Software Engineering}, pages 167--176. {IEEE}
  Computer Society, 2006.

\bibitem{Kin76}
J.~C. King.
\newblock Symbolic execution and program testing.
\newblock {\em Commun. ACM}, 19(7):385--394, 1976.

\bibitem{Compcert}
X.~Leroy.
\newblock A formally verified compiler back-end.
\newblock {\em J. Autom. Reason.}, 43:363--446, December 2009.

\bibitem{Low95}
G.~Lowe.
\newblock An attack on the {N}eedham-{S}chroeder public-key authentication
  protocol.
\newblock {\em Inf. Process. Lett.}, 56:131--133, November 1995.

\bibitem{MukGorRya09}
A.~Mukhamedov, A.~D. Gordon, and M.~Ryan.
\newblock Towards a verified reference implementation of the trusted platform
  module.
\newblock In {\em 17th International Workshop on Security Protocols (2009)},
  LNCS. Springer, 2011.
\newblock To appear.

\bibitem{CIL}
G.~C. Necula, S.~McPeak, S.~P. Rahul, and W.~Weimer.
\newblock {CIL: Intermediate Language and Tools for Analysis and Transformation
  of C Programs}.
\newblock In {\em Proceedings of the 11th International Conference on Compiler
  Construction}, CC '02, pages 213--228, London, UK, 2002. Springer-Verlag.

\bibitem{OShea08}
N.~O'Shea.
\newblock Using {Elyjah} to analyse {Java} implementations of cryptographic
  protocols.
\newblock In {\em FCS-ARSPA-WITS'08}, pages 211--223, 2008.

\bibitem{spore07}
{Project EVA}.
\newblock Security protocols open repository, 2007.
\newblock \url{http://www.lsv.ens-cachan.fr/spore/}.

\bibitem{RD10}
A.~Rial and G.~Danezis.
\newblock Privacy-friendly smart metering.
\newblock Technical Report MSR--TR--2010--150, 2010.

\bibitem{SCF+10}
N.~Swamy, J.~Chen, C.~Fournet, K.~Bharagavan, and J.~Yang.
\newblock Security programming with refinement types and mobile proofs.
\newblock Technical Report MSR--TR--2010--149, 2010.

\bibitem{udrea_rule-based_2006}
O.~Udrea, C.~Lumezanu, and J.~S. Foster.
\newblock {Rule-Based} static analysis of network protocol implementations.
\newblock {\em {IN} {PROCEEDINGS} {OF} {THE} {15TH} {USENIX} {SECURITY}
  {SYMPOSIUM}}, pages 193--208, 2006.

\bibitem{Unr10}
D.~Unruh.
\newblock The impossibility of computationally sound {XOR}, July 2010.
\newblock Preprint on IACR ePrint 2010/389.

\bibitem{Yao82}
A.~C. Yao.
\newblock Theory and application of trapdoor functions.
\newblock In {\em SFCS '82: Proceedings of the 23rd Annual Symposium on
  Foundations of Computer Science}, pages 80--91. IEEE Computer Society, 1982.

\end{thebibliography}

\appendix

\section{C to CVM---Example}\label{c-to-cvm-example}

 \Cref{fig:c-to-cvm} shows the CVM translation of the example program from \cref{fig:example}. We use abbreviations for some useful instruction sequences: we write \cvm{Clear} as an abbreviation for \cvm{Store dummy} that stores a value into an otherwise unused dummy variable. The effect of \cvm{Clear} is thus to remove one value from the stack. Often we do not need the length of the result that the instructions \cvm{Env} and \cvm{Apply} place on the stack, so we introduce the versions \cvm{Env'} and \cvm{Apply'} that discard the length: \cvm{Env' $v$} is an abbreviation for \cvm{Env $v$; Clear} and \cvm{Apply' $v$} is an abbreviation for \cvm{Apply $v$; Clear}. The abbreviation \cvm{Varsize} is supposed to load the variable width $N$ onto the stack, for instance, on an architecture with $N = 32$ the meaning of \cvm{Varsize} would be \cvm{Const i32}. For convenience we write operation arguments of \cvm{Apply} together with their arities. 

During the translation we arbitrarily choose fresh variables $l$, $x_1$, and $x_2$ for use in the \cvm{In} operations. %

\begin{figure}[h]
\small
      \begin{lstlisting}[language = cvm, gobble = 12]
            //void * key; size_t keylen;
            //readenv("k", &key, &keylen);
            Env k; Ref keylen; Store; 
            Ref keylen; Varsize; Load; Malloc;
            Ref key; Store; 
            Ref key; Varsize; Load; Store; 
            //size_t len;
            //read(&len, sizeof(len));
            Varsize; In l read; Ref len; Store;
            // if(len > 1000) exit();
            Const i1000; Ref len; Varsize; Load;
            Apply' >/2; Apply' $\neg$/1; Test;
            //void * buf = malloc(len + 2 * 20);
            Ref len; Varsize; Load;
            Const i2; Const i20;
            Apply' */2; Apply' +/2;
            Malloc; Ref buf; Store;
            //read(buf, len);
            Ref len; Varsize; Load; In x1 read;
            Ref buf; Varsize; Load; Store;
            //mac(buf, len, key, keylen, buf + len);
            Ref buf; Varsize; Load; 
            Ref len; Varsize; Load; Load;
            Ref key; Varsize; Load; 
            Ref keylen; Varsize; Load; Load;
            Apply' mac/2;
            Ref buf; Varsize; Load;
            Ref len; Varsize; Load; 
            Apply' +/2; Store;
            //read(buf + len + 20, 20);
            Const i20; In x2 read;
            Ref buf; Varsize; Load;
            Ref len; Varsize; Load;
            Const i20;
            Apply' +/2; Apply' +/2; Store;
            //if(memcmp(buf + len, 
            //         buf + len + 20, 
            //         20) == 0)
            Ref buf; Varsize; Load;
            Ref len; Varsize; Load; Apply' +/2;
            Const i20; Load;
            Ref buf; Varsize; Load;
            Ref len; Varsize; Load;
            Const i20; Apply' +/2; Apply' +/2; 
            Const i20; Load;
            Apply' cmp/2;
            Const 0; Apply' ==/2; Test;
            //  event("accept", buf, len);
            Ref buf; Varsize; Load;
            Ref len; Varsize; Load; Load;
            Event;
      \end{lstlisting}
      \caption{Translation of the example C program (\cref{fig:example}) into CVM.}
      \label{fig:c-to-cvm}
\end{figure}

\begin{FULL}

\section{Protocol Transition Systems \status{good, 27.05.2010}}\label{pts}

This section establishes the definition of security that we use in the paper and gives some sufficient conditions under which a protocol transformation (as done, for instance, by translating from a description of a protocol in C to a description in a more abstract language) preserves security.

In order to define security for a program we first need to define the protocol that the program implements. The notion of a protocol is formally captured by a \emph{protocol transition system (PTS)}, defined as follows: a PTS is a triple $(S, s_I, \to)$, where $S$ is a set, $s_I \in S$ and $\to$ is a labelled transition relation with transitions of the form
\[(\eta, s) \xrightarrow{l} \{(\eta_1, s_1), \ldots, (\eta_n, s_n)\},\]
where $\eta$ and $\eta_i$ are valuations, $s, s_i \in S$, and the right hand side is a nonempty multiset. We call a pair $(\eta, s)$ an \emph{executing process} and think of $\eta$ as an environment in which $s$ executes. We require that each executing process is of one of the following types:
\begin{itemize}
      \item 
            a \emph{reading process}, in which case all outgoing labels are of the form $\mcode{read}\ b$ with $b \in BS$,
            
      \item 
            a \emph{control process}, in which case all outgoing labels are of the form $\mcode{ctr}\ b$ with $b \in BS$,
            
      \item
            a \emph{randomising process}, in which case all outgoing labels are of the form $\mcode{rnd}\ b$ with $b \in BS$ and all $b$ have the same length,
            
      \item
            a \emph{writing process}, in which case there is a single outgoing transition with label of the form $\mcode{write}\ b$ with $b \in BS$,
            
      \item
            an \emph{event process}, in which case there is a single outgoing transition with label of the form $\mcode{event}\ b$ with $b \in BS$.
\end{itemize}
We require that the transition relation is deterministically computable: there should exist a probabilistic algorithm that
\begin{itemize}
      \item 
            given a left hand side which is a reading or a control process and a label computes the right hand side (in particular, the right hand side is uniquely determined),
            
      \item
            given a left hand side which is a randomising process chooses one of the admissible outgoing labels uniformly at random and computes the right hand side,
            
      \item
            given a left hand side which is a writing or an event process computes the outgoing label and the right hand side,
            
      \item
            given inputs for which there is no transition, or malformed inputs, returns \emph{``wrong''}.
      
\end{itemize}
The semantics of languages that we use (CVM and IML) will be given as a function from programs to PTS. 

We now define protocol states and show how they evolve. Intuitively a protocol state is just a collection of executing processes. The attacker repeatedly chooses one of the processes, which is then allowed to perform a transition according to the PTS rules. The executing processes are assigned handles so that the attacker can refer to them. A handle is a sequence of all observable transitions that have been performed by the process so far---this way the handle contains all the information that the attacker has about a process.

Formally, an \emph{observation} is either an integer or one of the reading, control, or writing labels. A \emph{process history} is a sequence of observations. A \emph{protocol state} over a PTS $T$ is a partial map from process histories to executing processes over $T$. We extend the transition relation of $T$ to a transition relation over protocol states as follows: Let $\PS$ be a protocol state and $h \in \dom(\PS)$ a process history such that $T$ contains a transition of the form
\[\PS(h) \xrightarrow{l} \{(\eta_1, s_1), \ldots, (\eta_n, s_n)\}\]
Let 
\[\PS' = \PS_{- h}\Set{h o i  \mapsto (\eta_i, s_i) | 1 \leq i \leq n},\]
where $o = l$ if $l$ is an observation and $o = \emptybs$ otherwise, and we use an abbreviation $f_{-x} = f\{x \mapsto \bot\}$. Then there is a transition $\PS \xrightarrow{c,\, a} \PS'$ between protocol states $\PS$ and $\PS'$ with a \emph{command} $c$ and an \emph{action} $a$, where 
\begin{itemize}
      \item 
            $c = (h, l)$ and $a = \emptybs$ if $l$ is a control label, or a read label,
            
      \item
            $c = (h, \emptybs)$ and $a = l$ if $l$ is a a randomising label, a write label, or an event label.
\end{itemize}
Given an initial protocol state and a command, the action and the resulting state are computable by the assumption that the underlying PTS transitions are computable. We extend the definition to multiple transitions and write $\PS \xrightarrow{c_1\ldots c_n,\, a_1 \ldots a_m}{\!\!\!}^*{\,\,} \PS'$, iff there is a sequence of transitions leading from $\PS$ to $\PS'$ with commands $c_1, \ldots, c_n$ and actions $a_1, \ldots, a_m$.

We shall be interested in the sequence of events raised by a protocol in the presence of an attacker. The execution of a protocol is defined as follows:

\begin{definition}[Protocol execution]\label{compex}
      Given a PTS $T = (S, s_I, \to)$ and an interactive probabilistic machine $E$ (an attacker) we define the execution of the protocol $T$ as a probabilistic machine $\exec(T, E)$ that proceeds as follows:
      
      Maintain a protocol state $\PS$. Initially $\PS = \{\emptybs \mapsto (\emptyset, s_I)\}$.
      Keep receiving commands from the attacker and for each command $c$
      \begin{itemize}
            \item
                  compute a transition $\PS \xrightarrow{c,\, a} \PS'$ and set $\PS := \PS'$. If no such transition exists or if the command is malformed, terminate,

            \item
                  if $a = \mcode{write}\ b$, send $b$ to the attacker,
                  
            \item
                  if $a = \mcode{event}\ b$, raise event $b$.
      \end{itemize}
\end{definition}

We shall assume that $\exec(T, E)$ uses the most efficient algorithm to compute the PTS transitions. For a PTS $T$, an attacker $E$, and a resource bound $t \in \N$ let $\ptsevents(T, E, t)$ be the sequence of events raised by the execution of $\exec(T, E)$ during the first $t$ elementary computation steps (each protocol transition will typically involve multiple steps). We define a \emph{trace property} as a polynomially decidable prefix-closed set of event sequences. This leads us to the definition of security for protocols:

\begin{restate}{\cref{secdef}}
      For a PTS $T$, a trace property $\rho$ and a resource bound $t \in \N$ let
      \begin{align*}
            &\insec(T,\rho,t) \\
            &\;\; = \sup \Set{\prob{\ptsevents(T, E, t) \notin \rho} | E\ \text{attacker},\, \lbars{E} \leq t},
      \end{align*}
      where $\lbars{E}$ is the size of the description of the attacker. 
\end{restate}

Intuitively $\insec(T, \rho, t)$ measures the success probability of the most successful attack against $T$ and property $\rho$ when both the execution time of the attack and the size of the attacker code are bounded by $t$.

In the following we define a simulation relation on PTS that preserves security. This relation will be used as a tool to relate the security of a protocol described by a low-level CVM program $P$ to the security of a protocol described by a more abstract IML process $\tilde P$ that results from the symbolic execution of $P$. 

In the definition of the simulation relation we shall refer to a slightly generalised notion of the protocol execution, parameterised by the initial environment: given a PTS $T$ with an initial state $s_I$, an attacker $E$, and a valuation $\eta$, let $\exec_\eta(T, E)$ be the machine that executes like $\exec(T, E)$, but starts with $\{\emptybs \mapsto (\eta, s_I)\}$ as the initial state. 

We shall be interested in PTS in which the number of steps to reach a certain state is independent of how it is reached, as captured by the following definition:

\begin{definition}[History-independent PTS]\label{history-independent}
      A PTS $T$ is called \emph{history-independent} iff, whenever for any valuation $\eta$ and attackers $E$ and $\tilde E$ the machine $\exec_\eta(T, E)$ reaches a protocol state $\PS$ in $t$ non-attacker steps and the machine $\exec_\eta(T, \tilde E)$ reaches $\PS$ in $\tilde t$ non-attacker steps, $\tilde t = t$.
      
      Given a history-independent PTS $T$, a protocol state $\PS$ over $T$ and a valuation $\eta$ we say that \emph{$T$ reaches $\PS$ from $\eta$ in $t$ steps} iff $t$ is the number of non-attacker steps in which $\exec_\eta(T, E)$ reaches $\PS$ for some attacker $E$.
      
\end{definition}

For the PTS defined in this paper we shall ensure history-independence by recording enough information in the state to be able to reconstruct the set of transitions that lead into that state. 

Intuitively we shall say that a PTS $\tilde T$ simulates a PTS $T$ when an attacker has a way of playing against $\tilde T$ in such a way that it solicits the same sequence of actions as when playing against $T$. In other words, given an execution trace of $T$, it should be feasible to reconstruct an execution trace of $\tilde T$ with the same sequence of actions. The only additional complication is that the reconstruction should happen on-line, that is, the translation of a prefix of a trace should not depend on what follows the prefix. This corresponds to the fact that the attacker cannot see into the future. We achieve the on-line property by demanding that there is an equivalence relation between protocol states of $T$ and $\tilde T$ such that for each transition from $\PS$ to $\PS'$ in $T$ and a state $\tilde \PS$ equivalent to $\PS$ there is a transition from $\tilde \PS$ to $\tilde \PS'$ in $\tilde T$ such that $\tilde \PS'$ is equivalent to $\PS'$. Most of the technicalities of the definition deal with placing restrictions on the computability of these transitions.

\begin{definition}[Simulation relation on PTS]\label{simulation}
      For a polynomial $p$ we say that PTS $\tilde T$ with initial state $\tilde s_I$ \emph{$p$-simulates} a PTS $T$ with initial state $s_I$, writing $T \lesssim_p \tilde T$ iff both $T$ and $\tilde T$ are history-independent and there exists a relation $\lesssim$ between protocol states of $T$ and protocol states of $\tilde T$ and a partial map $\tau$ from commands to sequences of commands such that
      \begin{enumerate}
            \item \label{simulation-initial}
                  for all valuations $\eta$ 
                  \[\{\emptybs \mapsto (\eta, s_I)\} \lesssim \{\emptybs \mapsto (\eta, \tilde s_I)\},\]
                  
            \item \label{simulation-step}
                  if $\PS \lesssim \tilde \PS$ and there exists a transition $\PS \xrightarrow{c,\, a} \PS'$ with a command $c$ and an action $a$ then there exists a protocol state $\tilde \PS'$ of $\tilde T$ such that $\PS' \lesssim \tilde \PS'$ and $\tilde \PS \xrightarrow{\tau(c),\, a}{\!\!\!}^*{\,\,\,} \tilde \PS'$,

            \item \label{simulation-e-efficiency}
                  $\tau(c)$ is computable in $p(\lbars{c} + \lbars{s_I})$ steps,
                  
            \item \label{simulation-t-efficiency}
                  if $\PS \lesssim \tilde \PS$ and for some valuation $\eta$ $T$ reaches $\PS$ from $\eta$ in $t$ steps and $\tilde T$ reaches $\tilde \PS$ from $\eta$ in $\tilde t$ steps then $\tilde t \leq p(t)$.
      \end{enumerate}
\end{definition}

\begin{theorem}[Preservation of security by simulation]\label{simsec}
      For every polynomial $p$ there exists a polynomial $p'$ such that whenever $T \lesssim_p \tilde T$ for PTS $T$ and $\tilde T$, for any trace property $\rho$ and resource bound $t \in \N$
      \[\insec(T, \rho, t) \leq \insec(\tilde T, \rho, p'(t)).\]
\end{theorem}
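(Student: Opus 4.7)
\textbf{Proof plan for \cref{simsec}.}

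The plan is a standard simulator argument: given an attacker $E$ against $T$, build an attacker $\tilde E$ against $\tilde T$ that produces an identically-distributed sequence of events, then bound its size and running time by a polynomial in $t$ using conditions (\ref{simulation-e-efficiency}) and (\ref{simulation-t-efficiency}) of \cref{simulation}. First I would fix any $E$ with $\lbars{E} \le t$ and define $\tilde E$ as the machine that runs $E$ internally, intercepts every command $c$ that $E$ issues, replaces it with the sequence $\tau(c)$ forwarded verbatim to $\tilde T$, and relays the resulting actions (writes and random labels) back to $E$. Because $\tau$ is a fixed partial map determined by the simulation relation, $\lbars{\tilde E} \le \lbars{E} + O(1)$, so $\lbars{\tilde E}$ is bounded by $t + c$ for a constant $c$ depending only on $\lesssim$.

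Next I would establish that the event distributions coincide. By induction on the length of the execution, using conditions (\ref{simulation-initial}) and (\ref{simulation-step}), I maintain the invariant that after $\tilde E$ has served the first $k$ commands of $E$, the current protocol states $\PS_k$ of $T$ and $\tilde \PS_k$ of $\tilde T$ satisfy $\PS_k \lesssim \tilde \PS_k$, and every action emitted so far by $\tilde T$ agrees exactly with the corresponding action of $T$. The only probabilistic step is a randomising transition: here condition (\ref{simulation-step}) guarantees that for each admissible random label $a$ in $T$ there is a matching sequence in $\tilde T$ ending with the same label $a$, and by determinism of both PTSes outside the randomising step the conditional distribution of actions in $\tilde T$ given a fixed $a$ matches that in $T$. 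Hence the sequence $\ptsevents(\tilde T, \tilde E, \cdot)$ is distributed exactly as $\ptsevents(T, E, \cdot)$, provided $\tilde E$ is allowed enough steps to finish simulating $E$'s $t$-step execution.

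The remaining task is to choose $p'$ so that the $t$-step execution of $E$ against $T$ is completed within $p'(t)$ steps of $\tilde E$ against $\tilde T$. The cost of $\tilde E$'s $k$-th step decomposes into (i) the cost of simulating one step of $E$, which is bounded by $t$; (ii) the cost of computing $\tau(c)$, which by condition (\ref{simulation-e-efficiency}) is at most $p(\lbars{c} + \lbars{s_I})$ and hence polynomial in $t$; and (iii) the cost of the corresponding non-attacker steps in $\tilde T$. By history-independence and condition (\ref{simulation-t-efficiency}), the total number of non-attacker steps of $\tilde T$ up to reaching $\tilde \PS_k$ is at most $p(\tau_k)$, where $\tau_k$ is the number of non-attacker steps of $T$ up to $\PS_k$; summing over the $t$-bounded execution of $E$ this gives a polynomial bound $p'(t)$ that depends only on $p$ and fixed constants. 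Then $\insec(T,\rho,t) \le \prob{\ptsevents(\tilde T, \tilde E, p'(t)) \notin \rho} \le \insec(\tilde T, \rho, p'(t))$ since $\lbars{\tilde E} \le p'(t)$.

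The main obstacle I anticipate is a clean probabilistic accounting at the randomising transitions: one must check that translating a single randomising command $c$ into $\tau(c)$ does not silently introduce extra randomness on the $\tilde T$ side whose distribution differs from the original. This is where the assumption that the right-hand side of \cref{simulation-step} is reached by a specific sequence $\tau(c)$ \emph{with the same action} $a$ is crucial, and one has to verify that the computability assumptions on PTS transitions make the joint distribution of $(\text{sampled } a, \text{resulting } \tilde\PS')$ equal to that of $(a, \PS')$ under $\lesssim$. Once this coupling argument is in place, the rest of the proof is a routine polynomial bookkeeping yielding the desired $p'$.
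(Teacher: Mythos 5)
Your proposal is correct and follows essentially the same route as the paper: wrap $E$ in a simulator $\tilde E$ that translates each command $c$ into $\tau(c)$, show by induction on conditions (\ref{simulation-initial})--(\ref{simulation-step}) that the related protocol states and emitted actions coincide, and then use (\ref{simulation-e-efficiency})--(\ref{simulation-t-efficiency}) for the polynomial bookkeeping. The only difference is that the paper dispatches the probabilistic coupling you flag as the ``main obstacle'' more directly, by running both executions on the same random tape $R$ and observing that the same portion of $R$ is consumed, rather than arguing equality of conditional distributions at each randomising transition.
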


\begin{proof}
Let $T \lesssim_p \tilde T$ for PTS $T$ and $\tilde T$ and a polynomial $p$. Given an attacker $E$ we shall construct an attacker $\tilde E$ such that whenever the machine $\exec(T, E)$ produces a sequence of events $es$ within the first $t$ steps when running with random tape $R$, the machine $\exec(\tilde T, \tilde E)$ produces the sequence $es$ within at most $p'(t)$ steps when running with $R$, where $p'$ is a polynomial depending on $p$. Thus, given that $\rho$ is defined to be prefix-closed, any violation of $\rho$ happening in $T$ will happen in $\tilde T$ with at least the same probability.

The attacker $\tilde E$ shall run an instance of $E$ and iterate as follows: 
\begin{itemize}
      \item
            Receive a sequence $c_1\ldots c_m$ of commands from $E$ and output $\tau(c_1)\ldots \tau(c_m)$,
      
      \item
            Forward any input to $E$.
\end{itemize}

Let $M$ be the state of the machine $\exec(T, E)$ running with random tape $R$ after having processed commands $c_1\ldots c_n$ and $\tilde M$ the state of the machine $\exec(\tilde T, \tilde E)$ running with $R$ after having processed commands $\tau(c_1)\ldots \tau(c_n)$. By induction using (\ref{simulation-initial})--(\ref{simulation-step}) in \cref{simulation} we can show:
\begin{itemize}
      \item 
            if $\PS$ is the protocol state contained in $M$ and $\tilde \PS$ is the protocol state contained in $\tilde M$ then $\PS \lesssim \tilde \PS$, 
            
      \item
            the instance of $E$ run by $\tilde E$ in $\tilde M$ has executed the same computations as the instance of $E$ in $M$, the same portion of $R$ has been consumed, and the same sequence of events has been raised.
\end{itemize}

To bound the execution time of $\tilde M$ assume that $M$ has executed $t$ steps and $\tilde M$ has executed $\tilde t$ steps. Let $t = t_E + t_T$, where $t_E$ is the number of steps executed by the attacker and $t_T$ is the number of non-attacker steps. Similarly split $\tilde t = \tilde t_E + \tilde t_T$. The attacker $\tilde E$ runs an instance of $E$ which takes time $O(t_E)$ and additionally issues $n$ queries to $\tau$. According to (\ref{simulation-e-efficiency}) in \cref{simulation} the runtime of each query is bounded by $p(\lbars{c_{max}} + \lbars{s_I})$, where $c_{max}$ is the longest command received from $E$. Both $n$ and $\lbars{c_{max}}$ are bounded by $t_E$ and $\lbars{s_I}$ is bounded by $t_T$ as $\exec(T, E)$ needs to construct the initial state. Overall
\begin{align*}
      \tilde t_E & \leq O(t_E) + n \cdot p(\lbars{c_{max}} + \lbars{s_I}) \\
            & \leq O(t_E) + t_E \cdot p(t_E + t_T).
\end{align*}

According to (\ref{simulation-t-efficiency}) in \cref{simulation} $\tilde t_T \leq p(t_T)$. We conclude that $\tilde t \leq t \cdot p(t) + p(t) + O(t)$.

\end{proof}

We shall be interested in executing a PTS in the context of another PTS. This is useful for modelling: we shall specify the threat model for a CVM program by embedding it as a subprocess within an IML process. This way we can formally define a setting with multiple threads and shared key creation and distribution without having to add process control primitives to CVM itself. An important property of embedding that we define is that it preserves the simulation relation. In order to define the embedding we start by adding holes to PTS:

\begin{definition}[PTS with a hole]\label{pts-with-hole}
      Given a polynomial $p$ we define a \emph{PTS with a hole identifiable in $p$-time} as a history-independent PTS with initial state $s_I$ that contains a special state $[]$ such that there are no transitions from $[]$ and such that there exists an algorithm that, given a process history $h$, runs in time $p(\lbars{h} + \lbars{s_I})$ and decides whether $h$ is a \emph{history of a hole}, that is, whether for all protocol states $\PS$ reachable from some environment $\eta$ and such that $h \in \dom(\PS)$ the process $\PS(h)$ is of the form $(\eta', [])$ with some environment $\eta'$.
\end{definition}

The definition intuitively states that the attacker must have an efficient means to decide whether a process is a hole given the observable history of the process. We can now proceed to defining the embedding:

\begin{definition}[Embedding of PTS]\label{embedding}
Given a PTS with a hole $T_E = (S_E, s_{IE}, \to_E)$ and a PTS $T = (S, s_I, \to)$ we define the \emph{embedding} $T_E[T]$ of $T$ within $T_E$ by
\[T_E[T] = (((S_E \setminus []) \times \{s_I\}) \cup S,\, (s_{IE}, s_I),\, \to_{E}' \cup \to),\]
where $\to_{E}'$ is obtained from $\to_E$ by replacing each occurrence of $s \in S_E \setminus []$ by $(s, s_I)$ and by replacing each occurrence of $\hole$ with $s_I$.
\end{definition}

\begin{theorem}[Simulation and embedding]\label{embedsim}
      For each two polynomials $p$ and $p'$ there exists a polynomial $p''$ such that if $T$ and $\tilde T$ are PTS with $T \lesssim_{p} \tilde T$ and $T_E$ is a PTS with a hole identifiable in $p'$-time then $T_E[T] \lesssim_{p''} T_E[\tilde T]$. 
\end{theorem}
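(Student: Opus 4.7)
The plan is to lift the witnessing data $(\lesssim, \tau)$ for $T \lesssim_p \tilde T$ to new data $(\lesssim'', \tau'')$ that witnesses $T_E[T] \lesssim_{p''} T_E[\tilde T]$. At the level of protocol states, I would define $\PS \lesssim'' \tilde \PS$ to hold iff $\dom(\PS) = \dom(\tilde \PS)$ and, for every history $h$ in the common domain, either (a) $h$ is not an extension of any history of a hole of $T_E$, in which case $\PS(h) = \tilde \PS(h)$ is an executing process of $T_E$, or (b) $h = h_0 h_1$ where $h_0$ is a history of a hole and $h_1$ is an ``inner'' history produced by the embedded PTS, in which case the restrictions $\PS|_{h_0}$ and $\tilde \PS|_{h_0}$ (the protocol states obtained by collecting all processes with $h_0$ as prefix and stripping that prefix) satisfy $\PS|_{h_0} \lesssim \tilde \PS|_{h_0}$. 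By induction on the length of the execution, every reachable state of $T_E[T]$ has this shape, so the relation is well-defined.

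For the command translation I would set $\tau''((h, l)) = (h, l)$ when $h$ is not past any hole boundary, and $\tau''((h_0 h_1, l)) = (h_0 \tau_1, l_0)(h_0 \tau_2, l_0') \ldots$ where $\tau(h_1, l)$ is the inner translation $(h_1 \tau_1, l_0)\ldots$, i.e.\ the inner translation is applied after re-prepending the outer prefix $h_0$. The case split is decided by walking along $h$ and invoking the hole-identifier of $T_E$ on successive prefixes; by assumption each invocation costs $p'(\lbars{h}+\lbars{s_{IE}})$ and there are at most $\lbars{h}$ of them, and if a hole prefix is found we additionally spend the cost of inner $\tau$, giving a total bounded by some polynomial $p''$ in $\lbars{(h,l)}$ and $\lbars{(s_{IE}, s_I)}$, built from $p$ and $p'$ and an $O(n^2)$ overhead for the scan.

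Verifying the four clauses of \cref{simulation}: the initial state clause is immediate since $\{\emptybs \mapsto (\eta, (s_{IE}, s_I))\}$ contains only an outer process, which falls in case (a) trivially. For the step clause I would do a case analysis on the firing process: if the active process lies in the outer part of $T_E[T]$, then by definition of the embedding both $T_E[T]$ and $T_E[\tilde T]$ perform exactly the same $T_E$-transition (note that a hole is never itself the firing process, since it has no outgoing transitions), and the resulting states remain related; if the active process lies in the inner part, apply clause (\ref{simulation-step}) of the inner simulation to the restriction $\PS|_{h_0}$ to obtain a matching sequence of transitions in $\tilde T$, and re-prepend $h_0$ to get the required transition sequence in $T_E[\tilde T]$. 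Efficiency of $\tau''$ has already been discussed. For the time-efficiency clause (\ref{simulation-t-efficiency}), a state reachable in $t$ steps contains at most $t$ processes and at most $t$ symbols per history; the outer portion is reached in the same number of steps in both systems (by history-independence of $T_E$), and each inner restriction is blown up by at most a factor of $p$, so the total reach-time in $T_E[\tilde T]$ is bounded by some polynomial in $t$ built from $p$ and $p'$.

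The main obstacle I anticipate is the bookkeeping around history-independence of the embedding: one must argue that $T_E[T]$ and $T_E[\tilde T]$ are themselves history-independent so that ``number of steps to reach $\PS$'' is well-defined, and that the inner and outer step counts compose additively with only polynomial overhead. The hole-identifiability in $p'$-time is essential here, because without a uniform way to recognise hole prefixes from a history alone, neither $\tau''$ nor the case analysis in the step clause could be performed in polynomial time. Once these pieces are assembled, taking $p''$ to be a fixed polynomial combination of $p$, $p'$, and $\lambda n.\,n^2$ yields the claim.
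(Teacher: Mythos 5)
Your overall strategy is the one the paper uses: decompose each state of $T_E[T]$ into an outer $T_E$ portion and inner embedded portions, relate the inner portions by the given $\lesssim$, translate a command by detecting a hole prefix $h_0$ via the $p'$-time identifier, applying the inner $\tau$ to the stripped command, and re-prepending $h_0$; the efficiency and reach-time accounting are also the same. However, your definition of $\lesssim''$ contains a requirement that makes the step clause fail as written: you demand $\dom(\PS) = \dom(\tilde\PS)$. The inner simulation $\lesssim$ does \emph{not} preserve history domains --- $\tau$ maps one command to a possibly empty or longer \emph{sequence} of commands with different observations (e.g.\ in the CVM-to-IML simulation many concrete steps translate to no IML step at all, so the $T$-side history grows while the $\tilde T$-side history does not). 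Hence after an inner transition the two protocol states have different domains, no $\tilde\PS'$ with $\PS'\lesssim''\tilde\PS'$ exists, and clause (\ref{simulation-step}) of \cref{simulation} cannot be discharged. A second, smaller slip of the same kind: for outer processes you assert $\PS(h)=\tilde\PS(h)$, but in the embedding the non-hole $T_E$ states are tagged as $(s,s_I)$ in $T_E[T]$ and $(s,\tilde s_I)$ in $T_E[\tilde T]$, so they are equal only up to this tag.

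Both problems are repaired by doing what the paper does: require only that the two states arise as embeddings $\PS_E[\PS_1,\ldots,\PS_n]$ and $\PS_E[\tilde\PS_1,\ldots,\tilde\PS_n]$ of a \emph{common} outer state $\PS_E$ over $T_E$, with $\PS_i\lesssim\tilde\PS_i$ for each hole $i$, and impose no global domain equality. With that correction the rest of your argument (the prefix scan costing $\lbars{h}\cdot p'(\lbars{h}+\lbars{s_{IE}})$ plus one inner $\tau$ call, and the additive decomposition $t = O(t_E + \sum_i t_i)$ with $\tilde t_i \leq p(t_i)$ for clause (\ref{simulation-t-efficiency})) goes through exactly as in the paper.
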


\begin{proof}
We start by giving a definition of embedding for protocol states. Given a protocol state $\PS$ that contains holes with histories $h_1, \ldots, h_n$ and protocol states $\PS_1, \ldots, \PS_n$ we define the \emph{embedding} 
\begin{align*}
      &\PS[\PS_1, \ldots, \PS_n]  \\
      & \; = \PS_{-h_1, \ldots, h_n}\set{h_i h_j \mapsto \PS_i(h_j) | 1 \leq i \leq n, \, h_j \in \dom(\PS_i)}.
\end{align*}

Let $T_E = (S_E, s_{IE}, \to_E)$, $T = (S, s_I, \to)$, and $\tilde T = (\tilde S, \tilde s_I, \tilde \to)$ be defined as in the theorem. We show how to extend the relation $\lesssim$ on protocol states and the function $\tau$ given by the definition of simulation relation of $T$ and $\tilde T$ to a corresponding relation $\lesssim_E$ and a function $\tau_E$ for $T_E[T]$ and $T_E[\tilde T]$. For a protocol state $\PS$ over $T_E[T]$ and $\tilde \PS$ over $T_E[\tilde T]$ we set $\PS \lesssim_E \tilde \PS$ iff there exist protocol states $\PS_1, \ldots, \PS_n$ over $T$, $\tilde \PS_1, \ldots, \tilde \PS_n$ over $\tilde T$ and a protocol state $\PS_E$ over $T_E$ such that $\PS_i \lesssim \tilde \PS_i$ for all $i$ and 
\[\PS = \PS_E[\PS_1, \ldots, \PS_n] \; \text{and} \; \tilde \PS = \PS_E[\tilde \PS_1, \ldots, \tilde \PS_n]. \]

Let a command $c = (h, d)$ be given. We compute $\tau_E(c)$ as follows: first check whether $h$ contains a prefix $h_{E}$ such that $h_E$ is a history of a hole. If it doesn't, set $\tau_E(c) = c$, otherwise let $h'$ be a process history such that $h = h_{E} h'$ and let 
\begin{align*}
      & (\tilde h_1, d_1), \ldots, (\tilde h_m, d_m) = \tau((h', d)), \\
      & \tau_E(c) = (h_{E} \tilde h_1, d_1), \ldots, (h_{E} \tilde h_m, d_m).
\end{align*}
It is straightforward to check that (\ref{simulation-initial})--(\ref{simulation-step}) in \cref{simulation} are satisfied for $T_E[T]$ and $T_E[\tilde T]$ with $\tau_E$ and $\lesssim_E$. 

To prove (\ref{simulation-e-efficiency}) we need to bound the evaluation time of $\tau_E(c)$ for a command $c = (h, d)$ in terms of $\lbars{c}$ and $\lbars{s_{IE}'}$ where $s_{IE}' = (s_{IE}, s_I)$ is the initial state of $T_E[T]$. To evaluate $\tau_E(c)$ the following operations are performed:
\begin{itemize}
      \item 
            Run the hole-detection algorithm for each prefix of $h$. According to the assumption on $T_E$ this can be done in $\lbars{h} \cdot p'(\lbars{h} + \lbars{s_{IE}'})$ steps,
            
      \item
            if $h = h_E h'$, where $h_E$ is a history of a hole, evaluate $\tau(c')$ for $c' = (h', d)$. According to the assumption that $T \lesssim_{p} \tilde T$ this takes $p(\lbars{c'} + \lbars{s_I})$ steps.            
\end{itemize}
Given that $\lbars{h} \leq \lbars{c}$, $\lbars{c'} \leq \lbars{c}$, and $\lbars{s_I} \leq \lbars{s_{IE}'}$, the overall evaluation time of $\tau_E$ is bounded by
\[O(\lbars{c} \cdot p'(\lbars{c} + \lbars{s_{IE}'}) + p(\lbars{c} + \lbars{s_{IE}'})).\]

To prove (\ref{simulation-t-efficiency}) choose a valuation $\eta$ and assume that $T_E[T]$ reaches a state $\PS$ from $\eta$ in $t$ steps, $T_E[\tilde T]$ reaches a state $\tilde \PS$ from $\eta$ in $\tilde t$ steps, and $\PS \lesssim \tilde \PS$. By definition the states are of the form 
\[\PS = \PS_E[\PS_1, \ldots, \PS_n] \; \text{and} \; \tilde \PS = \PS_E[\tilde \PS_1, \ldots, \tilde \PS_n], \]
where $\PS_E$ is a state of $T_E$ and $\PS_i \lesssim \tilde \PS_i$ for all $i$. For each $i$ let $\eta_i$ be the environment of the $i$th hole in $\PS_E$. It is easy to see that $t = O(t_E + t_1 + \ldots + t_n)$, where $t_E$ is the time in which $T_E$ reaches $\PS_E$ from $\eta$ and $t_i$ is the time in which $T$ reaches $\PS_i$ from $\eta_i$. Similarly $\tilde t = O(t_E + \tilde t_1 + \ldots + \tilde t_n)$, where $\tilde t_i$ is the time in which $\tilde T$ reaches $\tilde \PS_i$ from $\eta_i$. From the assumption $T \lesssim_p \tilde T$ we know that $\tilde t_i \leq p(t_i)$ for each $i$. Assuming w.l.o.g. that $p$ is at least linear and monotonic, we conclude $\tilde t \leq p(t)$.
\end{proof}

The definition and the theorem can easily be extended to the setting with multiple holes $\hole_1, \ldots, \hole_n$. We shall write $T_E[T_1, \ldots, T_n]$ to denote the corresponding embedding. 

\section{Semantics of CVM \status{good, 27.03.2011}}\label{cvm-semantics}

\begin{figure*}[t]
\small
\begin{align*}
      &\frac{
            \forall v \in \var(P) \colon \range{\addr(v)}{N} \subseteq \addrspace 
            }{
            \eta,\, (\mcode{Init},\, \cvmP) \xrightarrow{\mathtt{ctr}\ \emptybs} \eta,\, (\bigcup_{v \in \var(P)} \range{\addr(v)}{N},\, \emptyset,\, [],\, \cvmP)
            } \tag{C-Init}\eqlabel{eq:c-init} \\
      \\
      &\frac{}{\eta,\, (\allocc,\, \memc,\, \stackc,\, \mcode{Const}\ b;\, \cvmP) \xrightarrow{\mathtt{ctr}\ \emptybs} \eta,\, (\allocc,\, \memc,\, b :: \stackc,\, \cvmP)} \tag{C-Const}\eqlabel{eq:c-const} \\
      \\
      &\frac{}{\eta,\, (\allocc,\, \memc,\, \stackc,\, \mcode{Ref}\ v;\, \cvmP) \xrightarrow{\mathtt{ctr}\ \emptybs} \eta,\, (\allocc,\, \memc,\, \bs(\addr(v)) :: \stackc,\, \cvmP)} \tag{C-Ref}\eqlabel{eq:c-ref} \\
      \\
      &\frac{p \in \BS \quad \lbars{p} = N \quad \range{\val(p)}{\val(l)} \subseteq \addrspace \setminus \allocc}{
                  \eta,\, (\allocc,\, \memc,\, l :: \stackc,\, \mcode{Malloc};\, \cvmP) \xrightarrow{\mathtt{ctr}\ p} \eta,\, (\allocc \cup \range{\val(p)}{\val(l)},\, \memc,\, p :: \stackc,\, \cvmP) 
            } &  
        \tag{C-Malloc}\eqlabel{eq:c-malloc} \\
      \\
      & \frac{
                  b, b_E \in \BS \quad \lbars{b} = \val(l) \leq \lbars{b_E} \quad \forall i \in \segment{\lbars{b}} \colon b[i] = \text{if $\val(p) + i \in \dom(\memc)$ then $\memc(\val(p) + i)$ else $b_E[i]$}
       }{
                  \eta,\, (\allocc,\, \memc,\, l :: p :: \stackc,\, \mcode{Load};\, \cvmP) \xrightarrow{\mathtt{ctr}\ b_E} \eta,\, (\allocc,\, \memc,\, b :: \stackc,\, \cvmP)
        }
        \tag{C-Load}\eqlabel{eq:c-load} \\
      \\
      & \frac{b \in BS \quad \lbars{b} = \val(l) < 2^N}{
                  \eta,\, (\allocc,\, \memc,\, l :: \stackc,\, \mcode{In}\ v\ src;\, \cvmP) \xrightarrow{src\ b} \eta\{v \mapsto b\},\, (\allocc,\, \memc,\, b :: \stackc,\, \cvmP)
            }
        \tag{C-In}\eqlabel{eq:c-in} \\
      \\
      & \frac{v \in \dom(\eta) \quad \lbars{\eta(v)} < 2^{N}}{
                  \eta,\, (\allocc,\, \memc,\, \stackc,\, \mcode{Env}\ v;\, \cvmP) \xrightarrow{\mathtt{ctr}\ \emptybs} \eta,\, (\allocc,\, \memc,\, \bs(\lbars{\eta(v)}) :: \eta(v) :: \stackc,\, \cvmP)
            }
        \tag{C-Env}\eqlabel{eq:c-env} \\
      \\
      & \frac{\arity(op) = n \quad b = \opfun_{op}(b_1, \ldots, b_n) \neq \bot \quad \lbars{b} < 2^{N}}{
                  \eta,\, (\allocc,\, \memc,\, b_1 :: \ldots :: b_n :: \stackc,\, \mcode{Apply}\ op;\, \cvmP) \xrightarrow{\mathtt{ctr}\ \emptybs} \eta,\, (\allocc,\, \memc,\, \bs(\lbars{b}) :: b :: \stackc,\, \cvmP)
            },
        \tag{C-Apply}\eqlabel{eq:c-apply} \\
      \\
      & \frac{}{
                  \eta,\, (\allocc,\, \memc,\, b :: \stackc,\, \mcode{Out}\ dest;\, \cvmP) \xrightarrow{dest\ b} \eta,\, (\allocc,\, \memc,\, \stackc,\, \cvmP)
            }
        \tag{C-Out}\eqlabel{eq:c-out} \\
      \\
      & \frac{b = i1}{
                  \eta,\, (\allocc,\, \memc,\, b :: \stackc,\, \mcode{Test};\, \cvmP) \xrightarrow{\mathtt{ctr}\ 1} \eta,\, (\allocc,\, \memc,\, \stackc,\, \cvmP)
            }
        \tag{C-Test}\eqlabel{eq:c-test} \\
      \\
      & \frac{\range{\val(p)}{\lbars{b}} \subseteq \allocc}{
                  \eta,\, (\allocc,\, \memc,\, p :: b :: \stackc,\, \mcode{Store};\, \cvmP) \xrightarrow{\mathtt{ctr}\ \emptybs} \eta,\, (\allocc,\, \memc\Set{\val(p) + i \mapsto b[i] | i \in \lbars{b}},\, \stackc,\, \cvmP)
            }
        \tag{C-Store}\eqlabel{eq:c-store} \\
\end{align*}
\caption{The concrete semantics of CVM.}
\label{fig:cvm-semantics}
\end{figure*}

This section presents the formal semantics of the CVM language, the syntax of which is introduced in \cref{fig:CVM-syntax}. In the following, let $N$, $\val$, and $\bs$ be chosen as in \cref{cvm}. In order to define the semantics, we associate to each CVM program the protocol transition system that is generated by it. Let a program $P \in \CVM$ be given. Let $\var(P)$ be the set of variables used in \cvm{Ref} instructions within $P$ and choose an allocation function $\addr \colon \var(P) \to \N$. We require that the allocated memory ranges do not overlap, that is 
\begin{align*}
      &\range{\addr(v)}{N} \cap \range{\addr(v')}{N} = \emptyset \; \text{for all $v \neq v'$}.
\end{align*}
We let $\sem{P}_C$ be the PTS with the initial state $(\mcode{Init}, P)$ and all other states of the form $(\allocc, \memc, \stackc, \cvmP)$, as described in \cref{cvm}. The transition rules of $\sem{P}_C$ are presented in \cref{fig:cvm-semantics}. The right hand side of each transition always contains a single process, so we omit the multiset bracket. 

The rule \eqref{eq:c-in} stores the input value in the environment in addition to placing it on the stack. This way the resulting PTS is history-independent---the state contains the information about all inputs so that there is only one trace leading to each state.

\section{Semantics of IML \status{good, 27.03.2011}}\label{iml-semantics}

\begin{figure}[t]
\small
\begin{align*}
      &\frac{}{(\eta,\, !\imlP) \xrightarrow{\mathtt{ctr}\ \emptybs} \{(\eta,\, \imlP),\, (\eta,\, !\imlP)\}} \tag{I-Repl}\eqlabel{eq:i-repl}\\
      \\
      &\frac{}{(\eta,\, \imlP \concat \imlQ) \xrightarrow{\mathtt{ctr}\ \emptybs} \{(\eta,\, \imlP),\, (\eta,\, \imlQ)\}} \tag{I-Par}\eqlabel{eq:i-par}\\
      \\
      &\frac{b \in \BS,\quad \lbars{b} = \val(\sem{e}_\eta)}{(\eta,\, (\nu x[e]); \imlP) \xrightarrow{\mathtt{rnd}\ b} \{(\eta\{x \mapsto b\},\, \imlP)\}} \tag{I-Nonce}\eqlabel{eq:i-nonce}\\
      \\
      &\frac{b \in BS}{(\eta,\,\miml{in}(x); \imlP) \xrightarrow{\mathtt{read}\ b} \{(\eta\{x \mapsto b\},\, \imlP)\}} \tag{I-In}\eqlabel{eq:i-in}\\
      \\
      &\frac{b = \sem{e}_\eta \neq \bot}{(\eta,\, \miml{out}(e); \imlP) \xrightarrow{\mathtt{write}\ b} \{(\eta,\, \imlP)\}} \tag{I-Out}\eqlabel{eq:i-out}\\
      \\
      &\frac{b = \sem{e}_\eta \neq \bot}{(\eta,\, \miml{event}(e); \imlP) \xrightarrow{\mathtt{event}\ b} \{(\eta,\, \imlP)\}} \tag{I-Event}\eqlabel{eq:i-event} \\
      \\
      &\frac{\sem{e}_\eta = i1}{(\eta,\, \miml{if\ $e$ then\ $\imlP$ else\ $\imlQ$}) \xrightarrow{\mathtt{ctr}\ 1} \{(\eta,\, \imlP)\}} \tag{I-Cond-True}\eqlabel{eq:i-cond-true} \\
      \\
      &\frac{\sem{e}_\eta = i0}{(\eta,\, \miml{if\ $e$ then\ $\imlP$ else\ $\imlQ$}) \xrightarrow{\mathtt{ctr}\ 0} \{(\eta,\, \imlQ)\}} \tag{I-Cond-False}\eqlabel{eq:i-cond-false}\\
      \\
      &\frac{b = \sem{e}_\eta \neq \bot}{(\eta,\, \miml{let\ $x = e$ in\ $\imlP$ else\ $\imlQ$}) \xrightarrow{\mathtt{ctr}\ 1} \{(\eta\{x \mapsto b\},\, \imlP)\}} \tag{I-Let-True}\eqlabel{eq:i-let-true} \\
      &\frac{\sem{e}_\eta = \bot}{(\eta,\, \miml{let\ $x = e$ in\ $\imlP$ else\ $\imlQ$}) \xrightarrow{\mathtt{ctr}\ 0} \{(\eta,\, \imlQ)\}} \tag{I-Let-False}\eqlabel{eq:i-let-false}
\end{align*}
\caption{The semantics of IML.}
\label{fig:iml-semantics}
\end{figure}

Just as for CVM, the semantics of IML is given as a protocol transition system. We choose the functions $\bs$ and $\val$ as in \cref{cvm} and let the function $\sem{\cdot}_\eta$ be defined as in \cref{iml}. For an IML process $P$ we let $\sem{P}_I$ be the PTS with IML processes as states, with starting state $P$ and transitions described in \cref{fig:iml-semantics}.

The rules \eqref{eq:i-repl} and \eqref{eq:i-par} are standard replication and parallel composition rules from the pi calculus. The rule \eqref{eq:i-nonce} is interesting in that it restricts the generated nonce to be of a given length. The input rule \eqref{eq:i-in} does not place such a restriction and allows the input to be of any length (this is more permissive than the CVM input rule). The rules \eqref{eq:i-out} and \eqref{eq:i-event} generate an output and an event transition respectively. The conditional rules \eqref{eq:i-cond-true} and \eqref{eq:i-cond-false} have different control labels, so that the attacker can distinguish the branch that has been taken by the process. Unlike CVM there is no explicit rule for reading environment variables, because IML operates on the environment $\eta$ directly.

Consider IML enriched with an additional syntactic form $\hole_i$ (a hole) with $i \in \N$ and without any reductions. For an IML process $P$ with holes the semantics $\sem{P}_I$ is a PTS with holes (\cref{pts-with-hole}). The history of a process uniquely determines its state, for instance, given the process $P = $ \iml{!(if $e$ then $\hole$ else $0$)} and history $h = (\mcode{ctr\ $\emptybs$})\,1\,(\mcode{ctr\ $1$})\,1$, it is easy to see that $h$ is a history of a hole in $\sem{P}_I$. Here it is important that the true and the false branches in \cref{fig:iml-semantics} have different control labels. In general, whether $h$ is a history of a hole in $\sem{P}_I$, is computable in time linear in $\lbars{P} + \lbars{h}$. Just like CVM IML is history-independent because it records all the inputs in the environment. Thus the following holds:

\begin{lemma}[IML with holes]\label{iml-holes}
      For an IML process $P$ with holes the semantics $\sem{P}_I$ is a PTS with holes identifiable in $p$-time for some fixed linear polynomial $p$. 
\end{lemma}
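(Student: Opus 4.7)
The plan is to verify the three requirements of \cref{pts-with-hole} for $\sem{P}_I$: that it is a history-independent PTS, that it contains a distinguished hole state with no outgoing transitions, and that there is a linear-time algorithm deciding whether a given process history $h$ is the history of a hole. The second point is essentially given: the enrichment of IML with the syntactic forms $\hole_i$ comes with no transition rules, so every executing process of the form $(\eta, \hole_i)$ is inert; it plays the role of the distinguished state $[]$ in the definition.

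For history-independence I would argue by inspection of \cref{fig:iml-semantics}. Each rule is deterministic in the strong sense that, given the outgoing label, the target multiset of executing processes is uniquely determined: the parallel and replication rules always produce a fixed two-element multiset; the input and nonce rules extend the environment with the label's payload; the output and event rules take the label verbatim from the evaluated expression; and the conditional and \miml{let} rules use distinct control labels ($\mathtt{ctr}\ 1$ versus $\mathtt{ctr}\ 0$) to expose which branch was taken. Consequently, for any valuation $\eta$ the labelled transition sequence from $(\eta, P)$ to a given protocol state $\PS$ is unique up to the attacker's interleaving of commands, and therefore the number of non-attacker steps along any such sequence depends only on $\PS$, as required.

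For the hole-detection algorithm I would exploit the fact that IML has no destructive updates, no recursion, and no substitution into subterms beyond the environment: every executing process $\PS(h)$ reachable at a history $h$ is of the form $(\eta', Q)$, where $Q$ is syntactically a subterm of $P$ whose position is fixed by $h$ independently of which environment was extended along the way. Concretely, I would walk through $h$ observation by observation, maintaining a pointer into the syntax tree of $P$: after \eqref{eq:i-repl} and \eqref{eq:i-par} the next element of $h$ is an integer in $\{1,2\}$ selecting the child subterm to descend into; after the conditional and \miml{let} constructs the control bit in the observation selects the branch; for all other rules the pointer moves to the unique continuation. If the walk ends exactly inside some $\hole_i$ the history is accepted, otherwise rejected. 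Each observation costs $O(1)$ work on the syntax tree, so the total running time is $O(\lbars{h} + \lbars{P})$, which matches a fixed linear bound $p(\lbars{h} + \lbars{s_I})$ since $s_I = P$.

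The main obstacle, and where most of the formal work sits, is proving the syntactic invariant that the subterm $Q$ reached at history $h$ is determined entirely by $h$ and $P$, so that once we establish $Q = \hole_i$ for one reachable $\PS$ we may conclude $\PS'(h) = (\eta', \hole_i)$ for every other reachable $\PS'$ containing $h$ in its domain. This reduces to observing that every IML rule that changes the control structure only ever replaces a process by one of its own syntactic subterms, and that every rule with more than one child on the right-hand side emits distinguishable observations in $h$. Given this invariant the linear-time algorithm and the history-independence property both follow directly, yielding the lemma with $p$ a fixed linear polynomial.
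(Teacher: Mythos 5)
Your proposal is correct and follows essentially the same route as the paper, which only sketches this lemma: the paper likewise argues that a history determines the reached subterm of $P$ because branching rules emit distinguishable control labels (and multiset indices), that the resulting tree-walk decides hole-membership in time linear in $\lbars{P}+\lbars{h}$, and that history-independence holds because all inputs are recorded in the environment. Your added emphasis on the syntactic invariant (every rule replaces a process by one of its own subterms, with no substitution into terms) is exactly the detail the paper leaves implicit.
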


The semantics of mixed IML and CVM processes is defined by using a PTS embedding as follows:
\begin{definition}[Mixed semantics]\label{mixed-semantics}
For a process $P_E \in \IML$ with $n$ holes and processes $P_1, \ldots, P_n \in \CVM$ let
\[\sem{P_E[P_1, \ldots, P_n]}_{CI} = \sem{P_E}_I[\sem{P_1}_C, \ldots, \sem{P_n}_C].\]
\end{definition}

\newpage
\section{Simplifications \status{good, 24.02.2011}}\label{simplify}

\newcommand{\cutL}{\opname{cutL}}
\newcommand{\cutR}{\opname{cutR}}

\begin{figure}
\small
\begin{align*}
      & \cutL_{\facts}(l,\, e_1 \concat \ldots \concat e_n) \\
      & \quad = 
            \begin{cases}
                  e_1 \concat \ldots \concat e_{i - 1} \concat \simplify_\facts(\cutL_\facts(l -\nop l', e_i)) \\
                  \quad \text{if $\facts \entails (l \geq l') \wedge (l \leq l' +\nop \getLen(e_i))$,} \\
                  \quad \text{where $l' = \Sigma_{j = 1}^{i - 1} \getLen(e_j)$,}  \\
                  (e_1 \concat \ldots \concat e_n)\{i0, l\} \;\;\text{otherwise,}
            \end{cases}
      \\
      & \cutR_{\facts}(l,\, e_1 \concat \ldots \concat e_n) = \\
      & \quad = 
            \begin{cases}
                  \simplify_\facts(\cutR_\facts(l -\nop l', e_i)) \concat e_{i + 1} \concat \ldots \concat e_{n} \\
                  \quad \text{if $\facts \entails (l \geq l') \wedge (l \leq l' +\nop \getLen(e_i))$,} \\
                  \quad \text{where $l' = \Sigma_{j = 1}^{i - 1} \getLen(e_j)$,}  \\
                  (e_1 \concat \ldots \concat e_n)\{l, \getLen(e_1 \concat \ldots \concat e_n) -\nop l\} \;\;\text{otherwise,}
            \end{cases}  
      \\
      & \simplify_\facts(e\{e_o, e_l\}) \\
      & \quad = 
            \begin{cases}
                  e & \begin{aligned}
                                        \text{if}\; \facts \entails & (e_o = i0) \\ 
                                        & \wedge (e_l = \getLen(e))
                                  \end{aligned}
                  \\
                  \emptybs & \text{if $\facts \entails (e_l = i0)$} \\
                  e'\{e_o +\nop e_o', e_l\} & \text{if $e = e'\{e_o', e_l'\}$} \\
                  \cutL_\facts(e_l, \cutR_\facts(e_o, e)) & \text{if $e$ is a concatenation} \\
                  e\{e_o, e_l\} & \text{otherwise.}
            \end{cases}
\end{align*}
\caption{Simplification rules.}
\label{fig:simplify}
\end{figure}

\Cref{fig:simplify} presents the simplification rules used in our symbolic execution algorithm. The simplification function is concerned with simplifying range expressions when possible, for instance, an expression of the form $(a \concat b)\{x, y\}$, where $\facts \entails (x = \getLen(a))$ and $\facts \entails (y = \getLen(b))$ will simplify to $b$. The main work is done by two recursive functions $\cutL,\, \cutR \colon \SExp \times \SExp \to \SExp$ that given a length expression $l$ and a concatenation expression $e$ attempt to split $e$ at the position given by $l$. If this succeeds, $\cutL$ returns the part of $e$ to the left of the split position and $\cutR$ returns the part to the right.

In order to simplify an expression of the form $e\{e_o, e_l\}$ the function $\simplify$ first checks two special cases: if $e_o$ is equal to zero and $e_l$ is equal to the length of $e$ then the range can be removed and the expression can be simplified to just $e$. On the other hand if $e_l$ is equal to zero then the range expression can be simplified to $\emptybs$. If $e$ is itself a range expression of the form $e'\{e_o', e_l'\}$ then the two ranges are merged giving the result $e'\{e_o +\nop e_o', e_l\}$. If $e$ is a concatenation then the functions $\cutR$ and $\cutL$ are applied. Finally, if all of the above fails, the original expression is returned without simplification.

We omit the soundness proof for our simplification function.

\section{Symbolic Execution Soundness \status{good, 27.03.2011}}\label{symex-proof}

We prove our main result (\cref{symex}). We shall do so by showing that the PTS $\sem{\sem{P}_S}_I$ resulting from the symbolic execution of a program $P \in \CVM$ simulates (in the sense of \cref{simulation}) the PTS $\sem{P}_C$ resulting from running $P$ directly. This result is captured by \cref{cvm-iml-simulation}. \Cref{symex} then follows by combined application of \cref{simsec,embedsim,iml-holes} together with \cref{mixed-semantics}.

For compactness we shall write $b^\N$ instead of $\val(b)$ for $b \in BS$. When referring to valuations we shall mean extended valuations of the form $\eta\colon \Var \cup \PBase \to \BS_\bot$. For an extended valuation $\eta$ let $\var(\eta)$ be the restriction of $\eta$ to $\Var$. 

We shall make use of the soundness of the function $\getLen$ introduced in \cref{cvm-to-iml} that we state here without proof: for any $e \in \SExp$ and valuation $\eta$
\[(\sem{e}_\eta \neq \bot) \wedge (\lbars{\sem{e}_\eta} < 2^N) \Rightarrow \sem{\getLen(e)}_\eta = \bs(\lbars{\sem{e}_\eta}).\]

The main tool in the proof of \cref{cvm-iml-simulation} is a concretisation function that, given a valuation, maps symbolic execution states to concrete execution states. Given a symbolic state $s = (\facts, \allocs, \mems, \stacks, \cvmP)$ and a valuation $\eta$ we say that $s$ is \emph{$\eta$-consistent} when all expressions in $s$ are well-defined with respect to $\eta$, when $\eta$ maps all symbolic memory locations to disjoint ranges that are within allocated memory bounds, all conditions in $\facts$ hold with respect to $\eta$, and $\eta$ agrees with the $\addr$ function for stack variables. Formally, we say that $s$ is $\eta$-consistent, iff
\begin{enumerate}
      \item
            for all $pb \in \dom(\mems) \colon$
            \begin{gather*}
                  \sem{\mems(pb)}_\eta \neq \bot, \quad \sem{\allocs(pb)}_\eta \neq \bot, \quad \eta(pb) \neq \bot, \\
                  \lbars{\sem{\mems(pb)}_\eta} \leq \sem{\allocs(pb)}_\eta^\N,
            \end{gather*}
             
      \item 
            for all $pb,\, pb' \in \dom(\allocs)$ with $pb \neq pb'$:
            \[\range{\eta(pb)^\N}{\sem{\allocs(pb)}_\eta^\N} \cap \range{\eta(pb')^\N}{\sem{\allocs(pb')}_\eta^\N} = \emptyset,\]

      \item 
            for all $\psi \in \facts \colon \sem{\psi}_\eta = i1$,
            
      \item
            for all $e \in \stacks \colon \sem{e}_\eta \neq \bot$,
            
      \item
            for all $v \in \var(P) \colon \eta(\stack\, v) = \bs(\addr(v))$.
\end{enumerate}
For an $\eta$-consistent state $s$ let $\conc_\eta(s) = (\allocsc, \memsc, \stacksc, \cvmP)$ be the concrete state where $\stacksc$ is obtained from $\stacks$ by applying $\sem{\cdot}_\eta$ to each element and
\begin{align*}
      \allocsc &= \bigcup \Set{ \range{\eta(pb)^\N}{\sem{\allocs(pb')}_\eta^\N} | pb \in \dom(\allocs) },\\
      \memsc &= \Set{\eta(pb)^\N + i \mapsto \sem{\mems(pb)}_\eta[i] | 
                  \begin{aligned}
                              & pb \in \dom(\mems),\, \\
                              & i < \lbars{\sem{\mems(pb)}_\eta}
                  \end{aligned}
                  }.
\end{align*}
The conditions of $\eta$-consistency guarantee that $\memsc$ is well-defined: symbolic expressions will map onto concrete memory without overlapping, that is, for each $p \in \N$ there is only one pair $pb, i$ such that $\eta(pb)^\N + i = p$. 

The special state $(\mcode{Init}, P)$ is defined to be $\eta$-consistent for any $\eta$ with $\dom(\eta) \subseteq \Var$ and we let $\conc_\eta(\mcode{Init}, P) = (\mcode{Init}, P)$.

We start by proving two lemmas relating the symbolic and the concrete execution of a program. \Cref{cvm-to-iml-soundness-1} shows that if a symbolic state $s$ maps to a concrete state $c$ then the state following $s$ in the symbolic execution can be mapped to the state following $c$ in the concrete execution. \Cref{cvm-to-iml-soundness-2} shows that if in a symbolic and a concrete execution the states can be mapped to each other then the IML program generated by the symbolic execution performs the same actions as the concrete execution.

\begin{lemma}\label{cvm-to-iml-soundness-1}
      Let $(\eta_c, c) \xrightarrow{l} (\eta_c', c')$ be a concrete transition (\cref{fig:cvm-semantics}), $\smash{s \xrightarrow{\lambda} s'}$ a symbolic transition (\cref{fig:cvm-symex}), and $\eta$ an extension of $\eta_c$ such that $s$ is $\eta$-consistent and $\conc_{\eta}(s) = c$. Then there exists an extension $\eta'$ of both $\eta$ and $\eta_c'$ such that $s'$ is $\eta'$-consistent and $\conc_{\eta'}(s') = c'$. 
\end{lemma}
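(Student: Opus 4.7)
The plan is to proceed by case analysis on the symbolic transition rule applied to $s$, and in each case exhibit the required extension $\eta'$. In each case I first argue that the concrete transition from $c = \conc_\eta(s)$ must be the corresponding concrete rule (the symbolic rules are in one-to-one correspondence with the concrete rules), and then show how to pick $\eta'$ so that both consistency and $\conc_{\eta'}(s') = c'$ hold. The soundness of $\getLen$ and the soundness property of $\simplify_\Sigma$ (preserving the value $\sem{\cdot}_\eta$ on $\Sigma$-consistent $\eta$) will be the workhorses.

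For the easy rules (\ref{eq:s-init}, \ref{eq:s-const}, \ref{eq:s-ref}, \ref{eq:s-env}, \ref{eq:s-out}, \ref{eq:s-test}, \ref{eq:s-apply}) the extension $\eta' = \eta$ already works: the stack is updated with symbolic expressions whose concrete evaluation under $\eta$ matches the concrete stack update, and $\mems$, $\allocs$ are unchanged. For \ref{eq:s-apply} I need the correctness of $\apply$, in particular that the pointer cases give the same bitstring as $+\bop$ and $-\bop$ would in the concrete execution (this uses the definition $\sem{\ptr(pb,e_o)}_\eta = \eta(pb) +\bop \sem{e_o}_\eta$). For \ref{eq:s-test} I additionally add $e$ to $\facts$, and its $\eta$-consistency follows because the concrete test succeeded (so $\sem{e}_\eta = i1$). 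For \ref{eq:s-in} I extend $\eta$ to map the fresh variable $v$ to the bitstring $b$ read from the attacker (the symbolic rule also adds the length fact $\len(v) = e_l$, which is consistent because $\lbars{b} = \sem{e_l}_\eta^\N$ from the premise of the concrete rule).

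The interesting cases are \ref{eq:s-malloc}, \ref{eq:s-load}, and \ref{eq:s-store}. For \ref{eq:s-malloc} I must set $\eta'(pb) := p$ where $p$ is the concrete address chosen by the attacker in the concrete transition; the disjointness condition for the new region in $\eta'$-consistency translates precisely to the concrete non-overlap condition $\range{\val(p)}{\val(l)} \subseteq \addrspace \setminus \allocc$, which holds by the premise of \ref{eq:c-malloc}. For \ref{eq:s-load} the entailment $\facts \entails (e_o +\nop e_l \leq \getLen(\mems(pb)))$ together with $\eta$-consistency implies that the whole extracted range lies inside the defined portion of $\mems(pb)$, hence the concrete rule reads entirely from initialised memory and returns exactly $\sub(\sem{\mems(pb)}_\eta,\sem{e_o}_\eta^\N,\sem{e_l}_\eta^\N)$, which by definition equals $\sem{\mems(pb)\{e_o,e_l\}}_\eta$; soundness of $\simplify_\Sigma$ then gives $\sem{e}_\eta$ equal to the concrete top of stack.

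The hardest case will be \ref{eq:s-store}, because I must show that the concatenated symbolic expression $e_h'$ concretises to exactly the memory content produced by the concrete store. Here I will split on the two subcases: if $e_o +\nop e_l < e_{lh}$ (write strictly inside) the new contents at base $pb$ are $e_h\{i0,e_o\} \concat e \concat e_h\{e_o +\nop e_l, e_{lh} -\nop (e_o+\nop e_l)\}$, whose concrete evaluation is the bitstring $\sem{e_h}_\eta$ with the middle segment replaced by $\sem{e}_\eta$ — this matches the concrete memory update because, by $\eta$-consistency, distinct pointer bases map to disjoint regions, so the concrete write touches only the region of $pb$. The second subcase (write extending past the current initialised length but still within the allocation) is analogous: the allocation-bound premise $e_o +\nop e_l \leq e_s$ is exactly what is needed to verify the concrete premise $\range{\val(p)}{\lbars{b}} \subseteq \allocc$. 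In both subcases I invoke soundness of $\simplify_\Sigma$ to replace $e_h'$ by its simplified form without changing $\sem{\cdot}_\eta$, and keep $\eta' = \eta$. The main obstacle will be bookkeeping of the offsets and lengths under the $+\nop$/$-\nop$ conventions, making sure overflow does not occur (which is guaranteed by the entailments, since $\simplify$ and $\getLen$ produce non-overflowing arithmetic in $\N$ and the $\Sigma$-consistency of $\eta$ forces the arithmetic facts to hold).
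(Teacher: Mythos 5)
Your plan follows the paper's proof essentially exactly: the same case analysis over corresponding rule pairs, the same choices of $\eta'$ (extend by $pb \mapsto p$ for \eqref{eq:s-malloc}, by $v \mapsto b$ for \eqref{eq:s-in}, identity elsewhere except initialisation), and the same reliance on the soundness of $\getLen$ and $\simplify_\Sigma$ plus the two-subcase analysis for \eqref{eq:s-store}. The one slip is your claim that $\eta' = \eta$ suffices for \eqref{eq:s-init}: the valuation at the initial state is only defined on $\Var$, while $\eta'$-consistency of $s'$ (condition on $\dom(\mems')$ and condition (5)) and the equality $\allocsc' = \allocc'$ force $\eta'$ to be the genuine extension $\eta\Set{\stack\, v \mapsto \bs(\addr(v)) | v \in \var(P)}$; with that correction the argument goes through as in the paper.
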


\begin{SHORT}
The proof is by case distinction over the value of the executed instruction and is presented in the full version of the paper.
\end{SHORT}

\begin{proof}
      By definition of the concretisation function both the concrete and the symbolic step are executed with the same instruction or both perform the initialisation. We prove the lemma by enumerating the pairs of rules that generate the transitions. For the purpose of this proof we are not interested in the values of transition labels $l$ and $\lambda$.
      
      In the following $\allocc, \ldots$ and $\allocc', \ldots$ refer to components of $c$ and $c'$ respectively, $\allocs, \ldots$ and $\allocs', \ldots$ refer to components of $s$ and $s'$, and $\allocsc, \ldots$ and $\allocsc', \ldots$ refer to components of $\conc_{\eta}(s)$ and $\conc_{\eta'}(s')$.
      
      \begin{enumerate}
            \item
                  \eqref{eq:c-init,eq:s-init}
                  
                  By definition of $\eta$-consistency for the initial state we know that $\stack\, v \notin \dom(\eta)$ for all $v \in \var(P)$. We show that the lemma holds with
                  \[\eta' = \eta\Set{ \stack\, v \mapsto \bs(\addr(v)) | v \in \var(P) }.\]
                  The second condition of $\eta'$-consistency of $s'$ follows by the choice of $\addr$ function (\cref{cvm-semantics}), the other conditions are straightforward to check. In $s'$ each location in the symbolic memory is initialised to $\emptybs$, so applying the definition of $\conc_\eta$ we see that $\memsc' = \emptyset = \memc'$. Finally
                  \begin{align*}
                        \allocsc' & = \bigcup_{v \in \var(P)} \Set{\range{\eta'(\stack\, v)^\N}{\sem{\allocs'(\stack\, v)}_{\eta'}^\N} } \\
                              & = \bigcup_{v \in \var(P)} \Set{ \range{\bs(\addr(v))^\N}{\bs(N)^\N} } \\
                              & = \bigcup_{v \in \var(P)} \{ \range{\addr(v)}{N} \} = \allocc'.
                  \end{align*}
      
            \item
                  \eqref{eq:c-const,eq:s-const} with \cvm{Const $b$}
            
                  Both the concrete and the symbolic transition have the effect of putting the same bitstring $b$ onto the stack. Thus both the $\eta$-consistency and the state correspondence are preserved and the lemma holds with $\eta' = \eta$.
                  
            \item
                  \eqref{eq:c-ref,eq:s-ref} with \cvm{Ref $v$}

                  The concrete transition puts $\bs(\addr(v))$ on the stack and the symbolic transition puts $\ptr(\stack\, v,\, i0)$ on the stack. By $\eta$-consistency $\eta(\stack, v) = \bs(\addr(v))$, thus
                  \[\sem{\ptr(\stack\, v,\, i0)}_\eta = \eta(\stack, v) +\bop i0 = \bs(\addr(v))\]
                  and the lemma holds with $\eta' = \eta$.
                  
            \item
                  \eqref{eq:c-malloc,eq:s-malloc}
                  
                  Let $p$ and $l$ be defined as in rule \eqref{eq:c-malloc} and $pb$ and $e_l$ be defined as in rule \eqref{eq:s-malloc}. We show that the lemma holds with $\eta' = \eta\{pb \mapsto p\}$. It is straightforward to check that the first condition of $\eta'$-consistency of $s'$ holds, taking into consideration that $\sem{\mems'(pb)}_\eta = \emptybs$. To prove the second condition, let $pb' \in \dom(\allocs')$ such that $pb' \neq pb$. In that case $pb' \in \dom(\allocs)$ and by definition of $\conc_\eta$ and the state correspondence of $c$ and $s$
                  \begin{multline*}
                        \range{\eta'(pb')^\N}{\sem{\allocs'(pb')}_{\eta'}^\N} = \range{\eta(pb')^\N}{\sem{\allocs(pb')}_\eta^\N} \\
                        \subseteq \allocsc = \allocc.
                  \end{multline*}
                  By initial state correspondence and the definition of $\eta'$
                  \begin{multline*}
                        \range{\eta'(pb)^\N}{\sem{\allocs'(pb)}_{\eta'}^\N} = \range{\eta'(pb)^\N}{\sem{e_l}_{\eta'}^\N} \\
                        = \range{p^\N}{\sem{e_l}_{\eta}^\N} = \range{p^\N}{l^\N} \subseteq \addrspace \setminus \allocc.
                  \end{multline*}
                  Thus the allocation ranges of $pb$ and $pb'$ are disjoint and the condition (2) holds. Conditions (3) to (5) are straightforward to check. To prove that $\conc_{\eta'}(s') = c'$ observe that 
                  \begin{gather*}
                        \allocsc' = \allocsc \cup \range{\eta'(pb)^\N}{\sem{\allocs'(pb)}_{\eta'}^\N} = \allocc \cup \range{p^\N}{l^\N} = \allocc', \\
                        \memsc' = \memsc = \memc = \memc', \\
                        \sem{\ptr(pb,\, i0)}_{\eta'} = \eta'(pb) +\bop i0 = p.
                  \end{gather*}
      
            \item 
                  \eqref{eq:c-load,eq:s-load}
                  
                  Both the concrete and the symbolic rule have the effect of replacing two values on the stack with a new value. In the concrete transition the new value is $b \in BS$ such that $b[i] = \memc(p + i)$ whenever $\memc(p + i)$ is initialised and $p$ is defined as in rule \eqref{eq:c-load}. In the symbolic transition the new value is $e = \simplify_\Sigma(\mems(pb)\{e_o, e_l\})$, where $pb$, $e_o$, and $e_l$ are defined as in rule \eqref{eq:s-load}. We shall prove that $\sem{e}_\eta = b$ so that the lemma holds with $\eta' = \eta$.
                  
                  Let $b_h = \sem{\mems(pb)}_\eta$. By definition of $\conc_\eta$ and initial state correspondence
                  \begin{align*}
                        b_h & = \memsc\left(\range{b_{pb}^\N}{\lbars{b_h}}\right) = \memc\left(\range{b_{pb}^\N}{\lbars{b_h}}\right),
                  \end{align*}
                  where we use the notation $\memc(I)$ for $I \subseteq \addrspace$ to denote the sequence of bits of $\memc$ with addresses in $I$. Thus $\memc$ is defined in the range $\rangesmash{b_{pb}^\N}{\lbars{b_h}}$, in particular 
                  \begin{equation*}
                        b_{pb}^\N + \lbars{b_h} < 2^N.\tag{*}    
                  \end{equation*}
                  Let $b_o = \sem{e_o}_\eta$ and $b_l = \sem{e_l}_\eta$. Evaluating the conditions of the rule \eqref{eq:s-load} and using the assumption of $\eta$-consistency we obtain $b_o +\nop b_l \leq \sem{\getLen(e_h)}_\eta$. Because $\lbars{b_h} < 2^N$ we can apply soundness of $\getLen$ which together with the definitions of bitstring operations $+\nop$ and $\leq$ gives 
                  \begin{equation*}
                        b_o^\N + b_l^\N \leq \lbars{b_h}.\tag{**}
                  \end{equation*}
                  Using the definition of the function $\sub$
                  \begin{align*}
                        \sem{\mems(pb)\{e_o, e_l\}}_\eta &= \sub(b_h, b_o^\N, b_l^\N) \\
                              &= \sub\left(\memc\left(\range{b_{pb}^\N}{\lbars{b_h}}\right), b_o^\N, b_l^\N\right)  \\
                              &= \memc\left(\range{b_{pb}^\N + b_o^\N}{b_l^\N}\right).
                  \end{align*}
                  This allows us to apply soundness of $\simplify$:
                  \begin{equation*}
                        \begin{aligned}
                              \sem{e}_\eta &= \sem{\simplify_\Sigma(\mems(pb)\{e_o, e_l\})}_\eta \\
                              & = \sem{\mems(pb)\{e_o, e_l\}}_\eta = \memc\left(\range{b_{pb}^\N + b_o^\N}{b_l^\N}\right).
                        \end{aligned}
                  \end{equation*}
                  By the state correspondence of $c$ and $s$ we obtain
                  \begin{align*}
                        p &= \sem{\ptr(pb, e_o)}_\eta = \eta(pb) +\bop \sem{e_o}_\eta = b_{pb} +\bop b_o, \\
                        \lbars{b} &= \sem{e_l}^\N = b_l^\N.
                  \end{align*}
                  By (*) and (**) $b_{pb}^\N + b_o^\N < 2^N$, thus 
                  \[b_{pb}^\N + b_o^\N = (b_{pb} +\bop b_o)^\N = p^\N.\]
                  Substituting this into the above we get
                  \begin{align*}
                         \sem{e}_\eta &= \memc\left(\range{b_{pb}^\N + b_o^\N}{b_l^\N}\right) = \memc\left(\range{p^\N}{\lbars{b}}\right) = b
                  \end{align*}
                  The final equality holds as the referenced memory cells lie within the initialised range $\rangesmash{b_{pb}^\N}{\lbars{b_h}}$.
                  
            \item
                  \eqref{eq:c-in,eq:s-in} with \cvm{In $v$ $src$}
                  
                  The rule \eqref{eq:c-in} takes a value $l$ from the stack and places a value $b$ of length $l^\N$ on the stack. Additionally it updates $\eta_c' = \eta_c\{v \mapsto b\}$. The rule \eqref{eq:s-in} takes an expression $e_l$ from the stack, places $v$ on the stack, and adds the fact $\len(v) = e_l$ to $\facts$. We show that the lemma holds with $\eta' = \eta\{v \mapsto b\}$. Due to initial state correspondence $\sem{e_l}_\eta = l$ and due to the condition of the rule \eqref{eq:c-in} $\lbars{b} < 2^N$, thus
                  \[\sem{\len(v)}_{\eta'}^\N = \bs(\lbars{b})^\N = \lbars{b} = l^\N = \sem{e_l}_{\eta'}^\N,\]
                  so that the new fact is indeed valid.
                  
            \item
                  \eqref{eq:c-env,eq:s-env} with \cvm{Env $v$}
                  
                  The rule \eqref{eq:c-env} places $\eta_e(v)$ together with $\bs(\lbars{\eta_e(v)})$ on the stack (the valuation $\eta$ in \cref{fig:cvm-semantics} corresponds to $\eta_e$ in the lemma). The rule \eqref{eq:s-env} places $v$ and $\len(v)$ on the stack. By assumption of the lemma $\eta(v) = \eta_e(v)$, so it is straightforward to check that the lemma holds with $\eta' = \eta$.
                  
            \item
                  \eqref{eq:c-apply,eq:s-apply} with \cvm{Apply} $op$
                  
                  The rule \eqref{eq:c-apply} places on the stack the bitstring $b = \opfun_{op}(b_1, \ldots, b_n)$ together with its length, whereby $b_1, \ldots, b_n$ are taken from the stack. The rule \eqref{eq:s-apply} places on the stack the value $e = \apply(op, e_1, \ldots, e_n)$ together with $\len(e)$, whereby $e_1, \ldots, e_n$ are taken from the stack. We show that $\sem{e}_\eta = b$ so that the lemma holds with $\eta' = \eta$. By initial state correspondence we have $\sem{e_i}_\eta = b_i$ for all $i$. We enumerate the cases arising from the definition of $\apply$ given $b \neq \bot$:
                  
                  \begin{enumerate}
                        \item
                              $n = 2$, $e_1 = \ptr(pb, e_o)$, $e_2 \in \IExp$, and $op = +\bop$. In this case
                              \begin{align*}
                                    b &= \sem{\ptr(pb, e_o)}_\eta +\bop \sem{e_2}_\eta \\
                                          & = \eta(pb) +\bop \sem{e_o}_\eta +\bop \sem{e_2}_\eta \\
                                          & = \sem{\ptr(pb, e_o +\bop e_2)}_\eta \\
                                          & = \sem{\apply(+\bop, \ptr(pb, e_o), e_2)}_\eta  
                              \end{align*}

                        \item
                              $n = 2$, $e_1 = \ptr(pb, e_o)$, $e_2 = \ptr(pb, e_o')$, $op = -\bop$. In this case
                              \begin{align*}
                                    b &= \sem{\ptr(pb, e_o)}_\eta -\bop \sem{\ptr(pb, e_o')}_\eta \\
                                          & = \eta(pb) +\bop \sem{e_o}_\eta -\bop (\eta(pb) +\bop \sem{e_o'}_\eta) \\
                                          & = \sem{e_o}_\eta -\bop \sem{e_o'}_\eta\\
                                          & = \sem{\apply(-\bop, \ptr(pb, e_o), \ptr(pb, e_o'))}_\eta  
                              \end{align*}

                        \item
                              $e_1, \ldots, e_n \in \IExp$. In this case
                              \begin{align*}
                                    b &= \opfun_{op}(\sem{e_1}_\eta, \ldots, \sem{e_n}_\eta) = \sem{op(e_1, \ldots, e_n)}_\eta \\
                                          & = \sem{\apply(op, e_1, \ldots, e_n)}_\eta  
                              \end{align*}

                  \end{enumerate}
                  
            \item
                  \eqref{eq:c-out,eq:s-out}
                  
                  The lemma holds trivially with $\eta' = \eta$.
                  
            \item
                  \eqref{eq:c-test,eq:s-test}
                  
                  The rule \eqref{eq:c-test} removes a value $b$ from the stack. The rule \eqref{eq:s-test} removes an expression $e$ from the stack and adds $e$ to the set of facts. We show that the lemma holds with $\eta' = \eta$. We only need to prove that $\sem{e}_\eta = i1$, but this follows from the assumption of the lemma that $\sem{e}_\eta = b$ and the condition $b = i1$ of the rule \eqref{eq:c-test}.
                  
            \item
                  \eqref{eq:c-store,eq:s-store}
                  
                  Both the concrete and the symbolic transition perform a memory update. These updates are
                  \begin{align*}
                        \memc' &= \memc\Set{ p^\N + i \mapsto b[i] | i < \lbars{b} }, \\
                        \mems' &= \mems\{ pb \mapsto e_h' \}, \tag{1}\eqlabel{eq:store-1}
                  \end{align*}
                  where $p$ and $b$ are defined as in rule \eqref{eq:c-store}, and $pb$ and $e_h'$ are defined as in rule \eqref{eq:s-store}. 
                  
                  We shall prove that the lemma holds with $\eta' = \eta$. We start by showing that $s'$ is $\eta$-consistent. As the transition only updates the memory, we only need to check that $\sem{e_h'}_\eta \neq \bot$ and $\lbars{\sem{e_h'}_\eta} \leq \sem{\allocs(pb)}_\eta^\N$. Let $e_h$, $e_s$, $e_{lh}$, $e_l$, $e_o$, $e$, and $pb$ be defined as in \eqref{eq:s-store}. For $e_x \in \{e_h, e_s, e_{lh}, e_l, e_o\}$ let $b_x = \sem{e_x}_\eta$ and let $b_{pb} = \sem{pb}_\eta$. 
                  
                  By initial state correspondence $\sem{e}_\eta = b$. The rule \eqref{eq:c-store} assumes $\range{p^\N}{\lbars{b}} \subseteq \allocc$, which implies $\lbars{b} < 2^N$. Using the soundness of $\getLen$ and the definition of $b_l$ we obtain 
                  \[b_l^\N = \sem{\getLen(e)}_\eta^\N = \lbars{\sem{e}_\eta} = \lbars{b}.\] 
                  
                  By initial state correspondence 
                  \begin{align*}
                        b_h & = \memc\left(\range{b_{pb}^\N}{\lbars{b_h}}\right) = \memc\left(\range{b_{pb}^\N}{b_{lh}^\N}\right),
                  \end{align*}
                  where the second equality follows by soundness of $\getLen$ and the fact 
                  \begin{equation*}
                        b_{pb}^\N + \lbars{b_h} < 2^N\tag{2}\eqlabel{eq:store-2}    
                  \end{equation*}
                  established by the first equality. 
                  
                  We shall distinguish between two cases in the premise of the rule \eqref{eq:s-store}. The first case is
                  \begin{align*}
                        & \facts \entails (e_o +\nop e_l < e_{lh}), \quad e_h' = \simplify_\Sigma(e_h''), \, \text{where} \\
                        & e_h''= e_h\{0, e_o\} \concat e \concat e_h\{e_o +\nop e_l, e_{lh} -\nop (e_o +\nop e_l)\}.
                  \end{align*}
                  In this case the same argument as for the rule \eqref{eq:s-load} yields
                  \begin{equation*}
                        b_o^\N + b_l^\N < \lbars{b_h} = b_{lh}^\N.\tag{A3}\eqlabel{eq:store-A3}
                  \end{equation*}                  
                  
                  Substituting the value of $b_h$ and expanding the definition of the function $\sub$ under consideration of \eqref{eq:store-A3} we obtain 
                  \begin{equation*}
                  \begin{aligned}
                        \sem{e_h''}_\eta & = \memc\left.\left(\range{b_{pb}^\N}{b_{o}^\N}\right) \right\concat b \\
                               & \quad \left\concat \memc\left(\range{b_{pb}^\N + b_o^\N + b_l^\N}{b_{lh}^\N - b_o^\N - b_l^\N}\right)\right. .
                  \end{aligned}
                  \tag{A4}\eqlabel{eq:store-A4}
                  \end{equation*}
                  Applying the soundness of the function $\simplify$ we get $\sem{e_h'}_\eta = \sem{e_h''}_\eta \neq \bot$. From $b_l^\N = \lbars{b}$ follows $\lbars{\sem{e_h'}_\eta} = \lbars{\sem{e_h}_\eta}$. By initial state correspondence $\lbars{\sem{e_h}_\eta} \leq \sem{\allocs(pb)}_\eta^\N$. This proves $\eta$-consistency of $s'$ in the first case.
                  
                  The second case in the premise of the rule \eqref{eq:s-store} is 
                  \begin{align*}
                        & \facts \entails (e_o +\nop e_l \geq e_{lh}) \wedge (e_o \leq e_{lh}) \wedge (e_o +\nop e_l \leq e_s), \\
                        & e_h' = \simplify_\Sigma(e_h\{0, e_o\} \concat e).
                  \end{align*}
                  Together with $\eta$-consistency of $s$ this implies the following condition on bitstrings:
                  \begin{equation*}
                        (b_o^\N + b_l^\N \geq b_{lh}^\N) \wedge (b_o^\N \leq b_{lh}^\N) \wedge (b_o^\N + b_l^\N \leq b_s^\N).\tag{B3}\eqlabel{eq:store-B3}
                  \end{equation*}                  
                  This allows us to expand the definition of $\sub$ and apply soundness of $\simplify$ to obtain
                  \begin{equation*}
                        \sem{e_h'}_\eta = \memc\left.\left(\range{b_{pb}^\N}{b_{o}^\N}\right) \right\concat b \neq \bot, \tag{B4}\eqlabel{eq:store-B4}
                  \end{equation*}
                  Using \eqref{eq:store-B3}
                  \[\lbars{\sem{e_h'}_\eta } = b_{o}^\N + \lbars{b} = b_{o}^\N + b_l^\N \leq b_s^\N = \sem{\allocs(pb)}_\eta^\N,\]
                  which proves $\eta$-consistency of $s'$ in the second case.
                  
                  The next step is to show that $\conc_\eta(s') = c'$. Both in the first and in the second case above $\lbars{\sem{e_h'}_\eta} \geq \lbars{\sem{e_h}_\eta}$ (in the first case they are equal, in the second case it follows from \eqref{eq:store-B3}). Comparing the definition of $\memsc'$ and $\memsc$ and using the relation \eqref{eq:store-1} between $\mems'$ and $\mems$
                  \[\memsc' = \memsc\Set{b_{pb}^\N + i \mapsto (\sem{e_h'}_\eta)[i] | i < \lbars{\sem{e_h'}_\eta}}. \]
                  Substituting the value of $\sem{e_h'}_\eta$ from either \eqref{eq:store-A4} or \eqref{eq:store-B4} and using the assumption $\memsc = \memc$ from the initial state correspondence we can simplify this to
                  \[\memsc' = \memc\Set{b_{pb}^\N + b_o^\N + i \mapsto b[i] | i < \lbars{b} }. \]
                  By initial state correspondence
                  \begin{align*}
                        p &= \sem{\ptr(pb, e_o)}_\eta = \eta(pb) +\bop \sem{e_o}_\eta = b_{pb} +\bop b_o, 
                  \end{align*}
                  It is $b_{pb}^\N + b_o^\N < 2^N$ both in the first and in the second case above: in the first case it follows from \eqref{eq:store-2,eq:store-A3}, in the second case it follows from \eqref{eq:store-2,eq:store-B3}. This implies $p^\N = b_{pb}^\N + b_o^\N$. Thus
                  \[\memsc' = \memc\Set{p^\N + i \mapsto b[i] | i < \lbars{b} } = \memc'.\]
      \end{enumerate}
\end{proof}

We call a valuation $\eta'$ \emph{minimal with a property $\phi$} iff $\eta'$ satisfies $\phi$ and $\eta'_{-x}$ does not satisfy $\phi$ for all $x \in \dom(\eta')$.

\begin{lemma}\label{cvm-to-iml-soundness-2}
      Let $\eta_c$, $\eta_c'$, $\eta$, and $\eta'$ be valuations and $s$ and $s'$ be symbolic states such that $s$ is $\eta$-consistent, $s'$ is $\eta'$-consistent, and there are transitions $(\eta_c,\, \conc_\eta(s)) \smash{\xrightarrow{l}} (\eta_c',\, \conc_{\eta'}(s'))$ and $\smash{s \xrightarrow{\lambda} s'}$ with $\lambda \neq \emptybs$. Assume additionally that $\eta'$ is a minimal extension of $\eta$ with the property above. Then for all $P \in \IML$ the following is a valid IML transition (\cref{fig:iml-semantics}):
      \[\{(\var(\eta), lP)\} \xrightarrow{\smash l} \{(\var(\eta'), P)\}.\]
\end{lemma}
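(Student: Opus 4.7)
The proof proceeds by case analysis on the symbolic rule producing $\lambda \neq \emptybs$. Inspection of \cref{fig:cvm-symex} shows that this is one of \eqref{eq:s-in}, \eqref{eq:s-out}, or \eqref{eq:s-test}; correspondingly the concrete transition is governed by \eqref{eq:c-in}, \eqref{eq:c-out}, or \eqref{eq:c-test}, and in each case I match the step to a single IML rule of \cref{fig:iml-semantics}.

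The driving observation in every case is that, by definition of $\conc_\eta$, the concrete stack of $\conc_\eta(s)$ is obtained by applying $\sem{\cdot}_\eta$ pointwise to the symbolic stack of $s$. Hence whenever a concrete rule extracts a bitstring $b$ from the top of the stack in $\conc_\eta(s)$, we have $b = \sem{e}_\eta$ for the symbolic expression $e$ extracted by the matching symbolic rule from the top of the stack in $s$. This immediately discharges the side conditions of the IML rules. For the output case ($\lambda = \miml{out$(e)$;}$ or $\lambda = \miml{event$(e)$;}$) the concrete label is $l = \mcode{write}\; \sem{e}_\eta$ or $l = \mcode{event}\; \sem{e}_\eta$, and rule \eqref{eq:i-out} or \eqref{eq:i-event} fires on $\lambda P$ with exactly this label and without modifying the environment. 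For the test case ($\lambda = \miml{if $e$ then}$) the concrete rule requires the top-of-stack to equal $i1$, so $\sem{e}_\eta = i1$, and \eqref{eq:i-cond-true} fires on $\lambda P$ with label $\mcode{ctr}\; 1$. For the input case the concrete rule reads or samples a bitstring $b$ with $\lbars{b} = \val(\sem{e_l}_\eta)$ and sets $\eta_c' = \eta_c\{v \mapsto b\}$; depending on $src$, rule \eqref{eq:i-in} or \eqref{eq:i-nonce} fires on $\lambda P$ with the matching concrete label, producing the extension $v \mapsto b$ of the environment.

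The minimality of $\eta'$ is used solely to pin down $\var(\eta')$ so that the right-hand side of the identified IML transition matches. In the output and test cases no new variable binding is forced by $\eta'$-consistency of $s'$, so $\eta' = \eta$ and $\var(\eta') = \var(\eta)$. In the input case, $\eta'$-consistency of $s'$ together with $\conc_{\eta'}(s') = \eta_c'$ forces $\eta'(v) = b$, while minimality excludes any further extension, so $\var(\eta') = \var(\eta)\{v \mapsto b\}$. In every case this coincides with the environment appearing on the right-hand side of the IML rule identified above.

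The main obstacle is not conceptual but bookkeeping: each concrete label must be read off from its concrete rule using only the elementary state-correspondence equalities, without appealing to \cref{cvm-to-iml-soundness-1}, since that lemma guarantees only the existence of \emph{some} matching $\eta'$ and invoking it here would be circular with respect to the minimality clause. Once the rule-by-rule alignment is set up, each sub-case reduces to a direct instantiation of one IML rule.
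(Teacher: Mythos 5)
Your proposal is correct and follows essentially the same route as the paper's proof: a case split over the label-producing rule pairs (\eqref{eq:c-in}/\eqref{eq:s-in}, \eqref{eq:c-out}/\eqref{eq:s-out}, \eqref{eq:c-test}/\eqref{eq:s-test}), using the pointwise stack correspondence from $\conc_\eta$ to discharge each IML rule's side condition and minimality of $\eta'$ to pin down $\var(\eta')$. The paper merely presents the same split at a slightly finer grain (separating \cvm{read} from \cvm{rnd} and \cvm{write} from \cvm{event}), and, like you, does not invoke \cref{cvm-to-iml-soundness-1} inside this proof.
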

 
\begin{proof}
      We prove the lemma by case distinction over all pairs of $l$ and $a$ that can occur. 
      
      \begin{enumerate}
            \item 
                  $l = \mcode{read}\ b$ and $\lambda = \miml{in}(x);$ by rules \eqref{eq:c-in} and \eqref{eq:s-in}.
                  
                  From the correspondence between the symbolic and the concrete transition we obtain $\eta'(x) = b$. Because $\eta'$ was chosen to be minimal $\eta' = \eta\{x \mapsto b\}$, which also implies $\var(\eta') = \var(\eta)\{x \mapsto b\}$. The lemma follows by rule \eqref{eq:i-in}.
                  
            \item
                  $l = \mcode{rnd}\ b$ and $\lambda = (\nu x[e_l]);$ by rules \eqref{eq:c-in} and \eqref{eq:s-in}.
                  
                  The correspondence between the symbolic and the concrete transition implies $\var(\eta') = \var(\eta)\{x \mapsto b\}$. Additionally the correspondence yields $\lbars{b} = \sem{e_l}_{\eta}^\N$, so that the lemma follows by rule \eqref{eq:i-nonce}.
            \item
                  $l = \mcode{write}\ b$ and $\lambda = \miml{out}(e);$ by rules \eqref{eq:c-out} and \eqref{eq:s-out}.
                  
                  From the correspondence between the symbolic and the concrete transition we obtain $\sem{e}_\eta = b$. Additionally $\eta' = \eta$ by minimality of $\eta'$. The lemma follows by rule \eqref{eq:i-out}. 
                  
            \item
                  $l = \mcode{event}\ b$ and $\lambda = \miml{event}(e);$ by rules \eqref{eq:c-out} and \eqref{eq:s-out}.
                  
                  The proof is exactly analogous to the case above, the lemma follows by rule \eqref{eq:i-event}. 
                  
            \item
                  $l = \mcode{ctr}\ 1$ and $\lambda = \miml{if\ $e$ then}$ by rules \eqref{eq:c-test} and \eqref{eq:s-test}.
                  
                  From the correspondence between the symbolic and the concrete transition we obtain $\sem{e}_\eta = i1$. Additionally $\eta' = \eta$ by minimality of $\eta'$. The lemma follows by rule \eqref{eq:i-cond-true}.
      \end{enumerate}

\end{proof}

\begin{lemma}\label{cvm-iml-simulation}
      There exists a fixed polynomial $p$ such that for any $P \in \CVM$ with $\sem{P}_S \neq \bot$ 
      \[\sem{P}_C \lesssim_p \sem{\sem{P}_S}_I.\]
\end{lemma}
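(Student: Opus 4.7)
The plan is to exhibit the simulation relation explicitly and verify the four conditions of \cref{simulation}. The bridge between the two PTS is the symbolic trace itself: since all symbolic rules are deterministic, the program $P$ gives rise to a unique finite trace $s_0 = (\mcode{Init},\, P),\, s_1,\, \ldots,\, s_n$ with labels $\lambda_1, \ldots, \lambda_n$ ending in an empty program (the hypothesis $\sem{P}_S \neq \bot$ guarantees this). Let $\tilde P_i$ denote the IML process obtained by concatenating the non-empty $\lambda_j$ for $j > i$ and terminating with $0$, so that $\tilde P_0 = \sem{P}_S$ and $\tilde P_n = 0$.

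Declare $(\eta_c, c) \lesssim (\eta_I, P')$ iff there exist an index $i \leq n$ and an extension $\eta$ of $\eta_c$ to pointer bases such that $s_i$ is $\eta$-consistent, $\conc_\eta(s_i) = c$, $\var(\eta) = \eta_I$, and $P' = \tilde P_i$. Extend $\lesssim$ to protocol states pointwise over equal domains. The command translation $\tau$ is defined by replaying the symbolic trace: for a command $(h, d)$ addressed to a CVM process whose current symbolic index is $i$, scan forward through the symbolic-only rules (\cvm{Const}, \cvm{Ref}, \cvm{Malloc}, \cvm{Load}, \cvm{Env}, \cvm{Apply}, \cvm{Store}) up to and including the rule matching the CVM instruction being invoked, emitting an IML command on each step that is IML-emitting (\cvm{In}, \cvm{Out}, \cvm{Test}) and on each intermediate IML-structural label (replication, parallel composition, let, the chosen if-branch) coming from the shape of $\sem{P}_S$. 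The data $d$ supplied to CVM is forwarded as the IML input or scheduling choice where appropriate and is dropped for control inputs (e.g.\ the starting address chosen for \cvm{Malloc}, or the fallback bits for \cvm{Load}) that have no IML counterpart.

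Condition (\ref{simulation-initial}) is immediate because $\conc_\eta(\mcode{Init}, P) = (\mcode{Init}, P)$ for any $\eta$ with $\dom(\eta) \subseteq \Var$. For condition (\ref{simulation-step}), given a concrete CVM transition, \cref{cvm-to-iml-soundness-1} extends the valuation so that the next symbolic state concretises to the new CVM state; when the symbolic rule is IML-emitting, \cref{cvm-to-iml-soundness-2} then produces a matching IML transition with the same action, and when it is silent, $\tilde P_i$ stays put and $\tau$ emits the empty sequence. History-independence of $\sem{P}_C$ follows because rule \eqref{eq:c-in} records each input into $\eta_c$, and that of $\sem{\sem{P}_S}_I$ is the hole-free case of \cref{iml-holes}. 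For condition (\ref{simulation-e-efficiency}), computing $\tau(c)$ reduces to replaying a prefix of the symbolic trace of $P$, which is deterministic and whose cost is polynomial in $|P| \leq \lbars{s_I}$ and in $\lbars{c}$. Condition (\ref{simulation-t-efficiency}) is the easy direction: each IML-emitting symbolic step corresponds to exactly one CVM transition, and silent symbolic steps only add to the CVM runtime, yielding $\tilde t \leq O(t)$. \Cref{symex} then follows by combining this lemma with \cref{simsec}, \cref{embedsim}, \cref{iml-holes}, and \cref{mixed-semantics}.

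The main obstacle is the bookkeeping around the minimality clause of \cref{cvm-to-iml-soundness-2}. Every time the symbolic trace fires an \cvm{In} or \cvm{Malloc}, $\eta$ must be extended by exactly one fresh binding (the input variable or the pointer base), and no other rule may add to $\eta$; otherwise the $\var(\eta) = \eta_I$ clause of $\lesssim$ drifts out of sync with the IML environment the attacker is seeing. This is forced by the symbolic rules case by case but has to be maintained as an invariant throughout the simulation. A secondary subtlety is that the polynomial $p$ must be fixed (independent of $P$), which is why the cost of replaying the symbolic trace inside $\tau$ must be charged against $\lbars{s_I}$ rather than left as a freestanding dependence on the program.
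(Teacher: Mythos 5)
Your overall architecture matches the paper's: the relation $\lesssim$ is built from the deterministic symbolic trace via the concretisation function, condition (\ref{simulation-step}) of \cref{simulation} is discharged by iterating \cref{cvm-to-iml-soundness-1} (with minimal extensions) and invoking \cref{cvm-to-iml-soundness-2} on label-emitting steps, and silent steps map to the empty command sequence. However, there is a genuine gap in your treatment of condition (\ref{simulation-t-efficiency}). You dismiss it with ``$\tilde t \leq O(t)$'' by counting transitions, but counting transitions is not the issue: each IML transition of $\sem{\sem{P}_S}_I$ (an \miml{out}, \miml{if}, or \miml{event}) must \emph{evaluate} a symbolic expression that the symbolic execution accumulated, e.g.\ a deeply nested term built from concatenations, ranges, and operation applications that the CVM program computed incrementally through many \cvm{Load}/\cvm{Apply}/\cvm{Store} steps. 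Nothing a priori bounds the cost of evaluating such an expression by the number of IML transitions. The paper's proof spends most of its effort here: it bounds $\tilde t$ by $O(\tilde n_{tr} \cdot (\tilde t_e + \lbars{\sem{P}_S} + \lbars{\tilde \eta}))$ and then proves by induction on expression structure that $\tilde t_e = O(t)\cdot\lbars{e}$, the key point being that every operation occurring in an accumulated expression was also executed by the concrete machine, so its cost can be charged against $t$. Without this argument the resulting bound is not established (and your claimed $O(t)$ is in any case too strong; the paper only gets a polynomial in $t$).

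A secondary wrinkle concerns condition (\ref{simulation-e-efficiency}): you propose computing $\tau(c)$ by ``replaying a prefix of the symbolic trace,'' but the symbolic execution involves entailment checks $\facts \entails \phi$ and calls to $\simplify$, which are not obviously polynomial-time, so this does not directly give the required bound. The paper sidesteps this entirely by observing that the set $I$ of label-emitting positions is determined purely syntactically (an instruction emits a nonempty label iff it is \cvm{In}, \cvm{Out}, or \cvm{Test}), so $\tau(c)$ is computable in time linear in $\lbars{c} + \lbars{P}$ without rerunning the symbolic execution. This is fixable (e.g.\ by hard-coding $I$ into $\tau$), but as written your justification does not establish the fixed polynomial bound.
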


\begin{proof}
      Let $P$ be a CVM program such that $\sem{P}_S \neq \bot$. Let $T = \sem{P}_C$ and $\tilde T = \sem{\sem{P}_S}_I$. We shall show that $T \lesssim_p \tilde T$ for some polynomial $p$ by giving  a relation $\lesssim$ between states of $T$ and $\tilde T$ as well as a translation function $\tau$ that satisfy \cref{simulation}. Let $s_1, \ldots, s_n$ be the symbolic execution trace of $P$ with labels $\lambda_1, \ldots, \lambda_{n - 1}$ and let $\tilde P_i = \lambda_i\ldots \lambda_{n - 1} 0 \in \IML$. This way $\tilde P_1 = \sem{P}_S$ and $\tilde P_n = 0$. Let $\PS_1, \ldots, \PS_m$ be protocol states over $T$ such that $\PS_1 = \{\emptybs \mapsto (\eta_1, (\mcvm{Init}, P))\}$ for some initial environment $\eta_1$ and there is a transition $\PS_i \xrightarrow{(h_i, d_i),\, a_i} \PS_{i + 1}$ with a command $(h_i, d_i)$ and an action $a_i$ for each $i$. As CVM does not perform replication, each protocol state will be of the form $\PS_i = \{h_i \mapsto (\eta_i, c_i)\}$ for some state $c_i$ and valuation $\eta_i$.
      
      No concrete trace of CVM is longer than the symbolic trace (both are bounded by the number of instructions in $P$), so clearly $m \leq n$. By definition the initial symbolic state $s_1 = (\mcvm{Init}, P)$ is $\eta_1$-consistent and $\conc_{\eta_1}(s_1) = c_1$. By setting $\tilde \eta_1 = \eta_1$ and repeatedly applying \cref{cvm-to-iml-soundness-1} we obtain a sequence $\tilde \eta_1, \ldots, \tilde \eta_m$ of valuations such that for each $i$ the valuation $\tilde \eta_i$ is an extension of $\eta_i$, the state $s_i$ is $\tilde \eta_i$-consistent and $\conc_{\tilde \eta_i}(s_i) = c_i$. Additionally we can choose the valuations such that $\tilde \eta_{i + 1}$ is a minimal extension of $\tilde \eta_i$ satisfying the property. For each $i = 1, \ldots, m$ we define a protocol state $\tilde \PS_i$ over $\tilde T$ as $\tilde \PS_i = \{\tilde h_i \mapsto (\var(\tilde \eta_i), \tilde P_i)\}$, where $\tilde h_i$ is obtained from $h_i$ as follows: Let $I \subseteq \{1, \ldots, n - 1\}$ be the set of indices $i$ such that $\tilde P_i \neq \tilde P_{i + 1}$. Given a history $h_i$ of the form $h_i = o_1 1 \ldots o_{i - 1} 1$ (every CVM rule only has one process on the right hand side, so the replication identifier is always $1$) let $\tilde h_i = o_{i_1} 1 \ldots o_{i_k} 1$, where $\{i_1, \ldots, i_k\} = I \cap \{1, \ldots, i - 1\}$. 
      
      Given a protocol state $\PS$ over $T$ and a protocol state $\tilde \PS$ over $\tilde T$ we define $\PS \lesssim \tilde \PS$ iff there exist sequences of states $\PS_1, \ldots, \PS_m$ and $\tilde \PS_1, \ldots, \tilde \PS_m$ as above such that $\PS = \PS_i$ and $\tilde \PS = \tilde \PS_i$ for some $i$. We define the function $\tau$ from commands to sequences of commands as follows:
      \[\tau((h, d)) = \begin{cases}
                        (\tilde h, d), & \text{if $h = o_1 1 \ldots o_{i - 1} 1$, and $i \in I$,} \\
                        \emptybs & \text{otherwise.}
                       \end{cases}\]
      We now show that the relation $\lesssim$ and the function $\tau$ satisfy \cref{simulation}, so that $T \lesssim_p \tilde T$ for some polynomial $p$. The conditions in \cref{simulation} are satisfied as follows:
      \begin{enumerate}
            \item 
                  Any initial valuation $\eta_1$ is not an extended valuation so that $\var(\eta_1) = \eta_1$. By definition
                  \begin{align*}
                        \{\emptybs \mapsto (\eta_1, (\mcvm{Init}, P))\} & \lesssim \{\emptybs \mapsto (\var(\eta_1), \tilde P_1)\} \\
                              & = \{\emptybs \mapsto (\eta_1, \sem{P}_S)\}.      
                  \end{align*}

            \item
                  Let $\PS \lesssim \tilde \PS$ and assume that there exists a transition $\PS \xrightarrow{(h, d), \, a} \PS'$. By definition of the relation $\lesssim$ there exist sequences of states $\PS_1, \ldots, \PS_m$ and $\tilde \PS_1, \ldots, \tilde \PS_m$ as above such that $\PS = \PS_i$, $\PS' = \PS_{i + 1}$ and $\tilde \PS = \tilde \PS_i$ for some $i < m$. It suffices to show that 
                  \begin{equation*}
                        \tilde \PS_i \xrightarrow{\tau((h, d)),\, a}\!\!{}^*\; \tilde \PS_{i + 1}. \tag{*}
                  \end{equation*}
                  If $i \in I$ then (*) follows from \cref{cvm-to-iml-soundness-2}. Let $i \notin I$, that is $\lambda_i = \emptybs$ in the symbolic execution. Inspecting the proof of \cref{cvm-to-iml-soundness-1} we see that then $\var(\tilde \eta_i) = \var(\tilde \eta_{i + 1})$ and so $\tilde \PS_i = \tilde \PS_{i + 1}$. The program performs no action so that $a = \emptybs$ and by definition $\tau((h, d)) = \emptybs$, thus (*) is satisfied.

            \item
                  To compute $\tau$ it is necessary to know $I$, but this can be computed by an inspection of $P$ in linear time: it is $i + 1 \in I$ iff the $i$th instruction in $P$ is one of \cvm{In}, \cvm{Out}, or \cvm{Test}, that is, an instruction that generates a nonempty label $\lambda$ in the symbolic execution. Thus $\tau(c)$ is computable in time linear in $\lbars{c} + \lbars{P}$.
                  
            \item
                  Assume that for some valuation $\eta$ and attackers $E$ and $\tilde E$ the machine $M = \exec_\eta(T, E)$ reaches a state $\PS$ in $t$ steps and the machine $\tilde M = \exec_\eta(\tilde T, \tilde E)$ reaches a state $\tilde \PS$ in $\tilde t$ steps and $\PS \lesssim \tilde \PS$. If $\eta$ is the environment of the process in $\PS$ and $\tilde \eta$ is the environment of the process in $\tilde \PS$ then $\tilde \eta$ is an extension of $\eta$, in fact $\tilde \eta = \eta$, as both environments get updated by rules \eqref{eq:c-in} and \eqref{eq:i-nonce}, \eqref{eq:i-in} in the same way. It is easy to see that
                  \[\tilde t = O(\tilde n_{tr} \cdot (\tilde t_e + \lbars{\sem{P}_S} + \lbars{\tilde \eta})),\]
                  where $\tilde n_{tr}$ is the number of transitions performed by $\tilde M$ and $\tilde t_e$ is the number of steps to evaluate the most expensive IML expression during the execution of $\tilde M$. All of these values can be bounded in terms of $t$ as follows: The IML model $\sem{P}_S$ performs at most the same number of transitions as the PTS program $P$, so that $\tilde n_{tr} \leq n_{tr} \leq t$, where $n_{tr}$ is the number of transitions executed by $M$. By construction of the symbolic execution $\lbars{\sem{P}_S} = O(\lbars{P}) = O(t)$. Furthermore $\lbars{\tilde \eta} = \lbars{\eta} \leq t$. Finally we shall prove by induction that if $\tilde t_e$ is the number of steps to evaluate $\sem{e}_{\tilde \eta} = \sem{e}_{\eta}$ for some expression $e$ then $\tilde t_e = O(t) \cdot \lbars{e}$. Consider the following cases:
                  \begin{itemize}
                        \item 
                              $e = b$ for some $b \in \BS$. In this case $\tilde t_e = \lbars{e}$.
                              
                        \item
                              $e = x$ for some $x \in \Var$. In this case $\sem{e}_{\eta} = \eta(x)$, so that $\tilde t_e \leq t$.
                              
                        \item
                              $e = op(e_1, \ldots, e_n)$ with some $op \in \Ops$. For bitstrings $b_1, \ldots, b_n \in \BS$ let $t_{op}(b_1, \ldots, b_n)$ be the number of steps to evaluate $\opfun_{op}(e_1, \ldots, e_n)$ and let $\tilde t_i$ be the number of steps to evaluate $\sem{e_i}_{\eta}$. Every operation in $e$ is also performed by $M$, thus
                              \begin{align*}
                                    \tilde t_e & = t_{op}(\sem{e_1}_\eta, \ldots, \sem{e_n}_\eta) + \tilde t_1 + \ldots + \tilde t_n \\
                                          &\leq t_{op}(\sem{e_1}_\eta, \ldots, \sem{e_n}_\eta)  + \sum_i \lbars{e_i} \cdot O(t) \\
                                          &\leq O(t) \cdot \left(\sum_i \lbars{e_i} + 1 \right) \leq O(t) \cdot \lbars{e}.
                              \end{align*}

                        \item
                              $e = e_1 \concat e_2$. Let $\tilde t_1$ and $\tilde t_2$ be the number of steps to evaluate $\sem{e_1}_\eta$ and $\sem{e_2}_\eta$ respectively. Then
                              \begin{align*}
                                    \tilde t_e & \leq \tilde t_1 + \tilde t_2 + \lbars{\sem{e_1}_\eta} + \lbars{\sem{e_2}_\eta} \\
                                    & \leq 2 \cdot (\tilde t_1 + \tilde t_2) \leq O(t) \cdot (\lbars{e_1} + \lbars{e_2}) \leq O(t) \cdot \lbars{e}.
                              \end{align*}

                        \item
                              The cases $e = e'\{e_o, e_l\}$ and $e = \len(e')$ are proved analogously to the case $e = e_1 \concat e_2$.
                  \end{itemize}
      \end{enumerate}      
\end{proof}

\begin{restate}{\cref{symex}}
      There exists a fixed polynomial $p$ such that if $P_1, \ldots, P_n$ are CVM processes and for each $i$ $\tilde P_i := \sem{P_i}_S \neq \bot$ then for any IML process $P_E$, any trace property $\rho$, and resource bound $t \in \N$
      \begin{align*}
            &\insec(\sem{P_E[P_1, \ldots, P_n]}_{CI}, \rho,t) \\
            &\qquad \leq \insec(\sem{P_E[\tilde P_1, \ldots, \tilde P_n]}_{I}, \rho, p(t)).
      \end{align*}
\end{restate}

\begin{proof}
      By \cref{cvm-iml-simulation} there exists a polynomial $p_1$ such that $\sem{P_i}_C \lesssim_{p_1} \sem{\tilde P_i}_I$ for each $i$. By \cref{iml-holes} the PTS $\sem{P_E}_I$ is a PTS with holes identifiable in $p_2$-time for some fixed polynomial $p_2$. Applying \cref{mixed-semantics} and \cref{embedsim} we see that there exists a polynomial $p_3$ depending only on $p_1$ and $p_2$ (and thus fixed) such that 
      \begin{align*}
                  \sem{P_E[P_1, \ldots, P_n]}_{CI} = & \sem{P_E}_I[\sem{P_1}_C, \ldots, \sem{P_n}_C] \\
                  \lesssim_{p_3} & \sem{P_E}_I[\sem{\tilde P_1}_I, \ldots, \sem{\tilde P_n}_I] \\
                   = & \sem{P_E[\tilde P_1, \ldots, \tilde P_n]}_{I}.
      \end{align*}
      By \cref{simsec} there exists a polynomial $p_4$ depending only on $p_3$ (and thus fixed) such that \cref{symex} holds with $p = p_4$.
\end{proof}

\section{Verification of IML---Details \status{good, 27.05.2011}}\label{iml-verification-details}

We show how to simplify IML to the applied pi calculus that can be verified using ProVerif. As ProVerif works in the symbolic model, we shall employ a computational soundness result from \cite{CoSP} to justify its use. The result will guarantee that if ProVerif successfully verifies the translated pi calculus process then the process is asymptotically secure in our computational model. We start by illustrating the method on an example and then give a general description.

The main challenge when translating IML to the pi calculus is that IML processes contain bitstring manipulation primitives that are not valid in pi. An example of such a process is shown in \cref{fig:example-NSL-IML}---it is an adapted excerpt from an IML model of the Needham-Shroeder-Lowe protocol implementation used in one of our experiments (the full model is shown in \cref{nsl-example}). The key observation is that the bitstring manipulation expressions in IML are most commonly employed to provide the tupling functionality. In our example the process \iml{A} uses concatenations to construct a computational representation of the pair of $n_A$ and $pk_A$. Similarly, process \iml{B} uses range expressions to extract the second element of the pair. The idea of the translation is thus to enrich $\Ops$ with encoding and parsing operations with meanings given by the bitstring manipulation expressions. This way we hide the direct bitstring manipulation inside new opaque operations. Of course, to obtain a soundness result we need to prove certain properties of the extracted operations to make sure that they correctly implement tupling.

\begin{figure}[t]
\small
\begin{lstlisting}[language = iml, gobble = 6, xleftmargin = 1em]
      A =
        ($\tilde \nu\, n_A$); ($\tilde \nu\, r$); 
        let $m_1 = $ "msg1"$\concat \len(n_A) \concat n_A \concat pk_A$ in
        let $e_1 = encrypt(pk_X, m_1)$ in
        out($e_1$); ...
        
      B = 
        in($e_1$); 
        let $m_1 = decrypt(sk_B, e_1)$ in
        if $m_1\{i4, iN\} +\bop iN +\bop i4 \leq \len(m_1)$ then
        if $m_1\{i0, i4\} = $ "msg1" then
        let $x_1 = m_1\{i4 +\bop iN +\bop m_1\{i4, iN\},$
                       $ \len(m_1) -\bop i4 -\bop iN -\bop m_1\{i4, iN\}\}$ in
        if $x_1 = pk_X$ then ...
\end{lstlisting}
\caption{An excerpt from the IML process for the NSL protocol. An expression $\len(\ldots)$ produces a result of fixed length $iN$.}
\label{fig:example-NSL-IML}
\end{figure}

\begin{figure}[t]
\small
\begin{lstlisting}[language = iml, gobble = 6, xleftmargin = 1em]
      A =
        ($\tilde \nu\, n_A$); ($\tilde \nu\, r$); 
        out($encrypt(pk_X, conc_1(n_A, pk_A))$); ...
        
      B = 
        in($e_1$); 
        let $m_1 = decrypt(sk_B, e_1)$ in
        let $x_1 = parse_2(m_1)$ in
        if $x_1 = pk_X$ then ...        
\end{lstlisting}
\caption{An excerpt from the pi calculus translation for the NSL protocol.}
\label{fig:example-NSL-pi}
\end{figure}

In our example we introduce new operations $conc_1$ and $parse_2$ with implementations given by
\begin{align*}
      & \opfun_{conc_1}(b_1, b_2) = \sem{\miml{"msg1"}\concat \len(b_1) \concat b_1 \concat b_2}, \\
      & \opfun_{parse_2}(b) = \\ 
            &\;\;\text{if $\sem{\neg(b\{i4, iN\} +\bop iN +\bop i4 \leq \len(b))}$ then $\bot$ else} \\
            &\;\;\text{if $\sem{\neg(b\{i0, i4\} = \miml{"msg1"})}$ then $\bot$ else} \\
            &\;\;\llbracket b\{i4 +\bop iN +\bop b\{i4, iN\}, \\
            &\;\;\;\;\;\;\; \len(b) -\bop i4 -\bop iN -\bop b\{i4, iN\}\}\rrbracket.
\end{align*}
In the implementation of the parsing expression we keep all the condition checks that are performed by the IML process before applying the parser. We follow the convention of IML that $i1$ and $i0$ represent truth values of bitstrings. Using the new operations we can simplify our example IML process to the pi calculus process shown in \cref{fig:example-NSL-pi}, removing the if-statements that have been absorbed into the implementation of $parse_2$. 

\begin{figure}
\small
\begin{align*}
      \mathrlap{b \in \BS,\, x \in \Var,\, op \in \Ops}\quad\quad\quad \\
      e \in \PExp & ::= && \hspace{-\bigskipamount}\text{expression} \\
            & x && \text{variable} \\
            & op(e_1, \ldots, e_n) && \text{constructor/destructor} \\
      \imlP,\, \imlQ & ::= && \hspace{-\bigskipamount}\text{process} \\
            & 0 && \text{nil} \\
            & !\imlP && \text{replication} \\
            & \imlP \concat \imlQ && \text{parallel composition} \\
            & \miml{$(\tilde\nu x)$;\ $\imlP$} && \text{randomness}\\
            & \miml{in$(x)$;\ $\imlP$} && \text{input} \\
            & \miml{out$(x)$;\ $\imlP$} && \text{output} \\
            & \miml{event$(b)$;\ $P$} && \text{event} \\
            & \miml{let\ $x = e$ in\ $\imlP$ [else\ $\imlQ$]} && \text{evaluation} \\
\end{align*}
\caption{The syntax of the applied pi calculus.}
\label{fig:pi-syntax}
\end{figure}

The syntax of the applied pi calculus is shown in \cref{fig:pi-syntax}. It is a strict subset of the IML syntax with the following differences:
\begin{itemize}
      \item 
            The bitstring operations are no longer available.
            
      \item
            The only allowed form of the restriction operator is $(\tilde\nu x)$ with the same meaning as described in \cref{iml}. 
            
      \item
            Parameters of events are restricted to be fixed bitstrings. This is a limitation of the result in \cite{CoSP}.
            
      \item
            The conditional expression of IML with truth meanings for bitstrings $i0$ and $i1$ is no longer available. Instead we can use let expressions to conditionally choose based on equality of bitstrings by assuming that there exists an operation $eq \in \Ops$ such that $\opfun_{eq}(b, b) = b$ and $\opfun_{eq}(b, b') = \bot$ for all $b \neq b'$. 

      \item
            The input and output expressions only accept variables as parameters---all computations must be performed in let-expressions.
\end{itemize}

The calculus shown in \cref{fig:pi-syntax} is a restricted version of the pi calculus presented in \cite{CoSP}, as we do not need the full generality used there. Our restrictions are as follows:
\begin{itemize}
      \item 
            There is only one public communication channel.
            
      \item
            We do not make a distinction between variables and names, as they behave identically for the purpose of the computational execution.
            
      \item
            We only allow computations in let-expressions, so that we do not make a distinction between constructors and destructors in the syntax.
\end{itemize}

\begin{figure}
\small
\begin{align*}
      &\sem{x}^k_\eta = \eta(x), \;\text{for $x \in \Var$,}\\
      &\sem{op(e_1, \ldots, e_n)}^k_\eta = \tilde\opfun_{op}(k, \sem{e_1}^k_\eta, \ldots, \sem{e_n}^k_\eta).
\end{align*}
\caption{The evaluation of pi expressions, whereby $\bot$ propagates.}
\label{fig:pi-eval}
\end{figure}

\begin{figure}
\small
\begin{align*}
      &\frac{b \in \BS,\quad \lbars{b} = k, \quad r = \tilde \opfun_{nonce}(k, b)}{(\eta,\, (\tilde \nu x); \imlP) \xrightarrow{\mathtt{rnd}\ b} \{(\eta\{x \mapsto r\},\, \imlP)\}}. \tag{pi-Nonce}\eqlabel{eq:pi-nonce}\\
\end{align*}
\caption{Randomness generation in pi calculus.}
\label{fig:pi-semantics}
\end{figure}

Unlike CVM and IML which execute with regards to a fixed security parameter $k_0$ introduced in \cref{cvm}, the computational semantics of the applied pi calculus is parameterised by a security parameter. In order to achieve that we assume that the operations in $\Ops$ possess a \emph{generalised implementation} $\tilde \opfun$ such that $\tilde\opfun_{op} \colon \N \times \BS^{\arity(op)} \pto \BS$ is the implementation of an operation $op \in \Ops$ that takes the security parameter as the first argument. For a security parameter $k$ and inputs $\underline m$ the value $\tilde\opfun_{op}(k, \underline m)$ should be computable in time polynomial in $k + \lbars{\underline m}$. We require that $\tilde\opfun_{op}(k_0, \cdot) = \opfun_{op}$ for each $op \in \Ops$. 

The semantics of the pi calculus is directly derived from the semantics of IML. Given a pi process $P$ and a security parameter $k$, we define the semantics $\sem{P}^k_\pi$ as follows: The expression evaluation uses $\tilde \opfun$ instead of $\opfun$ as shown in \cref{fig:pi-eval}. The semantics rules are obtained from the IML rules (\cref{fig:iml-semantics}) by substituting all expression evaluations $\sem{e}_\eta$ with $\sem{e}^k_\eta$. The syntactic form $(\tilde \nu x)$ behaves as described in \cref{iml}, but now it is not a syntactic sugar anymore, so we add a new semantic rule shown in \cref{fig:pi-semantics}.

We now give details regarding the translation procedure from IML to pi. In the following we shall assume that the IML processes do not contain else-branches; this is true for the processes produced by the symbolic execution. Removing if-statements from such processes does not reduce the set of traces and thus does not reduce insecurity. We shall therefore divide all if-statements into two groups: the \emph{cryptographic} statements, that are likely to be relevant for the security of the process and should be kept in the translation, and the \emph{auxiliary} statements that can be removed from the process without affecting security. The exact choice does not affect the soundness of the approach, but removing too many if statements might make the resulting pi process insecure, and removing too few may prevent the successful translation from IML to pi. We use the following heuristic: an if-statement is considered to be cryptographic iff it is of the form \iml{if $e_1 = e_2$ then $P$}, where both $e_1$ and $e_2$ are variables or applications of cryptographic operations. 

Given an IML process $P$ we perform on it the following operations:
\begin{itemize}
      \item
            Introduce intermediate let-statements so that all out-statements only contain variables, all cryptographic if-statements are of the form \iml{if $x_1 = x_2$ then $P$} with variables $x_1$ and $x_2$ and every expression in the new let statements is of one of three types:
            \begin{itemize}
                  \item
                        an \emph{encoding expression}, that is, an expression containing only concrete bitstrings, $\len()$, concatenations, arithmetic operations, and variables,
                        
                  \item
                        a \emph{parsing expression}, that is, an expression containing only concrete bitstrings, $\len()$, substring extraction, arithmetic operations, and a single variable,
                        
                  \item
                        a \emph{cryptographic expression}, that is, an expression containing only variables and cryptographic operations.
            \end{itemize}
            
            As an example, the IML processes in \cref{fig:example-NSL-IML} are already written in such a form.
            
      \item
            For each subprocess \iml{$P' = $ (let $y = e$ in $P''$)}, where $e$ is an encoding expression with variables $x_1, \ldots, x_n$, add a new \emph{encoding operation} $c$ of arity $n$ to $\Ops$ with the implementation given by
            \[\opfun_{c}(b_1, \ldots, b_n) = \sem{e[b_1 / x_1, \ldots, b_n / x_n]}.\]
            Now substitute $P'$ by \iml{let $y = c(x_1, \ldots, x_n)$ in $P''$}. 
            
            In order to justify modelling the encoding operations as tuples symbolically, we need to check that their computational implementations fulfil certain conditions. The first condition is:
            \begin{itemize}
                  \item[(C1)]
                        the ranges of the functions $A_c$ introduced above are disjoint.
            \end{itemize}
            Checking the side conditions is described in \cref{parsing-conditions}.
           
      \item
            For each subprocess \iml{$P' = $ (let $y = e$ in $P''$)}, where $e$ is a parsing expression with a variable $x$, add a new \emph{parsing operation} $p$ of arity $1$ to $\Ops$. We need to check that before computing $e$ the process $P$ makes sure that $x$ contains a result of a suitable encoding operation. More specifically, we check that there exists an encoding operation $c$ such that the process rejects any $x$ with the value outside the range of $\opfun_c$ and such that $e$ computes an inverse of $A_c$. Let $e_1, \ldots, e_n$ be expressions such that $P$ contains an auxiliary if-statement of the form \iml{if $e_i$ then $\ldots$} above $P'$ for some $i$. Let $x_1, \ldots, x_m$ be the variables of $P$ with exception of $x$ and let
            \[\phi_p = \exists x_1,\,\ldots,\, x_m \colon e_1 \wedge \ldots \wedge e_n.\]
            This way, whenever $(\eta', P')$ is an executing process in a protocol state reached by $\sem{P}_I$ from some environment $\eta$, we have $\sem{\phi_p}_{\eta'} = i1$. We check the following conditions:
            \begin{itemize}

                  \item[(C2)]
                        there exists an encoding operation $c$ such that for every $b$ not in the range of $\opfun_c$ it is $\sem{\phi_p[b/x]} = i0$. We say that $c$ \emph{matches} $p$,
                        
                  \item[(C3)]
                        the function $f_p \colon b \mapsto \sem{e[b/x]}$ is an $i$th inverse of $\opfun_c$ for some $i$, that is, $f_p(\opfun_c (b_1, \ldots, b_n)) = b_i$ where $n$ is the arity of $c$. 
            \end{itemize}
            \Cref{parsing-conditions} shows how to check the conditions (C1)--(C3) and how a successful check results in a quantifier-free formula $\phi_p'$ with $x$ as the only variable such that $\phi_p$ implies $\phi_p'$ and the condition (C2) is still satisfied with $\phi_p'$. Additionally $\phi_p'$ satisfies
            \begin{itemize}
                  \item[(C4)]
                        for the encoding operation $c$ that matches $p$ and any $b$ in the range of $A_c$ it is $\sem{\phi_p[b/x]} = i1$.
            \end{itemize}
            We define the computational implementation for $p$ as 
            \[A_p(b) = \text{if $\sem{\phi_p'[b/x]}$ then $\sem{e[b/x]}$ else $\bot$}\]
            and substitute $P'$ by \iml{let $y = p(x)$ in $P''$}. 
            
      \item
            Remove all auxiliary if-statements: for every such statement replace \iml{if $e$ then $P'$} by $P'$. Translate all cryptographic if-statements into the form expected by the pi-calculus: replace every occurrence of \iml{if $x_1 = x_2$ then $P$} by \iml{let $\_ = eq(x_1, x_2)$ in $P$}.
\end{itemize}

If the process $P$ does not contain any else-branches and the above procedure yields a valid pi process $\tilde P$ then we say that $P$ \emph{is translatable to $\tilde P$}. A complete example of an IML program and its resulting pi calculus translation for the NSL protocol is shown in \cref{nsl-example}. 

In order to obtain the computational semantics for the translated process, we need to specify the generalised implementations $\tilde \opfun_c$ and $\tilde \opfun_p$ for the newly introduced encoders and parsers. We can assume \emph{any} generalisation of these operations to arbitrary security parameters that satisfies the conditions (C1)--(C4).

Clearly the translation preserves all the action sequences of the original process so the following holds:

\begin{lemma}
      There exists a fixed polynomial $p$ such that for any IML process $P$ translatable to a pi process $\tilde P$
      \[\sem{P}_I \lesssim_p \sem{\tilde P}^{k_0}_\pi.\]
\end{lemma}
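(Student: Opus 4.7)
My plan is to construct a relation $\lesssim$ between protocol states of $\sem{P}_I$ and $\sem{\tilde P}^{k_0}_\pi$ together with a command-translation function $\tau$, verify the four requirements of \cref{simulation}, and conclude. Both $\sem{P}_I$ and $\sem{\tilde P}^{k_0}_\pi$ are history-independent---the latter inherits this from \cref{iml-holes} since pi is a syntactic restriction of IML evaluated at the fixed security parameter $k_0$---so the precondition of the definition is met.

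I would decompose the translation from $P$ to $\tilde P$ into its four stages (let-flattening, encoder introduction, parser introduction, and cleanup of if-statements) and build the simulation stage by stage, composing the relations and the translation functions via transitivity. The relation pairs an IML executing process $(\eta, Q)$ with the pi executing process $(\tilde\eta, \tilde Q)$, where $\tilde Q$ is the syntactic translation of $Q$ and $\tilde\eta$ is $\var(\eta)$ restricted to the variables occurring in $\tilde P$, thereby hiding the intermediate binders introduced by flattening and the auxiliary variables absorbed into parsers. The let-flattening and if-reshaping stages are handled by mapping each IML control step to a bounded sequence of corresponding control steps; correctness relies on the fact that flattening a compound expression into a chain of lets does not change the final binding. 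The encoder stage is a direct match: $\tilde\opfun_c(k_0, \cdot) = \opfun_c$ is by construction the semantics of the encoding expression it replaces, so the pi let-step at parameter $k_0$ produces the same binding as the IML let-step.

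The main obstacle is the parser stage, where a block of auxiliary IML if-statements followed by a substring-extraction let collapses into a single pi let applying the parser $p$. I would map this IML block as a unit via $\tau$. The key correctness argument splits on whether the auxiliary checks succeed on the current input $b$: if they do, then by the contrapositive of (C2) combined with (C4) the value $b$ lies in the range of the matching encoder $\opfun_c$, so $\phi_p'[b/x]$ holds and by (C3) the parser returns the same value as the substring extraction; if some check fails, then by (C4) $b$ lies outside that range, so $\phi_p'[b/x]$ is false, the pi let fails, and both processes become stuck with no further observable action---an acceptable match since the original IML process has no else-branch to fire.

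Finally I would verify the efficiency conditions of \cref{simulation}. The function $\tau$ is syntax-directed on the statically fixed processes $P$ and $\tilde P$, so $\tau(c)$ is computable in time polynomial in $\lbars{c} + \lbars{s_I}$. The ratio between numbers of IML and pi steps within each bundled block is bounded by a constant times the size of $P$, giving a fixed polynomial step bound that depends only on the translation shape and not on $\Ops$, $N$, or $k_0$. Together, these checks witness $\sem{P}_I \lesssim_p \sem{\tilde P}^{k_0}_\pi$ for a polynomial $p$ that is independent of $P$, as required.
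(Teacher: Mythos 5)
Your proposal is correct and is essentially the argument the paper intends: the paper dispatches this lemma with the single remark that the translation ``preserves all the action sequences of the original process,'' and your stage-by-stage simulation relation, with $\tau$ mapping each collapsed block of IML steps to its pi counterpart and the (C2)/(C4) argument showing that a successful run of the auxiliary checks forces $\phi_p'[b/x]$ to hold so that the parser let reproduces the substring extraction, is the natural way to make that remark precise. The one inaccuracy is in your failure branch: from the failure of some auxiliary check in the current environment you cannot conclude via (C4) that $b$ lies outside the range of $\opfun_c$ or that $\phi_p'[b/x]$ is false---an auxiliary check may fail for reasons involving variables other than $x$ even when $b$ is a well-formed encoding---so the claim that the pi process is \emph{also} stuck is unjustified. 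This does not damage the proof, because, as you also observe, the simulation of \cref{simulation} is one-directional: once the IML process reaches $0$ after a failed check there are no further transitions to match, the pi side is free to exhibit additional behaviour, and any such extra behaviour only loosens the $\insec$ bound in the sound direction.
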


Applying \cref{simsec} we obtain a statement that links the security of the pi translation to the security of the original IML process:

\begin{restate}{\cref{transsound}}
      There exists a fixed polynomial $p$ such that for any IML process $P$ translatable to a pi process $\tilde P$, any trace property $\rho$ and resource bound $t \in \N$
      \[\insec(\sem{P}_I, \rho, t) \leq \insec(\sem{\tilde P}^{k_0}_\pi, \rho, p(t)).\]
\end{restate}

Now that we have translated IML to pi, we can enumerate the conditions under which the resulting pi process can be soundly verified using ProVerif. For this purpose we shall make use of a computational soundness result from \cite{CoSP}, which places restrictions on the operation set $\Ops$ as well as on the shape of the pi process. More specifically, the computational soundness theorem is proved there for the set of constructors $\bC = \{E/3, ek/1, dk/1, pair/2\}$ and destructors $\bD = \{D/2, isenc/1, isek/1, ekof/1, fst/1, snd/1, eq/2\}$. The result includes soundness for signatures, but we omit them as they have not been used in our experiments so far. For simplicity the result presented here uses only one pairing construct (as in \cite{CoSP}), but it can be easily extended to an arbitrary number of tupling constructors and destructors, to correspond to our encoding and parsing operations introduced during the translation from IML. The symbolic behaviour of the operations is defined by the following equations:
\begin{align*}
      D(dk(t_1), E(ek(t_1), m, t_2)) & = m, \\
      isenc(E(ek(t_1), t_2, t_3)) & = E(ek(t_1), t_2, t_3), \\
      isek(ek(t)) & = ek(t), \\
      ekof(E(ek(t_1), m, t_2)) & = ek(t_1), \\
      fst(pair(x, y)) & = x, \\
      snd(pair(x, y)) & = y, \\
      eq(x, x) & = x.
\end{align*}

Let $\Ops^S = \bC \cup \bD \cup \{nonce\}$. The \emph{soundness conditions} that the implementations $\tilde \opfun_x$ for $x \in \Ops^S$ need to satisfy are as follows:
\begin{enumerate}
      \item 
            There are disjoint and efficiently computable sets of bitstrings representing the types nonces, ciphertexts, encryption keys, decryption keys, and pairs. Let $Nonces_k$ denote the set of all nonces for a security parameter $k$.
            
      \item
            Given $b \in \BS$ with $\lbars{b} = k$ chosen uniformly at random, $\tilde \opfun_{nonce}(k, b)$ returns $r \in Nonces_k$ uniformly at random.
            
      \item
            The functions $\tilde \opfun_E$, $\tilde \opfun_{ek}$, $\tilde \opfun_{dk}$, and $\tilde \opfun_{pair}$ are length-regular---the length of their result depends only on the lengths of their parameters. All $m \in Nonces_k$ have the same length.
            
      \item
            Every image of $\tilde \opfun_E$ is of type ciphertext, every image of $\tilde \opfun_{ek}$ and $\tilde \opfun_{ekof}$ is of type encryption key, every image of $\tilde \opfun_{dk}$ is of type decryption key.
            
      \item
            For all $m_1, m_2 \in \BS$ we have $\tilde \opfun_{fst}(\tilde \opfun_{pair}(m_1, m_2)) = m_1$ and $\tilde \opfun_{snd}(\tilde \opfun_{pair}(m_1, m_2)) = m_2$. Every $m$ of type pair is in the range of $\tilde \opfun_{pair}$. If $m$ is not of type pair, $\tilde \opfun_{fst}(m) = \tilde \opfun_{snd}(m) = \bot$.
            
      \item
            $\tilde \opfun_{ekof}(\tilde \opfun_E(p, x, y)) = p$ for all $p$ of type encryption key, $x \in \BS$, and a nonce $y$. $\tilde \opfun_{ekof}(e) \neq \bot$ for any $e$ of type ciphertext and $\tilde \opfun_{ekof}(e) = \bot$ for any $e$ that is not of type ciphertext. 
            
      \item
            $\tilde \opfun_E(p, m, y) = \bot$ if $p$ is not of type encryption key.
            
      \item
            $\tilde \opfun_D(\tilde \opfun_{dk}(r), m) = \bot$ if $r \in Nonces_k$ and $\tilde \opfun_{ekof}(m) \neq \tilde \opfun_{ek}(r)$.
            
      \item
            $\tilde \opfun_D(\tilde \opfun_{dk}(r), \tilde \opfun_E(\tilde \opfun_{ek}(r), m, r')) = m$ for all $r, r' \in Nonces_k$.
            
      \item
            $\tilde \opfun_{isek}(x) = x$ for any $x$ of type encryption key. $\tilde \opfun_{isek}(x) = \bot$ for any $x$ not of type encryption key.
            
      \item
            $\tilde \opfun_{isenc}(x) = x$ for any $x$ of type ciphertext. $\tilde \opfun_{isenc}(x) = \bot$ for any $x$ not of type ciphertext.

      \item
            We define an encryption scheme $(KeyGen, Enc, Dec)$ as follows: $KeyGen$ picks a random $r$ in $Nonces_k$ and returns $(\tilde \opfun_{ek}(r), \tilde \opfun_{dk}(r))$. $Enc(p, m)$ picks a random $r$ in $Nonces_k$ and returns $\tilde \opfun_E(p, m, r)$. $Dec(k, c)$ returns $\tilde \opfun_D(k, c)$. We require that the defined encryption scheme is IND-CCA secure.
            
      \item
            For all $e$ of type encryption key and $m \in \BS$ the probability that $\tilde \opfun_E(e, m, r) = \tilde \opfun_E(e, m, r')$ for uniformly chosen $r, r' \in Nonces_k$ is negligible.
\end{enumerate}

The conditions on the pairing operations follow from the conditions (C1)--(C4) checked during the translation (length-regularity is fulfilled for any function given by an IML encoding expression), the other conditions (in particular that the encryption is IND-CCA) shall be assumed, because we are treating cryptographic operations as black boxes and not trying to verify them. The condition that all functions have disjoint ranges is quite restrictive and is unlikely to be fulfilled in actual implementations. For this reason in future we would like to use CryptoVerif to verify our models, to bypass the need for complex soundness conditions.

\newcommand{\widebar}{\;\singlebar\;}
\begin{figure}
\small
\begin{align*}
      m,\, n ::= & \; x \widebar pair(m, n)  \\
      e  ::= & \; m \;\widebar isek(e) \widebar isenc(e) \widebar D(x_d, e) \widebar fst(e)  \\
         & \widebar snd(e) \widebar ekof(e) \widebar eq(e, e) \\
      \imlP,\, \imlQ ::= 
            & \; \miml{out$(x)$;\ $\imlP$} \widebar \miml{in$(x)$;\ $\imlP$} \widebar 0 \widebar !\imlP \widebar (\imlP \concat \imlQ) \widebar \miml{$(\tilde\nu x)$;\ $\imlP$} \\
            & \widebar \miml{let\ $x = e$ in\ $\imlP$ [else\ $\imlQ$]} \widebar \miml{event$(b)$;\ $P$} \\
            & \widebar \miml{$(\tilde\nu r)$;\ let\ $x = ek(r)$ in let\ $x_d = dk(r)$ in\ $\imlP$} \\
            & \widebar \miml{$(\tilde\nu r)$;\ let\ $x = E(isek(D_1), D_2, r)$ in\ $\imlP$ [else\ $\imlQ$]} 
\end{align*}
\caption{The syntax of key-safe processes.}
\label{fig:key-safe}
\end{figure}

The soundness result of \cite{CoSP} is proved for a class of the so-called \emph{key-safe} processes. In a nutshell, key-safe processes always
use fresh randomness for encryption and key generation and only use honestly generated (that is, through key generation) decryption keys for decryption. Decryption keys may not be sent around (in particular, this avoids the key-cycle problems). The grammar of key-safe processes is summarised in \cref{fig:key-safe}. We let $x$, $x_d$, $k_s$, and $r$ stand for different sets of variables: general purpose, decryption key, signing key, and randomness variables. 

\begin{lemma}[Computational soundness \cite{CoSP}]
      If a closed key-safe process symbolically satisfies a trace property $\rho$ then it computationally satisfies $\rho$.
\end{lemma}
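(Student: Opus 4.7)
The plan is to establish the lemma via a simulation argument between the computational execution of a key-safe process and its symbolic semantics, structured as a sequence of hybrids that gradually replace concrete cryptographic operations by idealized symbolic ones. The crucial observation is that for a key-safe process every decryption key $\opfun_{dk}(r)$ stays entirely inside the process (the grammar forbids sending it as a message, and it is only ever consumed by $D$-applications). Consequently, all encryptions under an honestly generated $\opfun_{ek}(r)$ can be viewed as oracle queries to an IND-CCA challenger, and no information about the underlying plaintexts leaks to the attacker beyond length.

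First, I would build a simulator that runs in parallel with the computational execution of $P$ and maintains a partial bijection between bitstrings appearing in the execution and symbolic terms over $\Ops^S$: freshly sampled nonces are mapped to fresh symbolic nonce names; every key pair $(\opfun_{ek}(r),\opfun_{dk}(r))$ generated via the allowed key-safe syntactic form is mapped to $(ek(r),dk(r))$; every computational encryption $\opfun_E(\opfun_{ek}(r),m,r')$ is mapped to $E(ek(r),m,r')$; pairs are handled analogously. The grammar of key-safe processes, together with the disjointness and length-regularity conditions on $\Ops^S$, guarantees that this mapping is well-defined and never collides.

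Next, I would pass through a chain of hybrids replacing, one honest key at a time, each real encryption under $\opfun_{ek}(r)$ by an encryption of a zero string of the same length. The reduction to IND-CCA proceeds as follows: embed the challenge public key into one of the honestly generated keys, answer the process's encryption requests under that key by the encryption oracle, and answer the process's decryption requests by the decryption oracle (which is safe because decryption keys are never exposed, so the attacker never gains the ability to perform the forbidden decryption of a challenge ciphertext directly, and condition (8) ensures ciphertexts for the wrong key fail). A standard hybrid over the polynomially many honest keys incurs only a polynomial loss in the IND-CCA advantage, and condition (13) rules out the negligible bad event of two encryptions coinciding by accident. After all hybrids, the joint distribution of bitstrings exchanged with the attacker is statistically close to one determined entirely by the symbolic semantics.

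Finally, if a computational execution against some attacker $E$ with $\lbars{E}\le p(k)$ produced an event trace outside $\rho$ with non-negligible probability, then by the above closeness the idealized execution would also produce a trace outside $\rho$ with non-negligible probability; but the idealized execution corresponds, step by step, to a symbolic execution of $P$, so this symbolic trace witnesses a symbolic violation of $\rho$, contradicting the hypothesis that ProVerif certifies $P$ symbolically secure. The hardest part will be the fidelity of the simulator across adaptive decryption queries: the attacker can forward transformed ciphertexts (e.g.\ pair projections of received messages) back to the protocol, and the simulator must decide whether each such ciphertext is one it has previously recorded (and must be intercepted to match the symbolic semantics) or a genuinely new one (which must be answered by the real decryption oracle). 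The disjointness of the ranges in condition (1), length-regularity in condition (3), and the structural restrictions imposed by key-safety are what make this bookkeeping feasible; the detailed invariant tying the hybrid's internal table to the reachable symbolic knowledge of the attacker is the technical core of the argument and is essentially the contribution of \cite{CoSP}.
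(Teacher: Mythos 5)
The paper gives no proof of this lemma: it is imported verbatim from \cite{CoSP}, and the surrounding text only discusses how to reconcile that paper's notion of computational execution with the one defined here. So there is no in-paper argument to compare yours against line by line. Your sketch does follow the standard blueprint underlying such results --- a simulator maintaining a bitstring-to-term mapping, a hybrid over the honest keys replacing plaintexts by zeros, and a reduction to IND-CCA that exploits the fact that key-safety keeps decryption keys internal --- and that is the right overall shape.

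As a proof, however, it has a genuine gap, which you half-acknowledge in your final sentence. Indistinguishability of the real execution from the idealized (all-zero-plaintext) execution is not by itself enough to conclude that every computational event trace is a possible symbolic trace. The missing step is the ``Dolev--Yao-ness'' of the simulator: one must show that every bitstring the computational attacker injects into the protocol parses, under the simulator's mapping, to a term that the \emph{symbolic} attacker could have derived from its knowledge at that point. Without this, the idealized execution could still exhibit traces that no symbolic execution exhibits (the attacker forging a ciphertext under an honest key, or guessing a nonce), and the concluding contradiction with symbolic security does not go through. That derivability invariant is precisely the technical core of \cite{CoSP}, so deferring it to the citation leaves an outline of the known proof rather than a proof. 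Two smaller points: the closeness obtained from the hybrids is computational indistinguishability, not statistical closeness; and ``no information about the plaintexts leaks beyond length'' is not literally true before the mapping invariant is established, since the protocol's post-decryption behaviour (equality tests, raised events) is observable --- it is the combination of the hybrid argument with the parsing invariant that makes that claim precise.
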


We now proceed to sketching out the proof of \cref{compsound} from \cref{iml-verification}. For a process $P$ let $\Ops_P$ be the set of operations used by $P$ (including the $nonce$ operation). The symbolic semantics and security of pi are defined in \cite{CoSP}. We do not detail the semantics here, as we only need to know that it is exactly the semantics that is used by ProVerif. %

A function $f \colon \N \to \R$ is called \emph{negligible} if for every $c \in \N$ there exists $n_0 \in \N$ such that $f(n) < 1/n^c$ for all $n > n_0$.

\begin{restate}{\cref{compsound}}
      Let $P$ be a pi process such that $\Ops_{P} \subseteq \Ops^S$ and the soundness conditions are satisfied. If $P$ is key-safe and symbolically secure with respect to a trace property $\rho$ then for every polynomial $p$ the following function is negligible in $k$:
      \[\insec(\sem{P}^k_\pi,\, \rho,\, p(k)).\]
\end{restate}

The main issue in the proof is to relate the notion of computational execution in \cite{CoSP} (their definition 18) to our notion of computational execution (\cref{compex}). Both definitions are very similar. In \cite{CoSP} the state of the protocol consists of a single executing process together with valuations for variables in the process. In each step the attacker chooses an execution context to specify which subprocess of the complete process is supposed to perform a reduction. In our definition the attacker interacts with a multiset of processes, selecting the process to be executed by an attached handle. It is easy to see that both definitions of the security game are equivalent.

\subsection{Parsing Conditions}\label{parsing-conditions}

\begin{figure}[t]
\small
\begin{lstlisting}[language = iml, gobble = 6, xleftmargin = 1em]
      A =
        ($\tilde \nu\, n_A$); ($\tilde \nu\, r$); 
        let $m_1 = $ "msg1"$\concat \len(n_A) \concat n_A \concat pk_A$ in
        let $e_1 = encrypt(pk_X, m_1)$ in
        out($e_1$); ...
        
      B = 
        in($e_1$); 
        let $m_1 = decrypt(sk_B, e_1)$ in
        if $\len(pk_X) + \bop iN +\bop i20 +\bop i4 = \len(m_1)$ then
        if $m_1\{i0, i4\} = $ "msg1" then
        if $m_1\{i4, iN\} = i20$ then
        let $x_1 = m_1\{i4 +\bop iN +\bop m_1\{i4, iN\},$
                       $ \len(m_1) -\bop i4 -\bop iN -\bop m_1\{i4, iN\}\}$ in
        if $x_1 = pk_X$ then ...
\end{lstlisting}
\caption{An excerpt from the IML process for the NSL protocol (full version).}
\label{fig:example-NSL-IML-2}
\end{figure}

We show how we check conditions (C1)--(C4) arising during the translation from IML to pi. The checks we perform are by no means complete (we might fail to detect that the conditions actually hold), but they are suitable for the protocols that we encountered so far. We shall use the excerpt from the IML process of the NSL protocol shown in \cref{fig:example-NSL-IML-2} as an example (\cref{fig:example-NSL-IML} contained a slightly simplified version).

For each encoding operation $c$ and parsing operation $p$ let $e_c$ and $e_p$ be the IML expressions that they replace. Let $\phi_p$ represent the set of facts that the IML process establishes before applying $e_p$, as described previously.

To prove (C1) we check that all encoding expressions $e_c$ contain a concrete bitstring (a tag) at the same positions and that all tags are different. In the example of \cref{fig:example-NSL-IML-2} the bitstring \iml{"msg1"} would be such a tag, and we would expect other messages to contain tags like \iml{"msg2"}, \iml{"msg3"}, etc.

To prove (C3) for an encoder $c$ and a parser $p$ we check that $\simplify_{\facts_{op}}(e_p[e_c/x]) = x_i,$
where $x$ is the variable of $e_p$ and $x_i$ is one of the variables of $e_c$. As an example, for the operations $conc_1$ and $parse_2$ introduced at the beginning of \cref{iml-verification-details},
\begin{align*}
      e_{conc_1} &= \miml{"msg1"}\concat \len(x_1) \concat x_1 \concat x_2, \\
      e_{parse_2} & = x\{i4 +\bop iN +\bop x\{i4, iN\}, \\
                  & \quad \quad \len(x) -\bop i4 -\bop iN -\bop x\{i4, iN\}.
\end{align*}
Substituting $e_{conc_1}$ for $x$ in $e_{parse_2}$ we obtain an expression that simplifies to $x_2$, thus we know that $e_{parse_2}$ computes the second inverse of $e_{conc_1}$.

Given a parser $p$ and a candidate encoder $c$, we check whether $c$ matches $p$ (C2) as follows: first check that $e_c$ is a concatenation of expressions, each of which is either a variable (a concatenation parameter), a length of a variable, or a constant expression. Formally $e_c$ is required to be of a form $e_1 \concat \ldots \concat e_n$, where $\{1, \ldots, n\} = I_x \cup I_l \cup I_t$ such that for all $i \in I_x$ it is $e_i = x_i$ for some variable $x_i$, for all $i \in I_l$ it is $e_i = \len(x_j)$ for some $j \in I_x$ and for all $i \in I_t$ it is $e_i = b_i$ for some constant bitstring $b_i$. We require that all variables and length expressions are distinct (no variable repeats twice) and that $\lbars{I_x} = \lbars{I_l} + 1$, that is, the expression $e_c$ contains lengths for all parameters except one---the missing length can then be derived from knowing the total length of the concatenation. 

Given a bitstring $b$, in order to check that $b$ is in the range of $A_c$, it is sufficient to check all the constant (tag) fields and to check that the sum of the length fields is consistent with the actual length of $b$. The following makes this precise.

Given a parsing expression $p_i$, we say that $p_i$ \emph{extracts the $i$th field from $e_c$} if the following holds: for an expression $e$ let $e_c[e / e_i]$ be the expression obtained from $e_c$ by substituting $e_i$ with $e$. Then for a fresh variable $x'$
\begin{align*}
      & \simplify_{\Sigma}(p_i[e_c[x'/e_i] / x]) = x', \\
      & \text{where} \; \Sigma = \facts_{op} \cup \{\len(x') = \getLen(e_i)\}.
\end{align*}

\begin{theorem}
      Let $c$ and $p$ be an encoding and a parsing expression such that $e_c$ is of a form $e_1 \concat \ldots \concat e_n$ with $\{1, \ldots, n\} = I_x \cup I_l \cup I_t$ as described above. Assume that for each $i \in I_l \cup I_t$ the formula $\phi_p$ contains a parsing expression $p_i$ as a term, such that $p_i$ extracts the $i$th field from $e_c$. Let
      \begin{align*}
            \phi_{tag} &= \bigwedge_{i \in I_t} p_i = b_i, \\
            \phi_{len} &= \sum_{i \in I_l} p_i + \sum_{i \in I_t \cup I_l} \getLen(e_i) \leq \len(x).
      \end{align*}
      Then a bitstring $b$ is in the range of $A_c$ iff 
      \[\sem{\phi_{tag} \wedge \phi_{len}}_{x \mapsto b} = i1.\]
\end{theorem}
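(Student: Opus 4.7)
The plan is to prove both directions separately. For setup, let $j_0 \in I_x$ be the unique index not recorded by a length field, and for each $i \in I_l$ let $\sigma(i) \in I_x$ be such that $e_i = \len(x_{\sigma(i)})$; by the distinctness assumption on $e_c$, $\sigma$ is a bijection between $I_l$ and $I_x \setminus \{j_0\}$.

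First I would prove the forward direction. Assume $b = A_c(b_{i_1}, \ldots, b_{i_k})$ with $\{i_1, \ldots, i_k\} = I_x$. Unfolding the definition of $A_c$, the bitstring $b$ is the concatenation of the $n$ fields obtained by evaluating $e_1, \ldots, e_n$ under the substitution $x_j \mapsto b_j$. The assumption that $p_i$ extracts the $i$th field from $e_c$, together with the soundness of $\simplify$, implies that evaluating $p_i$ on $b$ returns precisely that $i$th field: the constant $b_i$ for $i \in I_t$, and $\bs(\lbars{b_{\sigma(i)}})$ for $i \in I_l$. This yields $\phi_{tag}$ directly, and a length count gives $\sum_{i \in I_l} p_i + \sum_{i \in I_t \cup I_l} \getLen(e_i) = \len(b) - \lbars{b_{j_0}} \leq \len(b)$, so $\phi_{len}$ also holds.

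For the converse I would proceed constructively. Given $b$ satisfying $\phi_{tag} \wedge \phi_{len}$, I use the parsers to recover candidate variable bitstrings and show that re-encoding them reproduces $b$. For each $i \in I_x \setminus \{j_0\}$, set $b_i$ to be the substring of $b$ at the offset prescribed by the position of $e_i$ in $e_c$, of length given by the corresponding length parser $p_{\sigma^{-1}(i)}(b)$. Set $b_{j_0}$ to be the remaining suffix of $b$. The condition $\phi_{len}$ guarantees these extractions are well-defined: the accumulated offset of every field never exceeds $\len(b)$, and the length of $b_{j_0}$ is non-negative. Evaluating $A_c$ on these bitstrings reproduces $b$ position-by-position: tag fields match by $\phi_{tag}$, length fields match because $p_i$ was defined to read those exact offsets, and variable fields match by construction.

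The hard part will be formalising the claim ``evaluating $p_i$ on $b$ returns the $i$th field of $b$''. The \emph{extracts} assumption is stated via the symbolic simplification $\simplify_\Sigma(p_i[e_c[x'/e_i]/x]) = x'$, and translating it into a statement about concrete substring extraction from an arbitrary $b$ needs care. A clean route is to observe that each $p_i$ is essentially a range expression whose offset is symbolic in the preceding length fields; an induction on $i$ then shows these offsets align with the intended partition of $b$, at which point both directions reduce to routine bookkeeping of lengths together with the definition of $A_c$.
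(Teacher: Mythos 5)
Your plan for the ``if'' direction---recovering the field lengths by evaluating the parsers $p_i$ on $b$, using $\phi_{len}$ to justify that the resulting split of $b$ into $n$ substrings is well-defined, and then checking field-by-field (trivially for $I_x$, by construction for $I_l$, by $\phi_{tag}$ for $I_t$) that re-encoding reproduces $b$---is exactly the construction in the paper's proof sketch, and your remaining direction (range of $A_c$ implies the formula, via the \emph{extracts} property and a length count) is a straightforward addition that the paper's sketch omits entirely. The difficulty you flag, namely converting the symbolic statement $\simplify_\Sigma(p_i[e_c[x'/e_i]/x]) = x'$ into the concrete claim that $p_i$ reads the correct substring of an arbitrary $b$ satisfying $\phi_{len}$, is real and is glossed over in the paper as well, so your proposal is at least as complete as the published argument.
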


\begin{proof}[sketch]
      Let $b \in \BS$ satisfy the premises of the theorem. For each $i \leq n$ we obtain the length $l_i \in \N$ of the $i$th field in $b$ as follows: for each $i \in I_l$ such that $e_i = \len(x_j)$ for some $j \in I_x$ let $l_j = \sem{p_i[b / x]}^\N$. For each $i \in I_l \cup I_t$ let $l_i = \sem{\getLen(e_i)}^\N$. For the single $i \in I_x$ such that $\len(x_i)$ is not one of the fields of $e_c$ let 
      $l_i = \lbars{b} - \sum_{j \neq i} l_j$.
      Knowing the lengths allows us to split $b$ into fields as follows: for each $i \leq n$ let $b_i = b\{\sum_{j = 1}^{i - 1} l_j,\, l_i \}$. This is well-defined according to $\phi_{len}$. Clearly $b = b_1 \concat \ldots \concat b_n$. We show that for each $i$ it is $b_i = \sem{e_i[b_j/x_j \singlebar j \in I_x]}$ as follows.
      \begin{itemize}
            \item 
                  If $i \in I_x$ then $e_i = x_i$ and the equality holds trivially.
                  
            \item
                 If $i \in I_l$ then $e_i = \len(x_j)$ for some $j \in I_x$. By construction $b_i = \bs(l_i) = \bs(\lbars{b_j})$.
                  
            \item
                  If $i \in I_t$ then the equality follows from $\phi_{tag}$.
      \end{itemize}
      Overall we have shown that $b = \sem{e_c[b_j/x_j \singlebar j \in I_x]}$, so that $b$ is in the range of $A_c$.
\end{proof}

Thus checking (C2) reduces to finding appropriate parsers $p_i$ among the terms of $\phi_p$ and checking that $\phi_p \vdash \phi_{tag} \wedge \phi_{len}$. Furthermore, by choosing $\phi_p' = \phi_{tag} \wedge \phi_{len}$, we obtain a quantifier-free formula that satisfies (C2) and (C4), as required by the translation.

As an example, we can show that (C2) holds for $conc_1$ and $parse_2$ with respect to \cref{fig:example-NSL-IML-2} as follows: the conditions checked by the process $B$ contain references to parsing expressions $m_1\{i0, i4\}$ and $m_1\{i4, iN\}$. We check that the first expressions extracts the first field (the tag) from $e_{conc_1}$ and the second expression extracts the second field (the length of the first parameter). We then observe that the conditions checked by $B$ imply
\begin{align*}
      \phi_{tag} & = (m_1\{i0, i4\} = \miml{"msg1"}), \\
      \phi_{len} & = (iN +\bop m_1\{i4, iN\} +\bop i4 \leq \len(m_1)).
\end{align*} 
Thus both the tag and the length consistency are properly checked.

Our implementation currently checks all the conditions automatically except $\phi_p \vdash \phi_{len}$. The reason is that we are planning to use CryptoVerif as a verification backend and expect to be able to relax the parsing conditions there.

\section{NSL Example Code \status{good, 27.05.2011}}\label{nsl-example}

We show all the stages of the verification of the NSL example, discussed in \cref{implementation}

\subsection{Client Source}

The source code of the client is shown below. In our example $N = \mcryptoc{sizeof(size_t)} = 8$ and $k_0$ corresponds to \cryptoc{SIZE_NONCE}, which is set to be $20$.

\lstinputlisting[language=cryptoc, basicstyle = \small]{NSL/client.c}

\subsection{Proxy Functions}\label{nsl-proxies}

We show examples of proxy functions that replace calls to \cryptoc{nonce}, \cryptoc{encrypt}, etc. in the symbolic execution. Each function starts by calling the actual function that it replaces so that the concrete execution can proceed as usual---recall that we observe a run of the program in order to identify the main path. The proxy functions then call the special \emph{symbolic interface} functions to create new symbolic values and place them in memory. These symbolic interface functions are interpreted specially by the symbolic execution and perform the following actions:
\begin{itemize}
      \item 
            \begin{lstlisting}[language=cryptoc, gobble = 18]
                  load_buf(const unsigned char * buf, 
                           size_t len, const char * hint)
            \end{lstlisting}

            Retrieves from memory the expression located at \cryptoc{buf} of length \cryptoc{len} and places it on the stack. The value \cryptoc{hint} is attached to the expression for naming purposes. For instance, the names of variables in the IML model shown in \cref{nsl-iml} are derived from hints.
            
      \item 
            \begin{lstlisting}[language=cryptoc, gobble = 18]
                  store_buf(const unsigned char * buf)
            \end{lstlisting}
      
            Takes an expression from the stack and stores it in the location in memory pointed to by \cryptoc{buf}.
            
      \item 
            \begin{lstlisting}[language=cryptoc, gobble = 18]
                  symL(const char * sym, const char * hint, 
                       size_t len, int deterministic)
            \end{lstlisting}
      
            Applies the operation \cryptoc{sym} to all the expressions on the stack as parameters. Sets the length of the new expression to be equal to \cryptoc{len}. The last parameter can be used to specify that the application is non-deterministic, that is, conceptually it takes an extra random argument, without having to specify that argument explicitly. Calls to this function are also used to model random variable generation. For instance, the symbol $nonce$ created in \cryptoc{nonce_proxy} is treated specially and translates to the $\nu$ operator of IML.

      \item 
            \begin{lstlisting}[language=cryptoc, gobble = 18]
                  symN(const char * sym, const char * hint, 
                       size_t * len, int deterministic)
            \end{lstlisting}
      
            Behaves like \cryptoc{symL}, but instead of assigning a known length to the new expression $e$, keeps its length unrestricted and writes $\len(e)$ into \cryptoc{len}.
\end{itemize}

The proxy functions are trusted to represent the true behaviour of the actual cryptographic operations. For instance, the function \cryptoc{encrypt} is supposed to check the well-formedness of the key (corresponding to the symbolic operation $isek$). The actual cryptographic functions are required to satisfy the conditions listed in \cref{iml-verification-details} for the soundness result to hold. 

\lstinputlisting[language=cryptoc, basicstyle = \small]{NSL/lib_proxy.c}

\subsection{IML Model}\label{nsl-iml}

The IML model extracted from both the client and the server is shown below. The notation $e\langle l\rangle$ is a shorthand for ``$e$ such that $\len(e) = l$''. For instance, \iml{in(c, var1<8>);} means \iml{in(c, var1); if len(var1) = 8 then}. 

The model contains several \iml{castToInt} expressions. These result from the fact that the implementation uses \cryptoc{size_t} as the length type, but the OpenSSL functions that we call use \cryptoc{int}. These type conversions are recorded during the symbolic execution. For now we assume no numeric overflows, as mentioned in \cref{implementation}, so the casts are removed before translating to pi.

\lstinputlisting[language=iml, basicstyle = \small, columns = flexible, keepspaces = true]{NSL/iml.all.out}

\subsection{ProVerif Model}

The ProVerif model resulting from the translation of the IML process is shown below. The processes $A$ and $B$ as well as the symbolic rules for the new encoding and parsing expressions $conc_i$ and $parse_i$ are generated automatically from the source IML process. The rules for encryption and decryption, the query, and the environment process (including $A'$ and $B'$) are specified by hand.

The events are used without parameters---this is a limitation of the result in \cite{CoSP}, but our symbolic execution as well as ProVerif can easily deal with parameterised events. The modelling is similar to \cite{ProVerif,CoSP}. There the client $A'$ executes an event $beginA()$ only if it is supposed to talk to $B$ and $B'$ executes an event $endB()$ only if it supposed to talk to $A$. The event $endB()$ is executed at the end, so conceptually $B'$ needs to execute
\begin{lstlisting}[language = iml]
      if pkX = pkA then B; event endB(). else B.
\end{lstlisting}
Unfortunately, \iml{B; event endB().} does not form a valid process, so we use an equivalent formulation using an event $notA()$ instead---$endB()$ is always executed, but it is counted only if $notA()$ has not been executed.

The meaning of if-statements in pi is different from their meaning in IML. A pi calculus statement \iml{if $e_1 = e_2$ then $P$} corresponds to the IML \iml{let $\_ = eq(e_1, e_2)$ in $P$}.

\lstinputlisting[language=iml, basicstyle = \small, columns = flexible, keepspaces = true]{NSL/pvmodel.out}

\end{FULL}

\end{document}